\documentclass[11pt]{article}
\hoffset -10.6mm
\textwidth 16cm

\usepackage{amsmath,amsfonts,amssymb,amsthm}
\usepackage{graphicx}
%
%

%

%
%

%

%
%
%

%

%
%
%

%
%

%

\newcommand\mvector{\boldsymbol}

\newcommand\vZ{\mvector{Z}}

\newcommand\vgamma{\mvector{\gamma}}

%
%
%
%
\newcommand\field{\mathbb}

\newcommand\R{\field{R}}

\newcommand\C{\field{C}}
\newcommand\Z{\field{Z}}

\newcommand\N{\field{N}}

\renewcommand\Re{\operatorname{Re}}
\renewcommand\Im{\operatorname{Im}}

\newcommand\id{\operatorname{\mathrm{Id}}}
\newcommand\rmd{\mathrm{d}\mspace{1mu}}

\newcommand\rmi{\mathrm{i}\mspace{1mu}}
\newcommand\rme{\mathrm{e}}

\newcommand\pder[2]{\dfrac{\partial #1 }{\partial #2}} 
%
%

%
%
 
%
%

%
%

%
\newcommand\mtext[1]{\quad\text{#1}\quad}

\newtheorem{theorem}{Theorem}
\newtheorem{lemma}[theorem]{Lemma}
\newtheorem{proposition}[theorem]{Proposition}

\newtheorem{conjecture}{Conjecture}
\theoremstyle{definition}

\newtheoremstyle{note}{\topsep}{\topsep}{\slshape}{}{\scshape}{}{ }{}
\theoremstyle{note}

\newtheorem{remark}[theorem]{Remark}

\newtheorem{problem}[theorem]{Problem}
 \numberwithin{equation}{section}
 \numberwithin{theorem}{section}
\newcommand\sech{\operatorname{sech}\,}

\title{The Poisson equations in the nonholonomic Suslov problem: \\
Integrability, meromorphic and hypergeometric solutions}
\author{Yuri N. Fedorov \\
 Department de Matem\'atica Aplicada I, \\
Universitat Politecnica de Catalunya, \\
Barcelona, E-08028 Spain \\
e-mail: Yuri.Fedorov@upc.edu \\
and \\
Andrzej J. Maciejewski \\
Institute of Astronomy, University of Zielona G\'ora, \\
PL-65-246,  Zielona G\'ora, Poland \\
e-mail: maciejka@astro.ia.uz.zgora.pl \\ and \\
Maria Przybylska \\
Toru\'n Centre for Astronomy, Nicolaus Copernicus University \\
PL-87-100, Toru\'n, Poland \\ e-mail: mprzyb@astri.uni.torun.pl}

\begin{document}

\maketitle

\date{\small AMS Subject Classification  70F25; 37J60; 34M35;  70E40 }
\maketitle

\abstract{We consider the problem of integrability of the Poisson equations describing spatial motion of a rigid body in the classical nonholonomic Suslov problem. We obtain necessary conditions for their solutions to be meromorphic and
show that under some further restrictions these conditions are also sufficient. The latter lead to
a family of explicit meromorphic solutions, which
correspond to rather special motions of the body in space. We also give explicit extra polynomial integrals
in this case.

In the more general case (but under one restriction), the Poisson equations are transformed into a generalized
third order hypergeometric equation. A study of its monodromy group allows us also to calculate the ``scattering''
angle: the angle between the axes of limit permanent rotations of the body in space.}

\section{Introduction} In some cases of the rigid body dynamics, in particular, in the problem of motion of a solid
about a fixed point, the Euler equations for the angular velocity vector $\omega\in {\mathbb R}^3$ separate and can
be integrated. Then, given a generic solution $\omega(t)$, to determine the motion of the solid in space it is necessary
to solve the reconstruction problem, that is, to find 3 independent solutions of the linear Poisson equations
\begin{equation} \label{Poisson}
\dot \gamma= \gamma \times \omega(t),
\end{equation}
$\gamma\in {\mathbb R}^3$ being a unit vector fixed in space.

The most known example of solvable Poisson equations gives the Euler top problem, when generic $\omega(t)$
are elliptic (i.e., doubly periodic) functions and one particular solution $\gamma (t)$ is also elliptic, whereas the
other two are quasiperiodic (see, e.g., \cite{Jac1, Whitt}). 

Similar, but formally more complicated solutions $\gamma(t)$ appear in the case of the Zhukovsky--Volterra gyrostat
(see \cite{Volt, Zh}).

A nontrivial integrable generalization of the Euler--Poisson equations was found in \cite{BZ}, where the Euler
equations have the standard form and (\ref{Poisson})
are replaced by the equations
$$
\dot \gamma= \varkappa \, \gamma \times \omega(t),
$$
$\varkappa$ being an arbitrary {\it odd} integer number. It was shown that, like in the Euler top problem, the latter
equations possess an extra algebraic integral, however a complete solution for $\gamma$ is still unknown.

In present paper, following Suslov \cite{Su}, we consider the motion of the rigid body about a fixed point
in presence of constraint $\langle \omega, a\rangle=0$, $a$ being a fixed vector in the body frame.
Let $ {\mathbb I}\, : \,{\mathbb R}^3\mapsto {\mathbb R}^3$ be the symmetric
inertia tensor of the body. The Euler equations for the angular velocity vector $\omega$ separate and
take the following simple form
\begin{equation}
\frac {d}{dt}({\mathbb I} \omega)
={\mathbb I} \omega\times \omega+\lambda a,  \label{3.25}
\end{equation}
where $\times $ denotes the vector product in ${\mathbb R}^3$ and $\lambda$ is
the Lagrange multiplier.
Differentiating the constraint, we find
$\lambda =-\langle  {\mathbb I}\omega\times\omega,
{\mathbb I}^{-1} a\rangle / \langle a,{\mathbb I}^{-1} a\rangle . $
Therefore, (\ref{3.25}) can be represented as
$$
\frac{d}{dt} ({\mathbb I} \omega) = \frac 1{\langle a,{\mathbb I}^{-1} a\rangle }
{\mathbb I}^{-1} a  \times (({\mathbb I} \omega\times\omega)\times a) ,
$$
which, in view of $\langle \omega,a\rangle =0$, is equivalent to
\begin{equation}
\frac{d}{dt} ({\mathbb I}\omega)
= \langle {\mathbb I} \omega,a\rangle \, \omega\times {\mathbb I}^{-1} a.
\label{ep3.26}
\end{equation}
In the sequel, without loss of generality, we assume that $a=(0,0,1)^T$,
which, in view of the constraint, implies $\omega_3\equiv 0$. This simplifies the Poisson equations to the form
\begin{gather}
\label{eq:ps}
\dot\gamma_1=- \omega_2 (t) \gamma_3, \quad \dot\gamma_2= \omega_1(t) \gamma_3,
\quad \dot\gamma_3 = \omega_2 (t)\gamma_1 - \omega_1 (t) \gamma_2\, .
\end{gather}
We also assume that the tensor $\mathbb I$
is disbalanced, i.e., is not diagonal in the chosen frame. (If $a$ is an eigenvector of $\mathbb I$, then all the solutions
of (\ref{ep3.26}) are equilibria.)

It is known that under these assumptions the system (\ref{ep3.26}) restricted to the plane $\omega_3=0$
has a line of equilibria points $I_{13}\omega_1+ I_{23}\omega_2=0$ and that the trajectories
$\omega(t)$ are elliptic arcs that form the heteroclinic connection between the asymptotically unstable and stable
equilibria (see Fig. 1 below). 

Whereas the reduced system (\ref{ep3.26}) is elementary integrable in terms of hyperbolic functions,
it is believed that the corresponding Poisson equations (\ref{eq:ps}) are not,
although we did not find a proof of that in the literature. A study of complex solutions of these equations
was done in \cite{Kozlova}.
A qualitative analysis of the behavior of $\gamma(t)$ in the classical Suslov problem,
as well as in its multi-dimensional generalization was made in \cite{Bl_Z}, whereas
some other interesting generalizations of the problem were studied in \cite{Jo2}.

\paragraph{Contents of the paper.}
In Section 2 we present generic solutions of the Euler equations for the Suslov problem and formulate the problem
of integrability of the system (\ref{ep3.26}), (\ref{eq:ps}) by the Euler--Jacobi theorem.

In Section 3 necessary and sufficient conditions of meromorphicity of solutions are obtained, they both require that
one of the components $I_{13}, I_{23}$ of the inertia tensor must be zero.

Section 4 discusses general properties of solutions of the Poisson equations in the specific case $I_{13}=0$,
which are compared with generic solutions of another famous nonholonomic system, the Chaplygin sleigh. It is also
shown that the meromorphicity conditions on $\mathbb I$ are compartible with the restrictions on the inertia tensor
of a physical rigid body.    

In Section 5 we show that in the case $I_{13}=0$ the Poisson equations can be transformed to
a generalized third order hypergeometric equation.
Using its monodromy group, we solve the classical problem of calculating
the angle between the axes of limit permanent rotations of the body in space.

When the sufficient conditions of meromophicity are satisfied, we observe that the corresponding
hypergeometric series
solutions reduce to products of polynomials and exponents, which are explicitly calculated.

Next, we apply the differential Galois analysis to the Poisson equations and prove that when the parameters
of the problem satisfy the necessary conditions of meromophisity, but not the sufficient ones, these equations
and, therefore, the whole Suslov system, are not solvable in the class of Liouvillian functions.

In Section 6 we present all the meromorphic solutions of the problem and the corresponding extra polynomial
integrals in the explicit form.

In Conclusion some relevant open problems are briefly discussed.

\section{Generic solutions of the Euler and the Poisson equations}
In the general case the components
$I_{13}, I_{12}$ of the inertia tensor ${\mathbb I}$ are not zero, but,
by an appropriate choice of the frame that preserves the constraint one can always make
$ I_{12}=0$. For We also impose the normalisation $\det {\mathbb I}=1$.
Condition that ${\mathbb I}$ is positively defined means that all main minors are greater than zero and we obtain $I_{11}>0$ and $I_{11}I_{22}>0$ that gives $I_{22}>0$.
\medskip

Then the Euler-Poisson equations \eqref{ep3.26}, \eqref{eq:ps} have the form
\begin{equation}
\label{gen12}
\dot\omega_1= I_{22}( I_{13}\omega_1+ I_{23}\omega_2) \omega_2 ,
\quad \dot\omega_2=- I_{11}( I_{13}\omega_1+ I_{23}\omega_2) \omega_1 ,
\end{equation}
and
\begin{equation}
\begin{aligned}
 \dot \gamma_1 &= -\omega_2(t)\gamma_3, \\
\dot \gamma_2 &= \omega_1(t)\gamma_3, \\
\dot \gamma_3 &= \omega_2(t)\gamma_1 -  \omega_1(t)\gamma_2.
\end{aligned}
\label{eq:whole}
\end{equation}
They  have two first integrals, the energy and the trivial geometric one:
\begin{equation}
 F_1=I_{11}\omega_1^2+I_{22}\omega_2^2, \quad F_2=\langle\gamma,\gamma\rangle=\gamma_1^2+\gamma_2^2+\gamma_3^2.
\end{equation}

In this paper the main problem we consider is:
{\it For which values of $I_{i,j}$ the system~ \eqref{gen12}, \eqref{eq:whole} is integrable?}
Here the integrability will be understood in the context of the classical Euler--Jacobi theorem, which relies upon
the existence of an invariant measure and sufficient number of independent integrals.

Although the system~\eqref{gen12}, \eqref{eq:whole} does not have an invariant measure in the strict sense
(the density of the volume form 
tends to $0$ as $t\to\pm\infty$),
we can consider its restriction on the energy level $F_1= E>0$, which consists of two open components.
On each of them we choose time $t$ and $\gamma$ as coordinates and then the restricted equations read
\begin{equation}
\begin{split}
&\dot t=1,\\
 &\dot \gamma_1 = -\omega_2(t)\gamma_3, \\
&\dot \gamma_2 = \omega_1(t)\gamma_3, \\
&\dot \gamma_3 = \omega_2(t)\gamma_1 -  \omega_1(t)\gamma_2.
\end{split}
\label{eq:whole1}
\end{equation}
They have the trivial integral $F_2$, as well as the invariant volume form
\[
 \mu=\mathrm{d}t\wedge\mathrm{d}\gamma_1\wedge\mathrm{d}\gamma_2\wedge\mathrm{d}\gamma_3,
\]
Thus, for the integrability in the Jacobi sense only one additional first integral $F(t,\gamma)$ is required.

The existence of such an integral is closely related to the existence of single-valued solutions of the Poisson
equations. Namely, let $P(t)=(P_1,P_2,P_3)$ a single-valued vector-function satisfying
$\dot P=P\times\omega$. Then the system possesses the single-valued integral
\begin{equation}
 F= \langle P(t),\gamma \rangle,
\label{eq:liny}
\end{equation}
Indeed,
\[
 \dot F=\langle \dot P(t),\gamma \rangle+\langle P(t),\dot\gamma \rangle=
\langle \dot P(t),\gamma \rangle+\langle P(t),\gamma\times\omega \rangle
=\langle \dot P(t)+\omega\times P(t),\gamma\rangle=0\, .
\]
If, moreover, the components of $P(t)$ are single-valued functions of the solutions $\omega_1(t), \omega_2(t)$, then
the system~\eqref{gen12}, \eqref{eq:whole} admits an additional first integral
$$
 F_3(\omega_1,\omega_2, \gamma_1,\gamma_2,\gamma_3) =\langle P(\omega),\gamma \rangle
$$
functionally independent with $F_1$ and $F_2$.

We start with generic solution of the dynamic equations (\ref{gen12}), which have the form
\begin{equation}
\omega_{1}(t) =\frac{a (e^{At} - e^{-At}) +c_{1} }{e^{At}+e^{-At}}, \quad
\omega_{2} (t) =\frac{ b (e^{At}- e^{-At})  +c_{2}}{e^{At}+e^{-At}}, \quad b=-a \dfrac{I_{13}}{I_{23}}
\end{equation}
where
$$
a =
\frac {A I_{23} } { I_{13}^2 I_{22} + I_{23}^2 I_{11}} , \quad
c_{1}
= \pm 2  \frac{ A I_{13} }{ I_{13}^2 I_{22} + I_{23}^2 I_{11}} \sqrt{ \frac{I_{22}}{I_{11}} } ,\quad
c_{2}
= \pm 2  \frac{ A I_{23} }{ I_{13}^2 I_{22} + I_{23}^2 I_{11}} \sqrt{ \frac{I_{11}}{I_{22}} },
$$
$A$ being an arbitrary positive constant related to the energy integral.
Note that for $t\to \pm\infty$ these expressions give points on the
equilibria line $I_{13}\omega_1+ I_{23}\omega_2=0$,  as required.

\begin{remark} \label{A=1}
For each $A$ fixed, the Poisson equations \eqref{eq:whole} and the above functions $\omega(t)$ are invariant
with respect to time rescaling $t \to t/A$, which reduces the solution of (\ref{gen12}) to the form
\begin{gather} \label{eq:o12}
\omega_{1}(t) =\frac{a (e^{t} - e^{-t}) +c_{1} }{e^{t}+e^{-t}}, \quad
\omega_{2} (t) =\frac{ b (e^{t}- e^{-t})  +c_{2}}{e^{t}+e^{-t}}, \quad b=-a \dfrac{I_{13}}{I_{23}}, \\
a = \frac {I_{23} } { I_{13}^2 I_{22} + I_{23}^2 I_{11}} , \quad
c_{1}  = \frac{\pm 2   I_{13} }{ I_{13}^2 I_{22} + I_{23}^2 I_{11}} \sqrt{ \frac{I_{22}}{I_{11}} } ,\quad
c_{2} =\frac{  \pm 2  I_{23} }{ I_{13}^2 I_{22} + I_{23}^2 I_{11}} \sqrt{ \frac{I_{11}}{I_{22}} }.
\label{c's}
\end{gather}
This implies that, without loss of generality, one can study solutions of \eqref{eq:whole} with the coefficients
$\omega_1, \omega_2$ having the form \eqref{eq:o12}, \eqref{c's}. This will be assumed in the sequel.

Note that choosing $\omega(t)$ in the form \eqref{eq:o12} implies that the energy integral is fixed to be
\begin{equation}
 F_1=\frac{ 1 }{ I_{13}^2 I_{22} + I_{23}^2 I_{11}} \, .
\label{eq:enlev}
\end{equation}
\end{remark}

\begin{figure}[h,t]\label{traj_fig}
\begin{center}
\includegraphics[width=.5\textwidth]{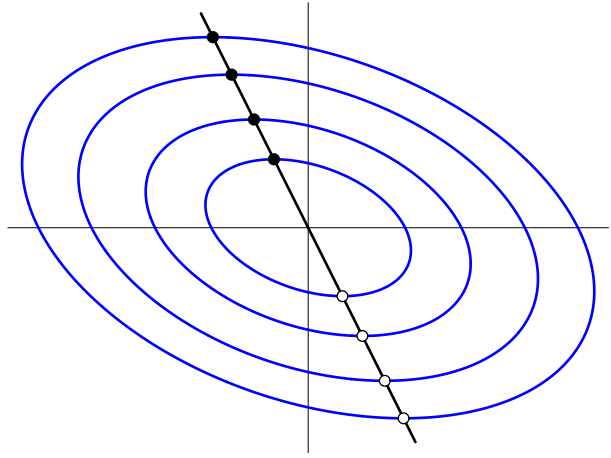}
\caption{\footnotesize Generic trajectory $\omega(t)$ on the plane $(\omega_1, \omega_2)$. Black and
white dots indicate stable and unstable equilibria.}
\end{center} \end{figure}

\section{Painlev\'e property}
In this section  we investigate the  following problem.
\begin{problem}
 For which values of the parameters $I_{ij}$
all solutions of the system~\eqref{eq:whole} are meromorphic in $\C$.
\end{problem}
The answer is contained in the following theorem.
\begin{theorem}
 \label{pro:necc}
All solutions of~\eqref{eq:whole} are meromorphic solutions iff either
\begin{align}
\label{eq:necc1}
I_{13} & =0,\mtext{and} \dfrac{I_{11}-I_{22}}{I_{11}I_{22}(I_{11}I_{22}I_{33}-1)} =p^2, \\
\mtext{or}
I_{23} & =0,\mtext{and}\dfrac{I_{11}-I_{22}}{I_{11}I_{22}(I_{11}I_{22}I_{33}-1)}
=-p^2, \label{eq:necc11}
\end{align}
where $p$ is a nonzero integer.
\end{theorem}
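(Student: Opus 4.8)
The plan is to localize the question at the singularities of the coefficients. Since \eqref{eq:whole} is linear and $\omega_1,\omega_2$ in \eqref{eq:o12} are rational in $e^t$ with denominator $e^t+e^{-t}=2\cosh t$, the only finite singular points are the simple poles $t_k=\mathrm i\pi(k+\tfrac12)$, $k\in{\mathbb Z}$; away from them every solution continues holomorphically. Hence all solutions are meromorphic in $\C$ iff, at each $t_k$, the local monodromy of the resulting Fuchsian system is trivial (no branching and no logarithms). Because $\omega(t+\mathrm i\pi)$ differs from $\omega(t)$ only by a fixed sign change, the residue matrices at all $t_k$ are conjugate, so it suffices to treat one pole.

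First I would compute the residue matrix. Writing the system as $\dot\gamma=M(t)\gamma$, near $t_0=\mathrm i\pi/2$ one has $M(t)=M_0/(t-t_0)+O(1)$ with $M_0\gamma=\gamma\times r$, where $r=(r_1,r_2,0)$ and $r_j=\operatorname{Res}_{t_0}\omega_j$. A direct residue computation gives $r_1=a-\tfrac{\mathrm i}{2}c_1$ and $r_2=b-\tfrac{\mathrm i}{2}c_2$. The characteristic polynomial of $M_0$ factors as $-\lambda(\lambda^2+r_1^2+r_2^2)$, so the local exponents are $0,\pm\mu$ with $\mu^2=-(r_1^2+r_2^2)$.

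For necessity, single-valuedness forces these exponents to be integers, hence $\mu\in{\mathbb Z}$ and $r_1^2+r_2^2=-p^2$ must be a nonpositive real number. Expanding $r_1^2+r_2^2$ with \eqref{c's}, its imaginary part equals $-(ac_1+bc_2)$, which, using $b=-aI_{13}/I_{23}$, is proportional to $I_{13}I_{23}\bigl(\sqrt{I_{22}/I_{11}}-\sqrt{I_{11}/I_{22}}\bigr)$. Requiring it to vanish forces $I_{13}=0$ or $I_{23}=0$ (the remaining possibility $I_{11}=I_{22}$ gives $r_1^2+r_2^2=0$, i.e. $\mu=0$ with $M_0$ a nonzero nilpotent, hence logarithms, and is excluded by $p\neq0$). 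Setting $I_{13}=0$ kills $b$ and $c_1$, and a short computation yields $r_1^2+r_2^2=(I_{22}-I_{11})/(I_{23}^2I_{11}^2I_{22})$; eliminating $I_{23}^2I_{11}$ through the normalization $\det{\mathbb I}=I_{11}(I_{22}I_{33}-I_{23}^2)=1$ turns $r_1^2+r_2^2=-p^2$ into exactly \eqref{eq:necc1}. The case $I_{23}=0$ is symmetric under interchanging the indices $1,2$, which sends $I_{11}-I_{22}\mapsto-(I_{11}-I_{22})$ and thus produces the sign $-p^2$ of \eqref{eq:necc11}.

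The hard part will be the converse: showing that integrality of the exponents already rules out logarithms. Here the exponents $0,\pm p$ differ by integers, so the residue alone does not determine the monodromy and resonant logarithmic terms could in principle arise from the subleading Taylor coefficients of $M(t)$; this is the genuine obstacle. I would exploit the structural fact that the flow preserves $F_2=\langle\gamma,\gamma\rangle$, so the monodromy lies in $O(3,\C)$ and has eigenvalues $1,e^{\pm2\pi\mathrm i\mu}$, all equal to $1$ once $\mu=p\in{\mathbb Z}$. It then remains to show the monodromy is semisimple rather than a nontrivial unipotent (a ``null rotation''). Passing to the associated second order equation, equivalently lifting to the spin cover $\mathrm{SL}(2,\C)$ where the exponents become $\pm p/2$, this is precisely the classical requirement that each $t_k$ be an apparent singularity. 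I expect to verify it in the case $I_{13}=0$, where $\omega_1=a\tanh t$ and $\omega_2=\tfrac{c_2}{2}\sech t$ reduce the Poisson system to the hypergeometric form of Section 5; the log-free (apparent-singularity) condition then becomes an explicit, classically checkable relation on the hypergeometric parameters, and when it holds one exhibits a meromorphic fundamental system directly, completing the ``if'' direction.
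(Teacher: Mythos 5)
Your necessity half is correct, and it takes a genuinely different route from the paper's. The paper obtains \eqref{eq:neccg1}--\eqref{eq:neccg12} by Kovalevskaya--Lyapunov analysis of the full nonlinear five-dimensional system (the balance \eqref{eq:d}, integrality of the Kovalevskaya exponents, semisimplicity of the Kovalevskaya matrix), whereas you read the same conditions off the residue of the linear system at its actual poles: your $\mu^2=-(r_1^2+r_2^2)$ is exactly the square of the eigenvalues \eqref{eq:ro}, which the paper only computes later, inside its sufficiency argument, and your disposal of the degenerate case $I_{11}=I_{22}$ (isotropic residue vector, so $M_0$ is a nonzero nilpotent, non-resonant, hence logarithms) is the counterpart of the paper's non-semisimplicity exclusion. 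The closing algebra, using $\det{\mathbb I}=I_{11}(I_{22}I_{33}-I_{23}^2)=1$ to land on \eqref{eq:necc1} and the swap \eqref{eq:lina} for \eqref{eq:necc11}, checks out. This version is more elementary and stays entirely inside the linear problem.

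The sufficiency half is a genuine gap, and it falls exactly on the crux. What you actually prove is only that the local monodromies are unipotent (eigenvalues $1$ once $\mu=p\in\Z$); the exclusion of logarithms, which is the whole content of the ``if'' direction, is deferred (``I expect to verify it''). Worse, the route you propose cannot deliver the full statement. Lifting to the Riccati/second-order equation is precisely the Darboux reduction the paper performs in its differential Galois subsection, equations \eqref{eq:lin} and \eqref{eq:rlin}; there the images of your points $t_k$ are the singularities $s_1=\rmi$, $s_2=-\rmi$ with exponent differences $\Delta_1=\Delta_2=p$ in \eqref{eq:Deltai}, and the paper's own computations (its Conjecture, checked up to $p=10$) say these points are \emph{logarithmic for every even} $p$ and apparent only for odd $p$. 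Since by \eqref{eq:gi} every norm-one Poisson solution is a ratio of bilinear expressions in two solutions of \eqref{eq:lin} and their derivatives by single-valued functions (the Wronskian equals $\mathrm{const}\cdot C$), the monodromy of \eqref{eq:whole} is the symmetric square of that of \eqref{eq:lin}, so a logarithm upstairs forces nontrivial unipotent monodromy of the Poisson system itself. Hence the apparent-singularity condition you plan to check holds only for odd $p$: your route can complete the ``if'' direction for odd $p$ (where it works, via the truncations \eqref{pol_sol}, \eqref{F_23_pol}), but for even $p$ it leads to the opposite conclusion.

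For comparison, the paper's own sufficiency proof avoids the spin lift: it diagonalizes \eqref{eq:am2}, shears by $U=\mathrm{diag}((t-t_0)^{2p},(t-t_0)^p,1)$ and invokes Theorem~\ref{thm:bar}. But that argument is sound only for $p=1$: the conjugated entries $\tilde A_{12},\tilde A_{23}$ are multiples of $\tanh\bigl((t-t_0)/2\bigr)$, vanishing only to first order, so the sheared system has poles of order $p-1$ and, for even $p$, a nonzero off-diagonal residue; its ``residual part'' is not $-p\,\id$. Indeed, writing $\tilde A=J/(t-t_0)+A_1(t-t_0)+O\bigl((t-t_0)^3\bigr)$ with $J=\mathrm{diag}(p,0,-p)$, for $p=2$ the Frobenius series from the null eigenvector $e_2$ is obstructed at second order, because $2\,\id-J$ annihilates the first component while $(A_1e_2)_1=-p/(4d)\neq0$: a logarithm genuinely appears. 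So your instinct that ruling out logarithms is ``the genuine obstacle'' is accurate, but your proposal does not overcome it, and for even $p$ the evidence (including the paper's own Conjecture and Galois analysis) is that it cannot be overcome.
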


Below we also show that the second case~\eqref{eq:necc11} can be obtain from~\eqref{eq:necc1} by the linear transformation
\begin{equation}
\begin{split}
& \omega_1\rightarrow -\omega_2,\quad \omega_2\rightarrow -\omega_1, \quad \gamma_1\rightarrow\gamma_2,\quad
\gamma_2\rightarrow\gamma_1,\quad \gamma_3\rightarrow\gamma_3,\\
&I_{11}\rightarrow I_{22},\quad I_{22}\rightarrow I_{11},\quad I_{13}\rightarrow I_{23},\quad
I_{23}\rightarrow I_{13},\quad I_{33}\rightarrow I_{33}.
\end{split}
\label{eq:lina}
\end{equation}

Note that the conditions~\eqref{eq:necc1} expressed in terms of the parameters $(a,b,c_1,c_2)$ take the following form
\begin{equation}
 \label{eq:necc2}
c_1=b=0\mtext{and}c_2^2 -4a^2=4p^2,\quad p\in\Z^{\ast}.
\end{equation}

The proof of Theorem \ref{pro:necc} is given in two next subsections.

\subsection{Necessary conditions---Kovalevskaya analysis}
Let us show that if all solutions of the system~\eqref{eq:whole} are single-valued, then one of the
conditions~\eqref{eq:necc1}, \eqref{eq:necc11} is satisfied.

First observe that the right hand sides of the system~\eqref{gen12}, \eqref{eq:whole} are homogeneous polynomials of degree 2 in
the components of $x=(\omega_1,\omega_2,\gamma_1,\gamma_2,\gamma_3)$.
Then we can apply the Kovalevskaya-Lyapunov analysis (see~\cite{Kozlov:96::} for details) to find
that the system admits two solutions of the form
\begin{equation}
 x(t)=\frac{1}{t} {\bf d}, \mtext{where} {\bf d}=(\bar\omega_1, \bar \omega_2,\bar\gamma_1,\bar\gamma_2,\bar\gamma_3) .
\end{equation}
The first of them is given by
\begin{equation}
\label{eq:d}
\begin{split}
 \bar  \omega_1 &=\dfrac{1}{I_{11}I_{23}+\rmi I_{13}\sqrt{I_{11}I_{22}}}, \quad
\bar \omega_2=-\dfrac{1}{I_{13}I_{22}-\rmi I_{23}\sqrt{I_{11}I_{22}}},\\
\bar \gamma_1&= \bar \gamma_2= \bar \gamma_3=0,
\end{split}
\end{equation}
and the second one is its complex conjugate.
According to the Lyapunov theorem, if all solutions of~\eqref{eq:whole} are single-valued, then
\begin{enumerate}
 \item all eigenvalues of the Kovalevskaya matrix
$$
 K(d)=\pder{v}{x}(d)+\id,
$$
are integers, and
\item the Kovalevskaya matrix is semi-simple.
\end{enumerate}
For the solution $d$ given by~\eqref{eq:d} the characteristic polynomial of $K(d)$ is
\begin{equation}
  \det(K(d)-\lambda \id)=\dfrac{(\lambda-2) (\lambda-1 )
(\lambda+1)W(\lambda)}{I_{11} I_{22}(I_{11} I_{22} I_{33}-1)^2},
\end{equation}
where
\[
 W(\lambda)=( \sqrt{I_{11}} I_{23}-\rmi I_{13} \sqrt{I_{22}}) [I_{22} +
   I_{11} (-1 - I_{22} (I_{13} \sqrt{I_{22}} - \rmi \sqrt{I_{11}} I_{23})^2
(\lambda-1)^2)].
\]
The roots of $W(\lambda)$ have the form
\[
 \lambda_{1,2}=1\pm\Lambda,\quad\Lambda=\dfrac{\sqrt{(I_{11} -
I_{22})(\sqrt{I_{11}} I_{23}+\rmi I_{13} \sqrt{I_{22}})^2}\, (\sqrt{I_{11}}
I_{23}-\rmi I_{13} \sqrt{I_{22}})^2}{\sqrt{I_{11}  I_{22}}(I_{11} I_{22}
I_{33}-1)^2}.
\]
If $\lambda_1$ and $\lambda_2$ are integer, then, in particular, they are real, which implies that
\begin{equation}
\label{eq:neccg1}
I_{13}I_{23}(I_{11}-I_{22})=0.
\end{equation}
Moreover, if  $\lambda_1,\lambda_2\in\Z$, then  $\Lambda^2=p^2$, where $p\in\Z^\star$. This gives
\begin{equation}
\label{eq:neccg12}
\dfrac{(I_{11}-I_{22})(I_{11}I_{23}^2-I_{22}I_{13}^2)}{I_{11}I_{22}(I_{11}I_{22}I_{33}-1)^2}=p^2.
\end{equation}
The condition \eqref{eq:neccg1} gives three possibilities. Either  $I_{13}=0$, and this case gives~\eqref{eq:necc1}, or  $I_{23}=0$, which gives~\eqref{eq:necc11}, or, finally,   $I_{11} = I_{22}$.  In the last case, the Kovalevskaya matrix
has the multiple eigenvalue $\lambda=1$ and is not semi-simple.
As a result, we proved the `if' part of Theorem~\ref{pro:necc}.
\begin{remark}
  In terms parameters $(a,b,c_1,c_2)$ conditions~\eqref{eq:neccg1} and \eqref{eq:neccg12} have  the form
\begin{equation}
 \label{eq:neccg2}
ac_1+bc_2=0\mtext{and} c_1^2+c_2^2 -4(a^2+b^2)=4p^2,\quad p\in\Z.
\end{equation}
\end{remark}

\subsection{Sufficient conditions---analysis of the monodromy group}
The general solution~\eqref{eq:o12} of the Euler equations is single-valued.
Thus, all solutions of the Euler-Poisson equations~\eqref{gen12}, \eqref{eq:whole} are single-valued if and only if all the solutions of
the linear Poisson equations~\eqref{eq:whole} with $\omega_1(t)$ and $\omega_2(t)$ as in \eqref{eq:o12} are single-valued.

Note that the only singular points of ~\eqref{eq:whole} are simple poles located at
$t_0=\pm\frac{\pi}{2} \rmi \mod \pi\rmi$.
Then a necessary condition for the single-valuedness is that
the monodromy matrices at all the singular points are identities.

To show this, rewrite the system~\eqref{eq:whole} in the form
\begin{equation}
 \label{eq:at}
\dot \gamma=A(t)\gamma,\qquad
A(t)= \begin{bmatrix}
        0 & 0 &-\omega_2(t)\\
        0& 0 &\phantom{-}\omega_1(t)\\
      \omega_2(t) & -\omega_1(t)& 0
      \end{bmatrix}
\end{equation}
and observe that
\begin{equation}
\label{eq:am1}
 A(t)= \frac{1}{t-t_0} A_{0} +O(t-t_0), \quad
A_{0}=\begin{bmatrix}
        0 & 0 &- b+\frac{1}{2}\rmi c_2 \\[0.5em]
        0& 0 &\phantom{-} a+\frac{1}{2}\rmi c_1  \\[0.5em]
      b-\frac{1}{2}\rmi c_2  &-a+\frac{1}{2}\rmi c_1  & 0
      \end{bmatrix}.
\end{equation}

The monodromy matrix of the canonical loop around $t_0$ is $M_{t_0}:=\exp
2\pi\rmi A_{0}$.   A direct calculation shows that the eigenvalues of $A_{0}$ are
\begin{equation}
 \label{eq:ro}
\rho_1=0 \mtext{and} \rho_{2,3}=\pm\frac{1}{2}\sqrt{c_1^2+c_2^2
-4(a^2+b^2)+4\rmi (ac_1+bc_2)}.
\end{equation}
Hence, if the conditions~\eqref{eq:neccg2} are satisfied, then  $A_{0}$  has eigenvalues
$\rho_1=0$, $\rho_2=p$, and $\rho_3=-p$ with $p\in\Z^\star$. In this case
$M_{t_0}=\id$ provided that there is no local logarithmic solutions in a neighborhood of $t_0$.
In order to check this, we use the following theorem (see, e.g., \cite{Barkatou} for the details).

\begin{theorem}
\label{thm:bar}
Assume that in the linear system
\begin{equation}
 \dot Y=B(t)Y,\qquad B(t)=t^{-1}\sum_{i=0}^{\infty} B_it^i,
\label{eq:linb}
\end{equation}
the matrix coefficient $B_0$ does not have a pair of eigenvalues such that their difference
is a non-zero integer. Then there exists a matrix $T$ given by a convergent power series
\[
 T=\sum_{i=0}^{\infty}T_it^i,\qquad T_0=\id,
\]
such that the linear map  $Y=TZ$ transforms \eqref{eq:linb} into the following form
\begin{equation}
Z'=\frac{1}{t}B_0Z.
\label{eq:lincons}
\end{equation}
\end{theorem}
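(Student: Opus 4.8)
The plan is to construct the gauge transformation $T$ as a formal power series determined by a recursion, to use the non-resonance hypothesis on $B_0$ to show each step of the recursion is uniquely solvable, and finally to prove convergence by a majorant estimate.

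First I would substitute $Y=TZ$ into \eqref{eq:linb} and impose that $Z$ obey \eqref{eq:lincons}. From $\dot Y=\dot T Z+T\dot Z=B(t)TZ$ and $\dot Z=t^{-1}B_0 Z$ one gets the matrix identity $t\dot T+TB_0=\bigl(\sum_{i\ge 0}B_it^i\bigr)T$. Writing $T=\sum_{k\ge 0}T_kt^k$ and matching the coefficient of $t^n$ (using $t\dot T=\sum_{k\ge 0}kT_kt^k$) yields, for $n\ge 1$, the recursion
\begin{equation}
n\,T_n+T_nB_0-B_0T_n=\sum_{j=1}^{n}B_j\,T_{n-j},
\label{eq:Trec}
\end{equation}
together with $T_0=\id$; the $n=0$ relation is just $B_0T_0=T_0B_0$, which holds automatically. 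Since the right-hand side of \eqref{eq:Trec} involves only $T_0,\dots,T_{n-1}$, the relation determines $T_n$ recursively as soon as the linear operator on its left is invertible.

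The left-hand side of \eqref{eq:Trec} is $\mathcal L_n(X):=nX-[B_0,X]$, where $[B_0,X]=B_0X-XB_0$, acting on the space of matrices. The spectrum of the adjoint operator $X\mapsto[B_0,X]$ consists of the differences $\mu_i-\mu_j$ of the eigenvalues $\mu_1,\dots,\mu_m$ of $B_0$ (this standard fact holds even when $B_0$ is not semisimple, as one sees by putting $B_0$ in upper-triangular form). Hence the eigenvalues of $\mathcal L_n$ are $n-(\mu_i-\mu_j)$, so $\mathcal L_n$ is invertible precisely when $\mu_i-\mu_j\neq n$ for all $i,j$. As $n$ runs through the positive integers, this is guaranteed exactly by the hypothesis that no pair of eigenvalues of $B_0$ differs by a nonzero integer. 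Therefore every $T_n$ is uniquely defined and a formal series $T=\sum_k T_kt^k$ with $T_0=\id$ exists.

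The main obstacle, and the only genuinely analytic point, is convergence of this formal series. Since $B(t)$ is given by a convergent series, there are constants $M,r>0$ with $\|B_j\|\le M\,r^{-j}$. Because the eigenvalues of $\mathcal L_n$ are $n-(\mu_i-\mu_j)$, for all $n$ exceeding $\max_{i,j}|\mu_i-\mu_j|$ one obtains a bound of the form $\|\mathcal L_n^{-1}\|\le C/n$ with $C$ fixed. Inserting these estimates into \eqref{eq:Trec} and running the classical method of majorants then gives $\|T_n\|\le K\,\tilde r^{-n}$ for suitable $K,\tilde r>0$, so $T$ converges on a neighborhood of $t=0$. As $T_0=\id$, the matrix $T(t)$ is invertible near the origin, and $Y=TZ$ is the sought transformation carrying \eqref{eq:linb} into \eqref{eq:lincons}. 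The algebraic construction and the role of the non-resonance condition are routine; the convergence estimate is where the technical work lies.
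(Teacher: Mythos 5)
Your argument is correct, but there is nothing in the paper to compare it against: the paper states Theorem \ref{thm:bar} without proof, quoting it as a known auxiliary result from Barkatou's lecture notes (reference \cite{Barkatou}), and immediately applies it to rule out logarithmic terms in the local solutions of the Poisson equations. What you have reconstructed is precisely the classical proof of this statement (essentially the Sauvage--Birkhoff argument given in such sources): derive the conjugacy identity $t\dot T+TB_0=B(t)T$, solve it coefficientwise via the recursion $nT_n-[B_0,T_n]=\sum_{j=1}^{n}B_jT_{n-j}$, observe that the eigenvalues of $X\mapsto nX-[B_0,X]$ are $n-(\mu_i-\mu_j)$ so that the non-resonance hypothesis gives unique solvability for every $n\ge 1$, and then establish convergence by majorants. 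So your route is the standard one, and its algebraic half is airtight.

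One point in your convergence step is stated loosely. For the bound $\|\mathcal{L}_n^{-1}\|\le C/n$ you appeal to the eigenvalues of $\mathcal{L}_n$, but for a non-normal operator the eigenvalues alone do not control the norm of the inverse, and in particular the threshold $\max_{i,j}|\mu_i-\mu_j|$ is not the right one. The repair is immediate: write $\mathcal{L}_n^{-1}=n^{-1}\bigl(\id-n^{-1}\operatorname{ad}_{B_0}\bigr)^{-1}$ and use the Neumann series to get $\|\mathcal{L}_n^{-1}\|\le 2/n$ once $n\ge 2\|\operatorname{ad}_{B_0}\|$, covering the finitely many remaining indices by invertibility (guaranteed by non-resonance) and a common constant. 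With that, the recursion gives $\|T_n\|\le \frac{CM}{n}\sum_{j=1}^{n}r^{-j}\|T_{n-j}\|$, whose solution-with-equality is the coefficient sequence of $(1-t/r)^{-CM}$, analytic for $|t|<r$; hence $T$ converges near $t=0$ and, since $T_0=\id$, is invertible there. With this adjustment your proof is complete.
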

The form of the fundamental matrix $\vZ_{B_0}$ of \eqref{eq:lincons} is
well known. Namely, let $P$ be the similarity transformation reducing $B_0$ into
its Jordan form $J=P^{-1}B_0P$, where $P$ is an invertible matrix with constant
coefficients, and let $\vZ_{J}$ denote the new fundamental matrix. Then
$\vZ_{B_0}=P\vZ_{J}P^{-1}$, and $\vZ_{J}$ has a block-diagonal
structure such that the Jordan block (of dimension $\nu$) corresponding to an eigenvalue $\lambda$ has the form
\[
 t^{\lambda}\begin{bmatrix}
             1&\ln t& \cdots &\frac{\ln^{\nu-1}t}{(\nu-1)!}\\
0&1&\ddots& \vdots \\
\vdots&\ddots&\ddots&\ln t\\
0&\ldots&0&1
            \end{bmatrix} .
\]

Now assume that conditions~\eqref{eq:necc1} or ~\eqref{eq:necc11} are satisfied.
By the existence of the linear transformation \eqref {eq:lina},
without loss of the generality we can assume that conditions~\eqref{eq:necc1} are satisfied.

Expressed in terms of parameters $(a,b,c_1,c_2)$, the latter take the form \eqref{eq:necc2}, or, setting
 $c_2=-2ca$,  $c^2=I_{11}/I_{22}$, the following form
\begin{equation}
\label{eq:cpr}
 (c^2-1)a^2 = p^2.
\end{equation}
We also denote $d^2=c^2-1$.

Now we check the presence of logarithmic terms in the solutions.
\begin{theorem}
 The Poisson equations~\eqref{eq:whole} with $\omega_i$ given by \eqref{eq:o12}
with $c_1=b=0$, $c_2=-2ca=-2\sqrt{d^2+1}p/d$, $p\in\Z$,  do not have
logarithmic terms in local solutions around singular points $t_0$.
\end{theorem}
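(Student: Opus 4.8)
The plan is to establish the absence of logarithmic terms by analyzing the structure of the local solutions around the singular point $t_0$. Since the conditions \eqref{eq:necc1} guarantee that $A_0$ has eigenvalues $\rho_1=0$, $\rho_2=p$, $\rho_3=-p$ with $p\in\Z^\star$, the matrix $A_0$ \emph{does} have a pair of eigenvalues differing by a nonzero integer (namely $p$ and $-p$, differing by $2p$, and also $0$ and $\pm p$). This means Theorem~\ref{thm:bar} does \emph{not} directly apply, and the potential for logarithmic terms is genuine and must be ruled out by explicit local analysis.

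First I would set up the local expansion. Writing $\tau=t-t_0$ and using \eqref{eq:am1}, the system becomes $\dot\gamma=\tau^{-1}A_0\gamma+O(1)$, so I would seek a formal fundamental system of solutions by the Frobenius method. The eigenvalue $\rho=0$ gives a holomorphic solution with no obstruction. The crux is the pair $\rho_2=p$ and $\rho_3=-p$: I would look for solutions of the form $\gamma(\tau)=\tau^{-p}\sum_{k\ge 0}v_k\tau^k$ for the smaller exponent, and check whether logarithmic terms are forced to appear when the recursion reaches the resonance at step $k=2p$ (where $-p+2p=p$ coincides with the larger exponent). The standard obstruction is that the recursion relation at the resonant index may be inconsistent unless a compatibility condition holds; a logarithm appears precisely when this condition fails.

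Next I would exploit the very explicit form of the coefficients in \eqref{eq:o12} under the specialization $c_1=b=0$, $c_2=-2ca$. In this case the matrix $A(t)$ simplifies substantially, and I expect the Poisson system to partially decouple or to admit a first integral that constrains the solution structure near $t_0$. In particular, the geometric integral $F_2=\langle\gamma,\gamma\rangle$ is preserved, which already forbids certain growth behaviors; I would use this, together with the reality/conjugation symmetry of the two Kovalevskaya solutions, to argue that the resonance compatibility condition is automatically satisfied. An efficient route is to transform the third-order Poisson system into the scalar generalized hypergeometric equation promised in Section~5, and then read off the local exponents and the existence of a full system of power-series (non-logarithmic) solutions directly from the indicial structure and the explicit parameters \eqref{eq:cpr}.

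The main obstacle I anticipate is the resonance at index $2p$: since $p$ is an arbitrary nonzero integer, I must verify the vanishing of the obstruction coefficient for \emph{all} such $p$, not just a fixed small value. This requires either a closed-form evaluation of the relevant entry of the recursion (which the special structure $c_1=b=0$ should make tractable), or an invariant-theoretic argument showing the logarithm cannot appear because its presence would contradict single-valuedness already established for the leading Kovalevskaya solutions. I would therefore carry out the computation symbolically for the resonant step, factor the obstruction, and show it is proportional to a quantity that vanishes identically under the constraint $(c^2-1)a^2=p^2$ together with $d^2=c^2-1$; confirming this identity is the technical heart of the proof.
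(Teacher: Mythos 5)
Your diagnosis of the difficulty is exactly right: the residue matrix $A_0$ has eigenvalues $0,\pm p$ with nonzero integer differences, so Theorem~\ref{thm:bar} cannot be invoked directly and logarithmic terms are a genuine threat. But your proposal stops precisely where the proof has to begin. Everything you offer to close the resonance question is either deferred (``carry out the computation symbolically for the resonant step \dots\ confirming this identity is the technical heart of the proof'') or insufficient: the obstruction sits at resonant index $2p$ (and $p$) with $p$ an \emph{arbitrary} nonzero integer, so no finite symbolic expansion settles it uniformly in $p$ --- compare the paper's own Conjecture in Section~5, where exactly this kind of resonance check for a related second-order equation could only be verified case by case up to $p=10$. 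The invariance of $F_2=\langle\gamma,\gamma\rangle$ does not forbid logarithms at a complex singular point: the monodromy must lie in the complex orthogonal group of this form, which contains nontrivial unipotent elements. Single-valuedness of the Kovalevskaya similarity solutions says nothing about single-valuedness of \emph{all} solutions, which is what is at stake, so that argument would be circular. Finally, passing to the scalar hypergeometric equation does not avoid the problem: at the singular point corresponding to $t_0$ the exponents are $1/2,(1\pm p)/2$, which again have integer differences, and indicial data alone never decides whether logarithms occur.

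The missing idea --- which is the whole content of the paper's proof --- is a shearing gauge transformation that removes the resonances \emph{before} applying Theorem~\ref{thm:bar}. One first diagonalizes $A_0$ by a constant matrix $S$, so the transformed residue is $\operatorname{diag}(p,0,-p)$, and then applies the diagonal meromorphic gauge
\[
\gamma = U\,\Gamma,\qquad U=\operatorname{diag}\left((t-t_0)^{2p},\,(t-t_0)^{p},\,1\right),
\]
which is single-valued because $p\in\Z$, hence neither creates nor destroys logarithms. In the resulting system $\dot\Gamma=B(t)\Gamma$, with $B=U^{-1}\tilde A U-U^{-1}\dot U$, the residue matrix becomes the scalar matrix $B_0=-p\,\id$: all eigenvalue differences are now zero, Theorem~\ref{thm:bar} applies, and since every Jordan block of $B_0$ is one-dimensional the local fundamental matrix is a convergent power series times $t^{-p}$, with no logarithmic terms. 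This one-step shift of exponents replaces the infinite family of resonance compatibility conditions your outline would have to verify; without it (or an equivalent device), the proposal cannot be completed as stated.
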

\begin{proof}
For $c_1=b=0$, $c_2=-2ca=-2\sqrt{d^2+1}p/d$, the matrix $A_{0}$ reads
\begin{equation}
 \label{eq:am2}
A_{0}=\begin{bmatrix}
        0 & 0 &-\dfrac{\rmi p\sqrt{d^2+1}}{d} \\[0.5em]
        0& 0 &\dfrac{p}{d}  \\[0.5em]
     \dfrac{\rmi p\sqrt{d^2+1}}{d}  &-\dfrac{p}{d}  & 0
      \end{bmatrix}.
\end{equation}
Using the similarity transformation $S$, one obtains its diagonal form: $\tilde
A_0=S^{-1}A_0S$ with
\[
 \tilde A_0=\begin{bmatrix}
             p&0&\phantom{-}0\\
0&0&\phantom{-}0\\
0&0&-p
            \end{bmatrix},\quad
S=\begin{bmatrix}
  -\dfrac{\rmi \sqrt{d^2+1}}{d}&-\dfrac{\rmi}{\sqrt{d^2+1}}&
\dfrac{\rmi \sqrt{d^2+1}}{d}\\
\dfrac{1}{d}&1&-\dfrac{1}{d}\\
1&0&1
  \end{bmatrix}.
\]
Thus we see that all the eigenvalues of $A_0$ are integer and their differences
are also nonzero integers. Under this transformation
the matrix $A$ of the linear Poisson equations takes the form
\[
  \tilde A=\begin{bmatrix}
            \dfrac{\rmi p\sech(t)(d^2+1+\rmi\sinh(t))}{d^2}&\dfrac{\rmi
p(\sech(t)+\rmi\tanh(t))}{2d}&0\\
\dfrac{(d^2+1)p(-\rmi\sech(t)+\tanh(t))}{d^3}&0&\dfrac{
(d^2+1)p(-\rmi\sech(t)+\tanh(t))}{d^3},\\
0&\dfrac{\rmi p(\sech(t)+\rmi\tanh(t))}{2d}&\dfrac{
p[-\rmi(d^2+1)\sech(t)+\tanh(t)]}{d^2}
           \end{bmatrix}
\]
Transformation $\gamma \rightarrow \Gamma$ defined by $\gamma=U\Gamma$ with
\[
 U=\begin{bmatrix}
    \left(t-\dfrac{\rmi \pi}{2}\right)^{2p}&0&0\\
0& \left(t-\dfrac{\rmi \pi}{2}\right)^{p}&0\\
0&0&1
   \end{bmatrix},
\]
gives the equivalent linear system $\dot\Gamma=B(t)\Gamma$ with
\[
 B(t)=U^{-1}AU-U^{-1}\dot U.
\]
This matrix has residual part diagonal
\[
 B_0=\begin{pmatrix}
      -p&\phantom{-}0&\phantom{-}0\\
\phantom{-}0&-p&\phantom{-}0\\
\phantom{-}0&\phantom{-}0&-p
     \end{pmatrix},
\]
which satisfies the assumptions of Theorem~\ref{thm:bar}. Since all the Jordan blocks of $B_0$ are
one-dimensional, the logarithmic terms do not appear in the solutions of the Poisson equations.
\end{proof}

\section{Special case of the inertia tensor. General properties of solutions}
\label{ssec:spec}

Assume now that, apart from ${\mathbb I}_{12}=0$, the first meromorphisity condition in \eqref{eq:necc1} is also
satisfied, that is, ${\mathbb I}_{13}=0$, which imposes an essential restriction on $\mathbb I$.

Then the reparametrized solutions \eqref{eq:o12}, \eqref{c's} are reduced to
\begin{gather}
\label{omegas}
 \omega_1(t) = \frac{a\left(\rme^t-\rme^{-t}\right)}{\rme^t+\rme^{-t}}, \qquad
 \omega_2(t) = -\frac{2ac}{\rme^t+\rme^{-t}}, \\
a= \frac {I_{22} I_{33}-I_{23}^2}{I_{23} } , \quad c= \sqrt{ \frac{I_{11}}{I_{22}} } .  \nonumber
\end{gather}
It is seen that they have equilibria along the line $\omega_2=0$.

Using the parameter $p$ introduced in \eqref{eq:necc1}, and setting $d=p/a$,
 we can rewrite the whole system \eqref{gen12}, \eqref{eq:whole} in the form
\begin{equation}
 \begin{split}
 \dot \omega_1&=\pm\frac{d}{p(d^2+1)}\omega_2^2,\\
\dot \omega_2&=-\dfrac{d}{p}\omega_1\omega_2,\\
 \dot \gamma_1&=-\omega_2\gamma_3,\\
 \dot \gamma_2&=\omega_1\gamma_3,\\
 \dot \gamma_3&=\omega_2\gamma_1-\omega_1\gamma_2,
 \end{split}
\label{eq:sysfin}
\end{equation}
which always has two first integrals
\begin{equation}
 F_1= (d^2+1)\omega_1^2+\omega_2^2, \qquad F_2=\gamma_1^2+\gamma_2^2+\gamma_3^2.
\label{eq:ccalki}
\end{equation}
Note that on the solutions \eqref{omegas} one has $F_1= a^2 c^2= $
As follows from Theorem \ref{pro:necc}, the solution of this system is meromorphic if and only if
$$
p=\sqrt{\dfrac{I_{11}-I_{22}}{I_{11}I_{22}(I_{11}I_{22}I_{33}-1)}} = \frac p d
$$ is a non-zero integer.

Since $d=\sqrt{c^2-1}$ and $p$ can take positive, as well as negative values, we choose
sign $+$ in the first equation in \eqref{eq:sysfin}.

\begin{proposition} For any $p>0$ there exists a ``physical'' rigid body with the inertia tensor
$\mathbb I$ and $I_{13}=0$ such that its eigenvalues satisfy the triangular inequalities and its components satisfy
the meromophisity condition (\ref{eq:necc1}).
\end{proposition}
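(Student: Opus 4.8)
The plan is to treat the statement as a constructive existence problem: I must exhibit, for each positive integer $p$, a genuine inertia tensor $\mathbb{I}$ with $I_{12}=I_{13}=0$ whose three principal-type constraints are simultaneously met, namely (i) positive definiteness with $\det\mathbb{I}=1$, (ii) the triangle inequalities on the eigenvalues (so that $\mathbb{I}$ arises from an actual mass distribution), and (iii) the meromorphicity condition \eqref{eq:necc1}, which under $I_{13}=0$ reduces to
\[
\frac{I_{11}-I_{22}}{I_{11}I_{22}(I_{11}I_{22}I_{33}-1)}=p^{2}.
\]
Since $I_{13}=I_{12}=0$, the tensor has the block form with a free off-diagonal entry $I_{23}$ coupling the second and third axes, so the free parameters are $I_{11},I_{22},I_{33},I_{23}$ subject to the normalisation. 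I would first count degrees of freedom: four parameters, one normalisation and one meromorphicity equation leave a two-parameter family, which gives ample room to also satisfy the open inequality conditions.

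Concretely, I would parametrise so that condition (iii) is solved \emph{identically}. The cleanest route is to fix $I_{11}$ and $I_{22}$ first with $I_{11}>I_{22}>0$ (forcing the numerator $I_{11}-I_{22}>0$, consistent with $p^2>0$), and then solve the meromorphicity relation for the combination $I_{11}I_{22}I_{33}$. Writing $K:=I_{11}I_{22}$, the equation becomes
\[
I_{11}I_{22}I_{33}-1=\frac{I_{11}-I_{22}}{K\,p^{2}},
\]
so $I_{33}$ is determined as an explicit positive expression once $I_{11},I_{22},p$ are chosen; positivity of $I_{33}$ is automatic because the right-hand side exceeds $-1$. This immediately handles (iii) and makes $I_{33}>0$. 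The determinant normalisation $\det\mathbb{I}=I_{11}(I_{22}I_{33}-I_{23}^{2})=1$ then fixes $I_{23}^{2}=I_{22}I_{33}-1/I_{11}$, and I would check this is positive (hence $I_{23}$ real) for the chosen range, which reduces to an inequality among $I_{11},I_{22},p$ that can be arranged by taking $I_{11}$ sufficiently large or $I_{22}$ away from the boundary.

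The remaining and genuinely delicate step is the triangle inequalities. The eigenvalues are \emph{not} the $I_{ii}$, because $I_{23}\neq0$; the tensor is diagonal only in the $11$ direction, while the $(2,3)$ block must be diagonalised. Its eigenvalues are $I_{11}$ together with $\tfrac12\big(I_{22}+I_{33}\pm\sqrt{(I_{22}-I_{33})^{2}+4I_{23}^{2}}\big)$, and the physical triangle inequalities $\lambda_i+\lambda_j\ge\lambda_k$ must hold for these three quantities. I expect this to be the main obstacle, since it mixes all four parameters nonlinearly. My strategy is a limiting/continuity argument: as $p\to\infty$ the meromorphicity relation forces $I_{11}I_{22}I_{33}\to1^{+}$, so with $\det\mathbb{I}=1$ one gets $I_{23}^2\to 0$ and the eigenvalues approach the diagonal values $I_{11},I_{22},I_{33}$; I would therefore first choose a comfortably interior diagonal triple satisfying strict triangle inequalities and the determinant constraint, and then invoke openness of the strict inequalities to conclude that the perturbed eigenvalues (for a small enough off-diagonal entry, equivalently a large enough $p$ after rescaling) still satisfy them.

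To cover \emph{all} $p>0$ rather than only large $p$, I would instead exhibit an explicit one-parameter scaling. Note that the map $\mathbb{I}\mapsto s\,\mathbb{I}$ scales eigenvalues uniformly, preserving the triangle inequalities and scaling $\det$ by $s^{3}$; and rescaling the body frame along the first axis rescales $I_{11}$ and $I_{23}$ controllably. Using these two freedoms I can start from a single well-chosen tensor for each $p$ and adjust it back onto the $\det=1$ surface while keeping the eigenvalue triple strictly inside the triangle-inequality cone. The final write-up would present one explicit family $\mathbb{I}(p)$, verify (i)–(iii) by direct substitution, and note that the verification of the triangle inequalities reduces to checking that the discriminant-corrected eigenvalues stay ordered, which holds on the chosen parameter range by the continuity argument just described.
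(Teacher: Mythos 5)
Your core construction is sound, and it is genuinely different from the paper's. You solve \eqref{eq:necc1} for $I_{33}$ with $I_{11}>I_{22}>0$ fixed and then use $\det\mathbb{I}=1$ to set $I_{23}^{2}=I_{22}I_{33}-1/I_{11}$; this keeps exactly the normalisation under which \eqref{eq:necc1} was derived, and it makes $I_{23}^{2}=(I_{11}-I_{22})/(I_{11}^{2}I_{22}p^{2})$, which is automatically positive whenever $I_{11}>I_{22}$ --- no largeness assumption on $I_{11}$ is needed, contrary to what you suggest. The paper proceeds the other way around: it fixes all three diagonal entries subject to $I_{11}>I_{22}>0$ and $I_{22}+I_{33}>I_{11}$, treats the meromorphicity condition as a quadratic in $I_{23}^{2}$, takes the root with the minus sign so that $I_{22}I_{33}-I_{23}^{2}>0$, and then checks the eigenvalue inequalities; it does not keep track of $\det\mathbb{I}=1$, so your parametrisation is, in that respect, the cleaner one.

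The genuine gap is in your argument for the triangle inequalities for \emph{all} $p>0$. Your continuity argument is anchored at $p\to\infty$ (with $I_{11},I_{22}$ fixed), so by itself it only proves the statement for large $p$, as you concede; but the scaling argument you propose to cover the remaining $p$ does not work. The map $\mathbb{I}\mapsto s\,\mathbb{I}$ does not preserve condition \eqref{eq:necc1}: its left-hand side is not homogeneous in $\mathbb{I}$ (because of the $-1$ in $I_{11}I_{22}I_{33}-1$, which encodes the normalisation), so after rescaling back onto the $\det\mathbb{I}=1$ surface the meromorphicity relation you arranged no longer holds with the same $p$. Likewise, ``rescaling the body frame along the first axis'' is not an orthogonal change of frame, hence not an admissible operation on an inertia tensor; even formally, a diagonal rescaling $\mathrm{diag}(\mu,1,1)$ changes $I_{11}$ and $I_{12},I_{13}$ but leaves $I_{23}$ untouched, so it does not give the control you claim. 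The fix lies inside your own two-parameter family and needs no scaling at all: for \emph{fixed} $p>0$, let $I_{11}\to I_{22}^{+}$. In your construction $I_{23}^{2}=(I_{11}-I_{22})/(I_{11}^{2}I_{22}p^{2})$ and $I_{33}=\frac{1}{I_{11}I_{22}}+\frac{I_{11}-I_{22}}{(I_{11}I_{22})^{2}p^{2}}$, so taking $I_{22}=1$, $I_{11}=1+\epsilon$ makes $\mathbb{I}$ converge to the identity matrix as $\epsilon\to 0^{+}$, while $\det\mathbb{I}=1$ and \eqref{eq:necc1} hold exactly for every $\epsilon>0$; since the limiting eigenvalues $(1,1,1)$ satisfy the strict triangle inequalities, continuity gives them for all small $\epsilon>0$, and this works for every fixed $p>0$.
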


\begin{proof} In view of the structure of the tensor $\mathbb I$, for its eigenvalues to be positive, the diagonal entries $I_{11}, I_{22}, I_{33}$, as well as $I_{22} I_{33}-I_{23}^2$, must be positive.

Assume that $I_{11}> I_{22}>0$ and $I_{33} + I_{22}> I_{11}$. Then, given such $I_{11}, I_{22}, I_{33}$, the condition (\ref{eq:necc1}) represents a quadratic equation with respect to $I_{23}^2$, whose solutions are
$$
I_{23}^2 = \frac {I_{22}}{2 (I_{11}-I_{22})  } \left( p^2+ 2I_{33}(I_{11}-I_{22})
\pm p  \sqrt{p^2+ 4 I_{33}(I_{11}-I_{22}) } \right).
$$
One can check that both solutions are positive, hence the obtained $I_{23}$ are real.
Let us chose the solution with sign $-$ at the square root. Then we get
$$
I_{22} I_{33}-I_{23}^2 = \frac {I_{22}}{2 (I_{11}-I_{22})  } \left(- p^2+
p \sqrt{p^2+ 2I_{33}(I_{11}-I_{22}) } \right).
$$
Since $p$ and $I_{11}-I_{22}$ are positive, we have $I_{22} I_{33}-I_{23}^2>0$, as required.

Next, the eigenvalues of $\mathbb I$ are $I_{11}, J_2, J_3$, where $J_2, J_3$ are the solutions of
$$
y^2- (I_{22}+I_{33}) y + I_{22} I_{33}-I_{23}^2 =0.
$$
The determinant of the latter equation is $D=(I_{22}- I_{33})^2+4 I_{23}^2>0$, hence both roots are positive.

Finding the roots, one can also check that
$I_{11}^2-(J_2-J_3)^2>0$, which gives the triangular inequalities $I_{11}+J_2> J_3$, $I_{11}+ J_3 >J_2$.
The third inequality follows from the assumption $I_{33} + I_{22}> I_{11}$ and the fact that
$J_2+J_3=I_{33} + I_{22}$.
\end{proof}

\paragraph{Steady-state rotations of the body in space. Comparison with the Chaplygin sleigh.}

Solutions of the Poisson equations (\ref{eq:ps}) with the coefficients $\omega(t)$ in \eqref{eq:o12}
describe the evolution
of the rigid body in space, which is an asymptotic evolution from one steady-state rotation to another one.
The equilibria of $\omega(t)$ correspond to the steady-state rotations themselves.

In the particular case $I_{13}=0$, as $t \to \pm\infty$, the motion in space tends to rotations about the axis $(1,0,0)$
fixed in the body with the angular velocities $\mp a$.
 Note that the {\em spatial orientations} of the axes
of the above steady-state rotations are generally {\em not} the same.

 \begin{figure}[h,t]
 \begin{center}
 \includegraphics[width=.5\textwidth]{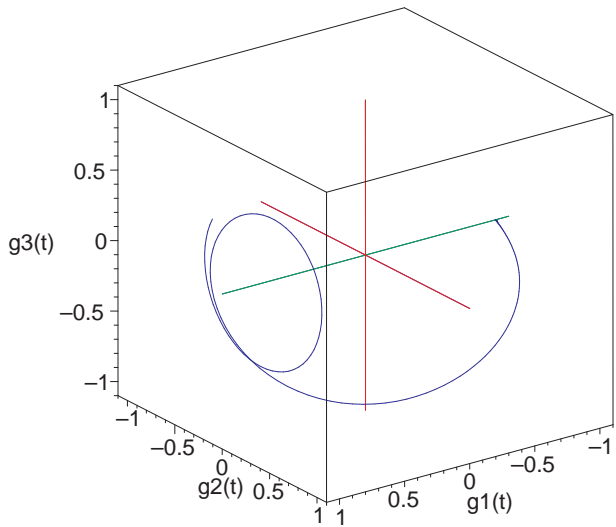}
 \caption{\footnotesize A generic trajectory $\gamma(t)=(g1,g2,g3)^T$ with the
boundary condition (\ref{ss}).}
 \end{center} \label{gen.fig}
 \end{figure}

This can be illustrated by solutions of \eqref{eq:ps} with the initial
(or, rather, boundary) condition
\begin{equation} \label{ss}
\lim_{t\to -\infty} \gamma(t) =(-1,0,0)^T.
\end{equation}
As follows from the structure of \eqref{eq:ps} and \eqref{omegas}, in this case,
as $t\to +\infty$,
$\gamma(t)$ tends to the periodic trajectory
$$
(\cos (\Delta\psi) ,\sin (\Delta\psi) \cos (at),\, \sin(\Delta\psi) \sin (at)
)^T,
$$
where $\Delta\psi$ is the angle between the above
axes of the steady-state rotations in space. That is, the trajectory $\gamma(t)$
in the body frame tends to a uniform rotation about the axis $(1,0,0)$ (see an example in Fig. 2). 
\medskip

A similar behavior occurs in the ``non-compact'' version of the Suslov problem:
the Chaplygin sleigh, a rigid body moving on a horizontal plane supported at three
points, two of which slide freely without friction, while the third is
a knife edge (a blade), which allows no motion orthogonal to its direction
(see for the details, among other, \cite{Ch1911, NeFu}).
The configuration space of this dynamical system is $SE(2)$, the group of
Euclidean motions of the two-dimensional plane ${\mathbb R}^2$,
which can be parameterized by the angular orientation $\theta$ of the blade, and
the position $(x, y)$ of the
contact point of the blade. The presence of the blade determines a nonholonomic
distribution on the phase space $T\, SE(2)$.

Like in the Suslov problem, the equations for the two linear momenta and the
angular momentum in the body frame
separate and are solvable in terms of hyperbolic functions.

In contrast to what is known about the solvability of the Poisson equations \eqref{eq:ps}, the
kinematic equations for the Chaplygin sleigh  are integrable for any initial conditions \cite{NeFu}. In particular,
as $t\to \pm\infty$, the motion of the sleigh tends to a straight line uniform
motion, however along different directions.
(This can be compared with the steady-state rotations of the body about
different axes in the Suslov problem.)
A typical trajectory of the contact point of the blade is shown in Fig. 3. 
In addition, in \cite{NeFu} the following remarkable property was observed: {\it
the angle between the limit straight line motions of the sleigh does not depend on the initial conditions,
but only on dynamical parameters of the sleigh.}

In this connection the following classical questions arise: {\it is it true that
the angle $\Delta\psi$ between the
axes of the limit steady-state rotations of the Suslov problem depends only on
the components of the inertia tensor
$\mathbb I$ (that is, it does not depend on the energy of the motion)} ?  And,
if yes, how $\Delta\psi$ depends on the moments of inertia $I_{ij}$ ?

It appears that the answer is positive and the function $\Delta\psi(I_{ij})$ can be calculated
explicitly without solving the Poisson equations (see Theorem \ref{angle} in Section 5).

\begin{figure}[h,t]
 \begin{center}
 \includegraphics[width=.4\textwidth]{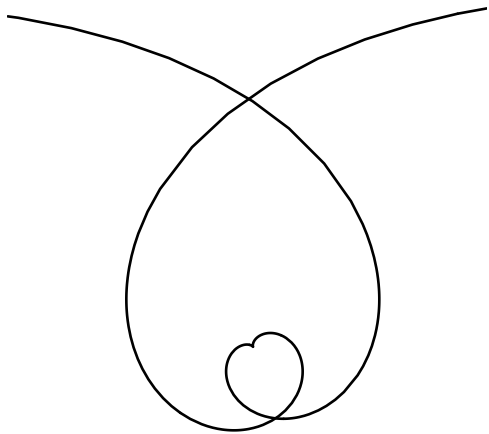}
 \caption{\footnotesize A generic trajectory of the contact point of the blade
on the plane $(x, y)$.}
 \end{center} \label{chapl.fig}
 \end{figure}

\section{Solvability analysis}
\subsection{Solvability in the class of generalized hypergeometric functions}
\label{sec:solhyper}
In this subsection we again assume that $I_{13}=0$ and fix the energy  the level to~\eqref{eq:enlev}.
Hence, $\omega_1(t)$ and $\omega_2(t)$ are given by~\eqref{omegas}, with
\begin{equation}
 a =\frac{p}{d}, \quad d^2=c^2-1.
\end{equation}
Note that here we do not assume that $p\in\Z^\star$.

The Poisson equations \eqref{eq:whole} can be rewritten as the following third order equation
\begin{equation}
 a_0\dddot \gamma_1+a_1\ddot\gamma_1+a_2\dot\gamma_1+a_3\gamma_1=0,
\label{eq:genhipprep}
\end{equation}
where
\begin{equation} \label{a'ss}
 \begin{split}
&a_0 (t) =(d^2+1) p^2\omega_1,\quad a_1 (t) =d p (2 (d^2+1) \omega_1^2 - \omega_2^2),\\
& a_2(t)=(d^2+1)\omega_1[(d^2 + p^2) \omega_1^2 + p^2 \omega_2^2],\quad
 a_3(t) =-d p \omega_2^2 [(d^2+1)\omega_1^2 + \omega_2^2]
 \end{split}
\end{equation}
Now introduce new independent variable
\begin{equation} \label{t->z}
z(t)=\dfrac{d^2}{(d^2+1)p^2}\omega_2^2 \equiv  \frac {4}{(e^t+e^{-t})^2}.
\end{equation}
Then
\[
\begin{split}
& \dot z=-\dfrac{2 d z \omega_1}{p},\qquad \ddot z=\dfrac{2 z (2 p^2 + 2 d^2 p^2
- 3 (1 + d^2) p^2 z)}{(d^2+1) p^2},\\
&\dddot z=-\dfrac{8 d z (p^2 + d^2 p^2 - 3 (d^2+1) p^2 z) \omega_1}{(d^2+1)
p^3}
\end{split}
\]
and \eqref{eq:genhipprep}, \eqref{a'ss} transform to
\begin{equation} \label{eq:gamma_1}
 \gamma_1'''+b_1\gamma_1''+b_2\gamma_1'+b_3\gamma_1=0 \, ,
\end{equation}
where the prime denotes the differentiations with respect to $z$, and
\begin{equation}
 \begin{split}
&b_1=\dfrac{2}{z}+\dfrac{1}{z-1} ,\quad     b_3=\dfrac{(d^2+1) p^2}{8 d^2 (z-1)^2 z^2},              \\
& b_2=\dfrac{-p^2 ( z-1) +
 d^2 (1 + z (-6 - p^2 ( z-1) + 4 z))}{4 d^2 (z-1)^2 z^2} \, .
 \end{split}
\end{equation}
Now, setting $\gamma_1=\sqrt{z-1}u(z)$, we obtain the following equation for the function $u(z)$
\begin{equation}
z^2(1-z)u'''+\dfrac{1}{2} (4 - 9 z) zu''+ \left[\dfrac{1}{4} \left(1 +
\dfrac{p^2}{d^2}\right) + \dfrac{1}{4} ( p^2-13) z\right]u'+\dfrac{1}{8} (
p^2-1)u=0.
\label{eq:genhip}
\end{equation}
This  is  a generalized third order hypergeometric equation, whose canonical form is
(see e.g., \cite{Mimachi_08})
\begin{equation}
\begin{split}
 &z^2(1-z)u'''+z[1+\beta_1+\beta_2-(3+\alpha_1+\alpha_2+\alpha_3)z]u''\\
&+
[\beta_1\beta_2-(1+\alpha_1+\alpha_2+\alpha_3+\alpha_1\alpha_2+\alpha_2\alpha_3+
\alpha_3\alpha_1)z]u'-\alpha_1\alpha_2\alpha_3u=0 \, .
\end{split}
\label{eq:3F2}
\end{equation}
One can identify (\ref{eq:3F2}) and \eqref{eq:genhip} by setting
\begin{equation}
 \begin{split}
\alpha_1
=\dfrac{1}{2},\qquad\alpha_2=\dfrac{1+p}{2},\qquad\alpha_3=\dfrac{1-p}{2},
\qquad\beta_1=\dfrac{d-\rmi p}{2d},\qquad
\beta_2=\dfrac{d+\rmi p}{2d}.
 \end{split}
\label{eq:param}
\end{equation}
Equation \eqref{eq:3F2} belongs to the class of  generalized hypergeometric equations of order $n$ for $n=3$.
If we denote by $\theta=z\mathrm{d}/\mathrm{d} z$, then this class takes the form
\begin{equation}
 [(\theta+\beta_1-1)\cdots
(\theta+\beta_{n-1}-1)\theta-z(\theta+\alpha_1)\cdots(\theta+\alpha_n)]u=0,
\label{eq:nFn-1}
\end{equation}
 see e.g. \cite{Okubo,Beukers}. Recall that one of its solution is
the generalized hypergeometric function introduced by Thomae~\cite{Thomae:1870::}
\begin{equation}
\label{eq:ghyp}
 \phantom{\vert}_nF_{n-1}\left(\begin{matrix}
 \alpha_1, & \ldots,& \alpha_n \\
 \beta_1,& \ldots, & \beta_{n-1}
\end{matrix}; z
  \right):= \sum_{k=0}^{\infty}
\dfrac{(\alpha_1)_k\cdots(\alpha_n)_k z^k}{(\beta_1)_k\cdots(\beta_{n-1})_kk!}.
\end{equation}
Here $(a)_k=a(a+1)\cdots(a+k)$ is the Pochhammer symbol and $(a)_0=1$.

It is known (see e.g., \cite{Mimachi_08}) that equation~\eqref{eq:3F2} has three regular
singularities over $\mathbb{CP}^1$ with the exponents
\begin{eqnarray*}
& 0,1-\beta_1,1-\beta_2,&\mtext{at} z=0,\\
&0,1,\beta_1+\beta_2-\alpha_1-\alpha_2-\alpha_3,&\mtext{at} z=1,\\
& \alpha_1,\alpha_2,\alpha_3,&\mtext{at} z=\infty.
\end{eqnarray*}
and if $\beta_1,\beta_2,\beta_1-\beta_2\not\in\Z$, then the general solution of
\eqref{eq:genhip} around the origin is given by linear combination
\begin{align}
u & = C_1 {\cal F}_1(z) + C_2 {\cal F}_2(z) + C_3{\cal F}_3(z), \label{gen_sol_u} \\
  {\cal F}_1(z) & =  \phantom{\vert}_3F_2\left(\begin{matrix}
 \alpha_1& \alpha_2& \alpha_3 \\
 \beta_1& \beta_2 &
\end{matrix}; z\right)\, , \nonumber \\
 {\cal F}_2(z) &= z^{1-\beta_1}\phantom{\vert}_3F_2\left( \begin{matrix}
\alpha_1-\beta_1+1& \alpha_2-\beta_1+1&
\alpha_3-\beta_1+1\\
\beta_2-\beta_1+1& 2-\beta_1&
\end{matrix}; z\right) \, ,
\label{eq:sol3F2}
\\
 {\cal F}_3(z) & = z^{1-\beta_2}\phantom{\vert}_3F_2
\left( \begin{matrix}
\alpha_1-\beta_2+1& \alpha_2-\beta_2+1&
\alpha_3-\beta_2+1\\
\beta_1-\beta_2+1& 2-\beta_2 &
\end{matrix};z\right) \, , \nonumber
\end{align}
$C_1, C_2$, and $C_3$ being constants of integration.

An immediate observation is that if one of $\alpha_i$ in~\eqref{eq:ghyp}
is a negative integer, then the series truncates. In view of \eqref{t->z},
this happens precisely when $p$ is an {\it odd} integer, and then the particular solution ${\cal F}_1(z)$
takes the form
\begin{equation}
{\cal F}_1(z)=  1+ \sum_{j=1}^{(p-1)/2} \frac{(2j-1)!!}{(2j)!!}\,
\frac {d^{2j} (p^2-1)\cdots (p^2-(2j-1)^2) }{(d^2+p^2)\cdots ((2j-1)d^2+p^2)}  \, z^j \, .
\label{pol_sol}
\end{equation}
Since $\gamma_1=\sqrt{1-z} \, u(z)$ and $z=4/(e^t+e^{-t})^2$,
this corresponds to a solution $\gamma_1(t)$, rational in $e^t$ and meromorphic in $t$, having poles of order $p$
at $t=\pi \rmi/2$ (mod $\pi \rmi$), as predicted by Theorem \ref{pro:necc}.

\paragraph{Transformations for $ {\cal F}_2(z), {\cal F}_3(z)$.}
Due to the symmetry in the definition (\ref{eq:param}) of the parameters $\alpha, \beta$, the above series
$ {\cal F}_2(z), {\cal F}_3(z)$  can be expressed in terms of customary hypergeometric functions
$$
F\left(\alpha,\beta;\gamma; y\right) = 1+ \frac{\alpha \beta}{\gamma} y
+ \frac{\alpha (\alpha+1) \beta(\beta+1)}{\gamma(\gamma+1) 2!} y^2+ \cdots
$$
of the variable $y=e^{2t}$.

\begin{proposition} Under the condition (\ref{eq:param}) one has
\begin{equation}
{\cal F}_{2,3}(z) = \frac{\varkappa_\pm } {(1+y)^{p-1}(y-1)} y^{\dfrac{d\pm\rmi p}{2d}} \cdot  F(a_{\pm},b_{\pm};\, c_{\pm};-y)
\hat F (a_{\pm},b_{\pm};\, c_{\pm};-y), \label{F_23_splitted}
\end{equation}
where
\begin{gather*}
z= 4 y/(1+y)^2, \quad \varkappa_\pm = \dfrac{2d }{(d \pm\rmi p) (3 d \pm\rmi p)}, \\
\hat F =  \left \{ [d(p-1)y-d\mp\rmi p]F(a_{\pm},b_{\pm};c_{\pm}; -y)
-2d\, y(y+1)\dfrac{\mathrm{d}}{\mathrm{d}y}F(a_{\pm},b_{\pm};c_{\pm},-y)\right\},
\end{gather*}
and
\begin{equation} \label{abc}
a_{\pm}=\frac{2-p}{2}\pm\frac{\rmi p}{2d},\quad b_{\pm}=\frac{1-p}{2},\quad
c_{\pm}=1+a_{\pm}-b_{\pm}=\dfrac{3}{2}\pm\dfrac{\rmi p}{2d}.
\end{equation}
\end{proposition}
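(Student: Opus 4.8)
The plan is to use that the Poisson system \eqref{eq:at} is $\mathfrak{so}(3,\mathbb{C})$-valued: the matrix $A(t)$ is antisymmetric, which is the infinitesimal form of the conserved quantity $F_2=\langle\gamma,\gamma\rangle$. Under the classical isomorphism $\mathfrak{so}(3,\mathbb{C})\cong\mathfrak{sl}(2,\mathbb{C})$, whose effect on representations is the second symmetric power, such a system is the symmetric square of a $2\times2$ system $\dot\psi=B(t)\psi$, $\psi=(\psi_1,\psi_2)^{T}$, $B\in\mathfrak{sl}(2,\mathbb{C})$; consequently each component $\gamma_i$ is a quadratic form in the spinor components $\psi_1,\psi_2$, i.e. a linear combination of the products $\psi_1^2,\psi_1\psi_2,\psi_2^2$, where $\psi_1,\psi_2$ solve one and the same second order (Gauss) equation. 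I emphasise that this structure is visible in the variable $t$, or in $y=e^{2t}$, but \emph{not} directly in $z$: since $z$ of \eqref{t->z} is an even function of $t$, the map $t\mapsto z$ is two to one (the involution being $y\mapsto 1/y$), so the scalar equation \eqref{eq:genhip} is only a pushforward and does not itself display the symmetric square (this is the origin of the awkward exponent $-\tfrac12$ at $z=1$). For this reason the factorisation \eqref{F_23_splitted} is, and must be, written in the genuine coordinate $y=e^{2t}$.

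So first I would pass to $y=e^{2t}$, related to $z$ by the quadratic map $z=4y/(1+y)^2$, so that $1-z=(1-y)^2/(1+y)^2$; this explains both the earlier substitution $\gamma_1=\sqrt{1-z}\,u$ and the rational factor $(y-1)/(1+y)$ in \eqref{F_23_splitted}. In the variable $y$ the system is Fuchsian with singular points $y=0,-1,\infty$, matching those of the Gauss equation for $F(\,\cdot\,;-y)$, together with an apparent singularity at $y=1$ (the ramification point over $z=1$); the gauge factor $(1+y)^{p-1}(y-1)$ is exactly what renders $y=1$ apparent and normalises the exponents. I would then reduce the $2\times2$ system to a single second order equation for $\psi_1$ and, after this gauge transformation, identify it with the Gauss equation for $F(a_\pm,b_\pm;c_\pm;-y)$ with parameters \eqref{abc}, so that $\psi_1\propto F$. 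The second spinor component is then the first order expression $\psi_2=B_{12}^{-1}(\dot\psi_1-B_{11}\psi_1)$; converting $\tfrac{d}{dt}=2y\tfrac{d}{dy}$ and inserting $\psi_1\propto F$ produces precisely $\hat F=[d(p-1)y-d\mp\rmi p]\,F-2dy(y+1)F'$. Hence the mixed product obeys $\psi_1\psi_2\propto F\hat F$, which is the product appearing in \eqref{F_23_splitted}; the factor $F'$ in $\hat F$ is thus not an artefact but the recovery of the second spinor component from the first.

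Next I would match exponents to identify which solution is which and to fix the constant. As $y\to0$ one has $F\to1$ and $\hat F\to-(d\pm\rmi p)$, so the right hand side of \eqref{F_23_splitted} behaves like a constant multiple of $y^{(d\pm\rmi p)/(2d)}$; comparing with the local exponents $z^{1-\beta_1}=z^{\beta_2}$ and $z^{1-\beta_2}=z^{\beta_1}$ of ${\cal F}_2,{\cal F}_3$ in \eqref{eq:sol3F2} (using $z\sim4y$) identifies the upper sign with ${\cal F}_2$ and the lower with ${\cal F}_3$, both being the \emph{mixed} product $\psi_1\psi_2$. The remaining, squared solution $\psi_i^2$ then corresponds to ${\cal F}_1$, which is consistent with the truncation of ${\cal F}_1$ to the polynomial \eqref{pol_sol} for odd integer $p$, when one spinor solution is itself elementary. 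Finally, the normalisation $\varkappa_\pm$ is fixed by matching the leading coefficient of the above expansion against ${}_3F_2(\,\cdot\,;0)=1$; the factors $d\pm\rmi p$ and $3d\pm\rmi p$ in $\varkappa_\pm$ appear as $2d\beta_2$ and $2d(2-\beta_1)$, that is, from the lower parameters of the ${}_3F_2$ defining ${\cal F}_2$.

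The main obstacle is the bookkeeping through the quadratic pullback and the gauge: one must follow the exponents at all four points $y=0,-1,1,\infty$, check that $y=1$ is genuinely apparent so that the four singular point equation collapses to the symmetric square of the three singular point Gauss equation, and determine the exact first order operator giving $\hat F$ together with the scalar $\varkappa_\pm$. A purely computational alternative, bypassing the representation theoretic input, is to substitute \eqref{F_23_splitted} into \eqref{eq:genhip} and reduce all derivatives of $F$ beyond first order by the Gauss equation it satisfies, then fix $\varkappa_\pm$ from one boundary coefficient; this is routine but lengthy, whereas the symmetric square viewpoint explains \emph{why} the factorisation exists and why the second factor is the derivative combination $\hat F$.
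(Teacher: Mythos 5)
Your route is genuinely different from the paper's. The paper proves \eqref{F_23_splitted} by chaining three classical identities: after the substitution $z=4y/(1+y)^2$, the parameters \eqref{eq:param} put the ${}_3F_2$'s defining ${\cal F}_{2,3}$ into the Orr-type pattern of \eqref{splits}, which splits each of them into a product $F(\alpha_\pm,\beta_\pm;\gamma_\pm;z)\,F(\alpha_\pm,\beta_\pm;\gamma_\pm-1;z)$ of two Gauss functions; the first factor is converted to $F(a_\pm,b_\pm;c_\pm;-y)$ by the quadratic transformation \eqref{eq:fferd}, and the second by the derivative (contiguity) relation \eqref{eq:expsol1} followed by the same quadratic transformation --- that is exactly where $\hat F$ and $\varkappa_\pm$ come from. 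Your symmetric-square argument explains \emph{why} the parameters land in the Orr pattern at all (the third-order equation carries $\mathrm{Sym}^2$ monodromy because the Poisson system is the symmetric square of a $2\times 2$ system under $\mathfrak{so}(3,\mathbb{C})\cong\mathfrak{sl}(2,\mathbb{C})$), and this is the very mechanism the paper itself exploits later, in Darboux's Lemma~\ref{lem:dar} and the reduction to \eqref{eq:2rzedu}, without ever connecting it to the present Proposition. Your fallback (substitute \eqref{F_23_splitted} into the ODE, reduce by the Gauss equation, fix the constant from one leading coefficient) is also a legitimate, if unilluminating, proof.

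As written, however, the proposal has concrete gaps. First, the pivotal step --- that your reduced and gauged second-order equation \emph{is} the hypergeometric equation with parameters \eqref{abc} --- is asserted, not derived, and it is essentially all of the work, comparable in labour to the paper's identity-chasing. Second, the descent claim is not right as stated: the $2\times2$ spinor system is rational in $e^t$ but \emph{not} in $y$, since $\omega_2=-2ac\,e^t/(1+e^{2t})$ involves $\sqrt{y}$; it descends to the $y$-line only after combining the deck map $e^t\mapsto-e^t$ with the gauge swapping the two spinor components, and this twist is precisely the source of the half-integer part of the exponents $y^{(d\pm\rmi p)/(2d)}$. So the two-to-one subtlety is not confined to the $z$-variable; only the scalar equation for $\gamma_1$ and the bilinear products live rationally in $y$. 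Third, your assignment of solutions is half wrong: counting exponents at $y=0$ (spinor exponents $\pm\rmi p/(4d)$), a product of the two components of a \emph{single} spinor solution carries $y^{1/2\pm\rmi p/(2d)}$, while the cross term between the two spinor solutions carries $y^{0}$; hence ${\cal F}_2,{\cal F}_3$ are indeed the single-spinor products (which is what your main argument needs), but ${\cal F}_1$ is the cross term --- a sum of two products, not a square $\psi_i^2$, and indeed \eqref{pol_sol} is visibly not a perfect square, so that consistency remark should be dropped. Finally, your normalisation step is the right move but cannot reproduce the stated constant: matching at $y\to0$, where $F\to1$, $\hat F\to-(d\pm\rmi p)$ and ${\cal F}_{2,3}\sim(4y)^{(d\pm\rmi p)/(2d)}$, forces $\varkappa_\pm=4^{(d\pm\rmi p)/(2d)}/(d\pm\rmi p)$ rather than $2d/[(d\pm\rmi p)(3d\pm\rmi p)]$. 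The discrepancy lies in the paper's own constant (its relation \eqref{eq:expsol1} fails already at $z=0$, where it gives $1=1/\gamma_\pm$, and a factor $4^{(d\pm\rmi p)/(2d)}$ from $z^{1-\beta_{1,2}}=(4y)^{(d\pm\rmi p)/(2d)}(1+y)^{-(d\pm\rmi p)/d}$ has been dropped), so \eqref{F_23_splitted} holds only up to normalisation; your attempt to recover the factors $d\pm\rmi p$ and $3d\pm\rmi p$ from the lower ${}_3F_2$ parameters should be abandoned rather than forced.
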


\begin{proof} In view of (\ref{eq:param}), the last two formal solutions in (\ref{eq:sol3F2})
can be written as
\begin{gather}
{\cal F}_{2,3}(z(y)) = (4y)^{ \frac{d\pm \rmi p }{2d} }(y+1)^{-\frac{d\pm \rmi p }{d} }
\phantom{\vert}_3F_2\left(\begin{matrix}
\alpha_{\pm}+\beta_{\pm} & 2\alpha_{\pm} & 2\beta_{\pm} \\
 2\alpha_{\pm}+2\beta_{\pm}-1& \alpha_{\pm}+\beta_{\pm}+\frac{1}{2} &
\end{matrix}; \frac{4y}{(y+1)^2}\right) , \nonumber \\
\alpha_{\pm}=\dfrac{2-p}{4}\pm\dfrac{\rmi p}{4d},\quad \beta_{\pm}=\dfrac{2+p}{4}\pm\dfrac{\rmi p}{4d}.
\label{alpha,beta}
\end{gather}
The functions $\phantom{\vert}_3F_2(\cdot)$ of such kind split into a product of two customary
hypergeometric functions
$\phantom{\vert}_2F_1(\cdot)\equiv F(\cdot)$, namely
\begin{equation}
\phantom{\vert}_3F_2\left(\begin{matrix}
 2\alpha& 2\beta& \alpha+\beta\\
 2\alpha+2\beta-1& \alpha+\beta+\frac{1}{2} &
\end{matrix}; x\right)= F\left(\alpha,\beta;\gamma;x\right) \, F\left(\alpha,\beta;\gamma-1;x\right) ,
\label{splits}
\end{equation}
where $\gamma=\alpha+\beta+ \frac{1}{2}$, see e.g. formula (8) on page 86 in  \cite{Erdelyi:81::a}.
In our case
$$
\alpha=\alpha_{\pm}, \quad \beta=\beta_{\pm}, \quad
\gamma_{\pm}=\alpha_{\pm}+\beta_{\pm}+ \frac{1}{2}=\dfrac{3}{2}\pm\dfrac{\rmi p}{2d}, \quad
x=4y/(y+1)^2.
$$
Next, due to the special form of $\gamma$,
the first hypergeometric function $F\left(\alpha,\beta;\gamma;x\right)$ can be written as series in $y$ by using
the following quadratic transformation
\begin{equation}
 F\left(\frac{a}{2},\frac{a}{2}+\frac{1}{2}-b,1+a-b, \frac{4y}{(1+y)^2}\right)=(1+y)^{a}F(a,b,1+a-b,-y),
\label{eq:fferd}
\end{equation}
see e.g. page 64 in \cite{Erdelyi:81::a}.
Now identifying $\alpha_{\pm}=a_{\pm}/2$ and $\beta_{\pm}=a_{\pm}/2+1/2-b_{\pm}$, we obtain
$$
 a_{\pm}=\frac{2-p}{2}\pm\frac{\rmi p}{2d},\quad b_{\pm}=\frac{1-p}{2},\quad
c_{\pm}=1+a_{\pm}-b_{\pm}=\dfrac{3}{2}\pm\dfrac{\rmi p}{2d},
$$
which coincide with \eqref{abc}. Inserting these parameters into \eqref{eq:fferd}, we get
\begin{equation}
 F\left(\alpha_{\pm},\beta_{\pm};\gamma_{\pm};\frac{4y}{(y+1)^2}\right)=(y+1)^{a_{\pm}}F(a_{\pm},b_{\pm};c_{\pm},-y).
\label{eq:expsol}
\end{equation}
It remains to apply a similar quadratic transformation to the factor
$F(\alpha_{\pm},\beta_{\pm};\gamma_{\pm}-1;4y/(y+1)^2)$. To do this we first use the relation
\[
 (\gamma-n)_n\, z^{\gamma-1-n}F(\alpha,\beta;\gamma-n;z)
=\dfrac{\mathrm{d}^n}{\mathrm{d}z^n}\left[z^{\gamma-1}F(\alpha,\beta;\gamma;z)\right],
\]
$(\gamma-n)_n$ being the corresponding Pochhammer symbol (see e.g. equation (22) on page 102 in  \cite{Erdelyi:81::a}).
Setting here $n=1$, we obtain
\begin{equation}
\begin{split}
 &F(\alpha_{\pm},\beta_{\pm};\gamma_{\pm}-1;z)
=\dfrac{1}{\gamma_{\pm}}F(\alpha_{\pm},\beta_{\pm};\gamma_{\pm};z)+\dfrac{z}{\gamma_{\pm}(\gamma_{\pm}-1)}
\dfrac{\mathrm{d}}{\mathrm{d}z}F(\alpha_{\pm},\beta_{\pm};\gamma_{\pm};z),\\
&z=\frac{4y}{(y+1)^2}.
\end{split}
\label{eq:expsol1}
\end{equation}
In view of the chain rule
\[
 \dfrac{\mathrm{d}}{\mathrm{d}z}=-\dfrac{(y+1)^3}{4(y-1)}\dfrac{\mathrm{d}}{\mathrm{d}y},
\]
and \eqref{alpha,beta}, the expression \eqref{eq:expsol1} takes the form
\begin{align*}
& F \left( \alpha_{\pm},  \beta_{\pm};\gamma_{\pm}-1; \frac{4y}{ (y+1)^2} \right) =
\dfrac{2d(y+1)^{a_{\pm}}}{(d \pm\rmi p) (3 d \pm\rmi p)(y-1)} \\
& \quad \cdot \left\{
[d(p-1)y-d\mp\rmi p]F(a_{\pm},b_{\pm};c_{\pm},-y)-2dy(y+1)\dfrac{\mathrm{d}}
{\mathrm{d}y}F(a_{\pm},b_{\pm};c_{\pm},-y)\right\}.
\end{align*}

Substituting this, as well as \eqref{eq:expsol}, into \eqref{splits} and simplifying, we finally get the expression
\eqref{F_23_splitted}. \end{proof}

\medskip

Note that when $p$ is odd positive integer, the parameters $ b_{\pm}=(1-p)/2$ become negative integer,
and the hypergeometric series $ F (a_{\pm},b_{\pm};c_{\pm},-y) $ in (\ref{F_23_splitted})
convert into polynomials of degree $(p-1)/2$,
\begin{equation} \label{poly_F}
F (a_{\pm},b_{\pm};c_{\pm},-y) = \sum_{k=0}^{(p-1)/2} \frac{(a_\pm)_{k} (b_\pm)_{k}}{(c_\pm)_{k} k!}\, (-y)^k .
\end{equation}
Next, after a simplification,
the series $\hat F (a_{\pm},b_{\pm};c_{\pm},-y)$ in (\ref{F_23_splitted}) becomes the following polynomial of degree $(p-1)/2$
(the term of the highest degree $(p-1)/2+1$ annihilates):
\begin{equation}
\hat F (a_{\pm},b_{\pm};c_{\pm},-y) =
\sum_{k=0}^{(p-1)/2} \frac{(a_\pm)_{k} (b_\pm)_{k}}{(c_\pm)_{k} k!}
\, \frac{(2d \, k + d+ p\rmi )(4d \, k -dp \pm \rmi p)  }{dp-2d\, k \mp \rmi p } (-y)^k . \label{poly_F2}
 \end{equation}
As result, we conclude that in this case,
\begin{equation}
\label{F_23_pol}
{\cal F}_{2,3} (z(y)) = \frac {\varkappa_\pm }{ (1+y)^{p-1}(y-1)}\, y^{ \frac{d\pm\rmi p}{2d} }\, P_\pm (y),
\end{equation}
where $P_\pm (y)= F (a_{\pm},b_{\pm};c_{\pm},-y)\, \hat F (a_{\pm},b_{\pm};c_{\pm},-y)$
are polynomials of degree $p-1$ with complex conjugated coefficients.

The fact that in this case the series ${\cal F}_1$ is a rational function of $y$
follows from the formula \eqref{pol_sol}. 


The explicit form of the corresponding meromorphic solutions of the Poisson equations \eqref{eq:whole} for odd $p$
is given in Section~\ref{mer_sol}.

\paragraph{Monodromy for ${\cal F}_1(z)$ and the angle $\Delta \psi$.}
Now we show how using the monodromy group of the equation~\eqref{eq:nFn-1},
which was analized in \cite{Okubo, Beukers, Mimachi_08}, one can calculate the angle $\Delta \psi$ between the
axes of the limit steady-state rotations of the body in space for generic $p\in {\mathbb R}$.
As follows from \eqref{t->z}, the values
$t=\pm \infty$, $t=0$, and $t=\pi i/2$ (mod $\pi i$) correspond to the singular points
$z=0$, $z=1$, and $z=\infty$ respectively. As time $t$ evolves along the real axis from $-\infty$ to $\infty$ passing by
$t=0$, the variable $z$ makes a loop in $\mathbb C$ embracing $1$ in positive or negative direction.

Next, we need
\begin{lemma}
Under the substitution \eqref{t->z},  the component $\gamma_1$ of
the special solution $\gamma(t)\in S^2$ satisfying the boundary condition \eqref{ss} has the following form
around the origin $z=0$
\begin{align}
\gamma_1(t) & = - \sqrt{1-z(t)} \, {\cal F}_1 (z(t)) \nonumber \\
         & \equiv - \sqrt{1-z} \phantom{\vert}_3F_2\left(\begin{matrix}
 1/2 &  (1+p)/2   & (1-p)/2  \\
 1/2-ip/(2d)  &  1/2 + ip/(2d)  &
\end{matrix}; z\right) \, . \label{gam_0}
\end{align}
\end{lemma}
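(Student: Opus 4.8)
The plan is to use that, by the reduction already carried out, the component $\gamma_1$ satisfies the third order equation \eqref{eq:genhipprep}, equivalently, after the change $t\mapsto z$ and the substitution $\gamma_1=\sqrt{z-1}\,u$, the generalized hypergeometric equation \eqref{eq:genhip} with parameters \eqref{eq:param}. Hence near $z=0$ one has $\gamma_1=\sqrt{z-1}\,\bigl(C_1{\cal F}_1+C_2{\cal F}_2+C_3{\cal F}_3\bigr)$ for some constants $C_1,C_2,C_3$, and the entire content of the lemma is that the boundary condition \eqref{ss} forces $C_2=C_3=0$ and fixes $C_1$. First I would record the local exponents at $z=0$: the solution ${\cal F}_1$ is analytic with ${\cal F}_1(0)=1$, while ${\cal F}_2\sim z^{1-\beta_1}=z^{1/2+\rmi p/(2d)}$ and ${\cal F}_3\sim z^{1-\beta_2}=z^{1/2-\rmi p/(2d)}$.

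Next I would pass directly to the asymptotics $t\to-\infty$. Since $z=4/(\rme^{t}+\rme^{-t})^2\sim 4\rme^{2t}$, we get $z^{1/2}\sim 2\rme^{t}$ and, using $a=p/d$, $z^{\pm\rmi p/(2d)}\sim\mathrm{const}\cdot\rme^{\pm\rmi a t}$. Thus, as $t\to-\infty$, the three Frobenius solutions correspond to a constant mode (from ${\cal F}_1$) and two decaying–oscillatory modes $\sim\rme^{t}\rme^{\pm\rmi at}$ (from ${\cal F}_2,{\cal F}_3$). This matches the limiting constant-coefficient system $\dot\gamma=A_\infty\gamma$, with $A_\infty=\lim_{t\to-\infty}A(t)$ read off from \eqref{eq:at}, \eqref{omegas} via $\omega_1\to-a$, $\omega_2\to0$: its eigenvalues are $0,\pm\rmi a$, the zero-eigenvector is $(1,0,0)^T$ (the equilibrium axis), and the oscillatory eigendirections lie in the $(\gamma_2,\gamma_3)$-plane.

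The hard part is that the contributions of ${\cal F}_2,{\cal F}_3$ are invisible in the limit of $\gamma_1$ alone: both carry the factor $z^{1/2}\to0$, so $\gamma_1$ acquires the same finite limit for any $C_2,C_3$, and its boundary value cannot exclude them. To pin down $C_2=C_3=0$ I would reconstruct $\gamma_2,\gamma_3$ from $\gamma_1$ through the Poisson equations, namely $\gamma_3=-\dot\gamma_1/\omega_2$ and $\gamma_2=(\omega_2\gamma_1-\dot\gamma_3)/\omega_1$, and track the leading order. For the oscillatory mode $\gamma_1\sim\mathrm{const}\cdot\rme^{(1+\rmi a)t}$ coming from ${\cal F}_2$, one has $\dot\gamma_1=O(\rme^{t})$ while $\omega_2\sim-2ac\,\rme^{t}$, so the exponential cancels and $\gamma_3$ tends to a nonzero constant-amplitude oscillation $\sim\rme^{\rmi at}$ (and likewise $\gamma_2$). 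Hence a nonzero $C_2$ or $C_3$ would produce a non-decaying oscillation in $(\gamma_2,\gamma_3)$, contradicting $\lim_{t\to-\infty}(\gamma_2,\gamma_3)=(0,0)$ required by \eqref{ss}; equivalently, this already follows from $\gamma\in S^2$ together with $\gamma_1\to-1$. Therefore $C_2=C_3=0$.

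It remains to fix $C_1$. With $C_2=C_3=0$ we have $\gamma_1=C_1\sqrt{z-1}\,{\cal F}_1(z)$, and on the real segment $z\in(0,1)$ (image of $t\in(-\infty,0)$) I would write $\sqrt{z-1}=\rmi\sqrt{1-z}$. Since in \eqref{eq:param} the parameters $\beta_1,\beta_2$ are complex conjugate while $\alpha_1,\alpha_2,\alpha_3$ are real, every coefficient of the series ${\cal F}_1$ is real, so ${\cal F}_1(z)$ is real on $(0,1)$ and $\gamma_1$ is real precisely when $C_1\in\rmi\,\mathbb{R}$. Imposing $\gamma_1\to-1$ as $z\to0^{+}$, together with ${\cal F}_1(0)=1$ and $\sqrt{1-z}\to1$, gives $C_1\rmi=-1$, i.e. $\gamma_1=-\sqrt{1-z}\,{\cal F}_1(z)$, which is exactly \eqref{gam_0} after inserting the parameters \eqref{eq:param}.
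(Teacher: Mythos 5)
Your proof is correct and follows essentially the same route as the paper's: write the general solution near $z=0$ as $\sqrt{z-1}\,\bigl(C_1{\cal F}_1+C_2{\cal F}_2+C_3{\cal F}_3\bigr)$, use $\gamma_1\to-1$ to fix $C_1$ (with the branch $\sqrt{z-1}=\rmi\sqrt{1-z}$), and reconstruct $\gamma_3=-\dot\gamma_1/\omega_2$ from the Poisson equations, whose boundary condition $\gamma_3\to0$ kills $C_2,C_3$ because the ${\cal F}_{2,3}$ modes, after division by $\omega_2\sim-2ac\,\rme^{t}$, leave non-decaying oscillations $\sim\rme^{\pm\rmi pt/d}$. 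Your version merely spells out the asymptotics and the square-root branch more explicitly than the paper does.
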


\begin{proof} As follows from \eqref{gen_sol_u},
the most general possible expression for the component $\gamma_1$, which ensures
$\lim_{t\to -\infty} \gamma_1=-1$, is 
$$
 \widehat \gamma_1 (t) =  - \sqrt{1-z} \,\left( {\cal F}_1 (z) + c_2 {\cal F}_2(z)+ c_3 {\cal F}_3(z) \right),
$$
$c_2, c_3$ being arbitrary constants. On the other hand, from the Poisson equations and the solutions (\ref{omegas})
one has
$$
\gamma_3 =- \frac {\dot \gamma_1}{\omega_1} = -4 \frac{ e^t-e^{-t} }{ac(e^t + e^{-t})^2 } \frac{d\gamma_1}{d z}
\equiv - \frac{e^t-e^{-t}}{ac} z \frac{d\gamma_1}{d z}.
$$
The above special solution requires that $\lim_{t\to -\infty} \gamma_3=0$.
Then, setting here $\gamma_1= \widehat \gamma_1$, differentiating ${\cal F}_{1,2,3}(z)$ and taking into account the values of $\beta_1,\beta_2$, we see that this condition holds if and only if $c_2, c_3$ above are zero.
\end{proof}

When $z$ makes a loop around $1$, the root $\sqrt{1-z}$ changes sign, whereas
the hypergeometric series ${\cal F}_1 (z)$ transforms according to the following proposition,
which is a corollary of Theorem 4.1 in \cite{Mimachi_08}.

\begin{proposition} When $z\in {\mathbb C}$ makes a loop around the singular point $z=1$ in positive direction, the
solution ${\cal F}_1(z)$ around the origin undergoes the monodromy
\begin{align}
{\cal F}_1 \to {\cal F}_1 & -2 \rmi  \exp(\beta_1+\beta_2-\alpha_1-\alpha_2-\alpha_3)
\left( \sigma_1 {\cal F}_1 + \sigma_2 {\cal F}_2 + \sigma_3 {\cal F}_3 \right)  \\
 \sigma_1 & = \frac{ \sin(\pi \alpha_1)\, \sin(\pi \alpha_2) \, \sin(\pi \alpha_3)} {\sin(\pi \beta_1)\, \sin(\pi \beta_2) },
\nonumber \\
\sigma_2 & = -\frac{ \sin(\pi ( \beta_1-\alpha_1) )\, \sin(\pi (\beta_1-\alpha_2))\, \sin(\pi (\beta_1-\alpha_3)}
{\sin(\pi \beta_1) \sin(\pi (\beta_1-\beta_2)) } , \nonumber \\
\sigma_3 & = -\frac{ \sin(\pi ( \beta_2-\alpha_1) ) \, \sin(\pi (\beta_2-\alpha_2))\, \sin(\pi (\beta_2-\alpha_3)}
{\sin(\pi (\beta_2-\beta_1))\, \sin(\pi \beta_2) } . \nonumber
\end{align}
\end{proposition}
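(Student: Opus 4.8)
The plan is to obtain the stated transformation by specializing the general monodromy description of the third-order hypergeometric equation \eqref{eq:nFn-1}, as given in Theorem~4.1 of \cite{Mimachi_08}, to the parameters \eqref{eq:param}. First I would record that these parameters satisfy the genericity hypotheses under which the connection and monodromy formulae of \cite{Mimachi_08} take their stated form. Indeed, for real nonzero $p$ and real $d$ the exponents $\beta_{1,2}=(d\mp\rmi p)/(2d)$ are genuinely non-real, so $\beta_1,\beta_2\notin\mathbb{Z}$, and likewise $\beta_1-\beta_2=-\rmi p/d\notin\mathbb{Z}$; hence $\{{\cal F}_1,{\cal F}_2,{\cal F}_3\}$ of \eqref{eq:sol3F2} is a genuine basis of solutions around $z=0$, exactly as already invoked in \eqref{gen_sol_u}.

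The structural heart of the argument is that the local monodromy at $z=1$ is a complex pseudo-reflection. Among the three exponents $0$, $1$, and $e:=\beta_1+\beta_2-\alpha_1-\alpha_2-\alpha_3$ at $z=1$, the first two are non-negative integers and give single-valued holomorphic local solutions, while only the third is non-integer. Consequently the monodromy $M_1$ around $z=1$ fixes pointwise the two-dimensional space of holomorphic local solutions and multiplies the ``anomalous'' solution $\sim(1-z)^{e}$ by $\exp(2\pi\rmi\,e)$, so it has rank-one form $M_1=\mathrm{Id}+(\exp(2\pi\rmi e)-1)\,v\,w^{\top}$ with $v$ spanning the anomalous direction and $w^{\top}v=1$. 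For the parameters \eqref{eq:param} one has the simplification $\beta_1+\beta_2=1$ and $\alpha_1+\alpha_2+\alpha_3=3/2$, whence $e=-1/2$ and the monodromy scalar is $\exp(2\pi\rmi e)-1=-2$; thus acting on ${\cal F}_1$ gives $M_1{\cal F}_1={\cal F}_1-2\,(w^{\top}{\cal F}_1)\,v$, and the entire problem reduces to expressing $(w^{\top}{\cal F}_1)\,v$ in the basis $\{{\cal F}_1,{\cal F}_2,{\cal F}_3\}$.

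These connection data are precisely what Theorem~4.1 of \cite{Mimachi_08} provides. I would substitute \eqref{eq:param} into its connection coefficients, which are ratios of products of Gamma functions of the $\alpha_i$, $\beta_j$ and their differences, and convert them into the sine products $\sigma_1,\sigma_2,\sigma_3$ by repeated use of the reflection formula $\Gamma(x)\Gamma(1-x)=\pi/\sin(\pi x)$. Writing the overall scalar in the form $\exp(2\pi\rmi e)-1=2\rmi\,\exp(\pi\rmi e)\sin(\pi e)$ and absorbing the factor $\sin(\pi e)$ together with the Gamma-function normalizations into the coefficients then produces both the claimed prefactor and the three $\sigma_j$; the $\sqrt{1-z}$ branch is treated separately, so the Proposition concerns only the series ${\cal F}_1$.

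The main obstacle I expect is bookkeeping rather than conceptual: matching the orientation of the positively oriented loop around $z=1$, the choice of branch cuts, and the normalization of the local solutions at $z=1$ used in \cite{Mimachi_08} with those fixed by \eqref{eq:sol3F2}, and then carrying the Gamma-to-sine simplification through without sign or phase errors. A convenient internal check is that $M_1$ must be a pseudo-reflection, so that $\det M_1=\exp(2\pi\rmi e)=-1$ and $M_1-\mathrm{Id}$ has rank one; these constraints tie the three $\sigma_j$ together and let me verify the final coefficients independently of the detailed Gamma-function manipulation.
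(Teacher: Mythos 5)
Your proposal is correct and follows essentially the same route as the paper: the paper offers no independent proof, stating the Proposition simply as a corollary of Theorem~4.1 of \cite{Mimachi_08} specialized to the parameters \eqref{eq:param}. Your additional details---the check that $\beta_1,\beta_2,\beta_1-\beta_2\notin\mathbb{Z}$, the pseudo-reflection structure of the local monodromy at $z=1$ with exponent $e=\beta_1+\beta_2-\alpha_1-\alpha_2-\alpha_3=-1/2$, and the Gamma-to-sine conversion---are exactly the bookkeeping the paper leaves implicit in that citation.
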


Now combining formula (\ref{gam_0}) with the proposition, taking into account \eqref{eq:param}, then
evaluating $ {\cal F}_1, {\cal F}_2, {\cal F}_3$ at $z=0$, one concludes that
$$
\lim_{t\to \infty} \gamma_1 = 1-2 \rmi  \exp(\pi \rmi /2) \sigma_1
= 1- 2 \frac{ \cos^2(\pi p/2) }{\cos^2( (\pi \rmi p)/(2q) ) } .
$$
Since $\lim_{t\to -\infty} \gamma_1=-1$ and $\cos(\Delta \psi) = - \lim_{t\to \infty} \gamma_1$, we obtain the following

\begin{theorem} \label{angle}
The angle $ \Delta \psi $ between the axes of the limit steady-state rotations of the body does not depend on the energy
and, when $I_{13}=0$, is uniquely defined by relation
$$
 \cos \frac{\Delta \psi }2= \frac{ \cos(\pi p/2) }{\cos( (\pi \rmi p)/(2q) )} \, ,
$$
where, as above,
$$
d= \sqrt{\frac {I_{11}-I_{22}}{I_{22}} }, \quad p= \sqrt{ \dfrac{I_{11}-I_{22}}{I_{11}I_{22}(I_{11}I_{22}I_{33}-1)} }\,, 
$$
and it is assumed that $\det {\mathbb I}=1$.
\end{theorem}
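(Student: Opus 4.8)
The plan is to assemble three ingredients already in hand---the explicit form of $\gamma_1$ near $z=0$ from the preceding Lemma, the monodromy of $\mathcal{F}_1$ around $z=1$ from the preceding Proposition, and the boundary behaviour of $\gamma(t)$ recorded around \eqref{ss}---and then to read off $\cos(\Delta\psi/2)$ from the analytic continuation of $\gamma_1$ as $t$ runs from $-\infty$ to $+\infty$. I would first dispose of the energy-independence claim: by Remark \ref{A=1} the time rescaling $t\mapsto t/A$ carries the solution at an arbitrary energy to the one with $A=1$ on the level \eqref{eq:enlev}, and since the limit steady-state rotations and the spatial angle between their axes are unaffected by a reparametrization of time, $\Delta\psi$ cannot depend on the energy. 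It therefore suffices to compute it on the fixed level, where $\omega_1,\omega_2$ are given by \eqref{omegas} and everything is expressed through $p$ and $d$.

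The core step would be the monodromy continuation. Starting from \eqref{gam_0}, one has $\gamma_1=-\sqrt{1-z}\,\mathcal{F}_1(z)$ with $\mathcal{F}_1(0)=1$, so $\gamma_1\to-1$ as $t\to-\infty$ (i.e. $z\to0$), matching \eqref{ss}. As $t$ passes from $-\infty$ through $0$ to $+\infty$, the variable $z=4/(e^t+e^{-t})^2$ returns to $0$ after encircling $z=1$ once; along this loop $\sqrt{1-z}$ flips sign while $\mathcal{F}_1$ is replaced, according to the Proposition, by $\mathcal{F}_1-2\rmi\exp[\pi\rmi(\beta_1+\beta_2-\alpha_1-\alpha_2-\alpha_3)](\sigma_1\mathcal{F}_1+\sigma_2\mathcal{F}_2+\sigma_3\mathcal{F}_3)$. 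Evaluating the continued expression back at $z=0$ (now $t=+\infty$), I would use that $\mathcal{F}_2,\mathcal{F}_3$ carry the prefactors $z^{1-\beta_1},z^{1-\beta_2}$ with $\mathrm{Re}(1-\beta_{1,2})=\tfrac{1}{2}>0$, hence vanish at the origin, whereas $\mathcal{F}_1(0)=1$; combined with the sign flip of the square root this yields
\[
\lim_{t\to\infty}\gamma_1=1-2\rmi\exp[\pi\rmi(\beta_1+\beta_2-\alpha_1-\alpha_2-\alpha_3)]\,\sigma_1 .
\]

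It then remains to substitute the parameters \eqref{eq:param}. I expect $\beta_1+\beta_2=1$ and $\alpha_1+\alpha_2+\alpha_3=\tfrac{3}{2}$, so the phase reduces to $-\pi\rmi/2$ and combines with the factor $-2\rmi$ to produce exactly $-2\sigma_1$; applying $\sin(\tfrac{\pi}{2}\pm x)=\cos x$ in the definition of $\sigma_1$ then collapses it to $\sigma_1=\cos^2(\pi p/2)/\cos^2(\pi\rmi p/(2d))$, giving $\lim_{t\to\infty}\gamma_1=1-2\sigma_1$. Finally, from the limiting periodic trajectory recorded near \eqref{ss} one has $\cos\Delta\psi=-\lim_{t\to\infty}\gamma_1=2\sigma_1-1$, and the half-angle identity $\cos\Delta\psi=2\cos^2(\Delta\psi/2)-1$ forces $\cos^2(\Delta\psi/2)=\sigma_1$, which is the asserted formula.

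The main obstacle here is the sign and branch bookkeeping rather than any delicate estimate. I must fix consistently the orientation in which $z$ encircles $z=1$ (corresponding to real $t$ crossing $t=0$ along a path pushed slightly off the real axis), the branch of $\sqrt{1-z}$, and the phase convention in the monodromy, and I must verify the geometric relation $\cos\Delta\psi=-\lim_{t\to\infty}\gamma_1$ directly from \eqref{omegas} and the structure of \eqref{eq:whole}, given that the initial axis is encoded by $\gamma_1(-\infty)=-1$. Once these conventions are pinned down, the two cancellations---the factor $\rmi$ against the residual phase, and the $\sin$-to-$\cos$ reduction of $\sigma_1$---are routine, and the half-angle formula closes the argument.
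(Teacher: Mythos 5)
Your proposal follows essentially the same route as the paper: the special solution $\gamma_1=-\sqrt{1-z}\,{\cal F}_1(z)$ from the Lemma, the loop of $z$ around $z=1$ with the sign flip of $\sqrt{1-z}$ and the monodromy of ${\cal F}_1$ from the Proposition, the vanishing of ${\cal F}_2,{\cal F}_3$ at $z=0$, and finally $\cos\Delta\psi=-\lim_{t\to\infty}\gamma_1=2\sigma_1-1$ with the half-angle identity. Your evaluation of the phase as $\exp(-\pi\rmi/2)$ (so that $-2\rmi$ times it gives $-2$) is the correct reading of the monodromy factor and in fact silently repairs the paper's typo "$\exp(\pi\rmi/2)$", while your explicit rescaling argument for energy-independence makes precise what the paper leaves implicit in Remark \ref{A=1}.
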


Note that when the parameter $p$ is odd integer, $\Delta \psi$ is always $\pi$, regardless to value of $q$.

We also add that the formula of Theorem \ref{angle} stands in a perfect correspondence with numerical integration
tests.

It remains to study the case of {\it even} integer $p$, when, according to Theorem \ref{pro:necc}, all the solutions
of the Poisson equations \eqref{eq:whole} are meromorphic,
but are not given by truncated generalized hypergeometric series.
This will be one of the subjects of next subsection.

\subsection{Differential Galois analysis}
Here we recall  one result of G.~Darboux  which  was formulated in Chapter II of
his \emph{Th\'eorie g\'en\'erale des surfaces}.
\begin{lemma}
\label{lem:dar}
Assume that $\gamma^{(1)}= \gamma^{(1)}(t)$ is a real solution of the Poisson
equation
\begin{equation}
 \label{eq:ryba}
\dot \gamma = \gamma\times\omega(t),
\end{equation}
where $\omega(t)$ is a real vector, and $\gamma^{(1)}$ satisfies
$\langle\gamma^{(1)},\gamma^{(1)}\rangle=1$.  Then the remaining two solutions
of~\eqref{eq:ryba} linearly independent with $\gamma^{(1)}$  can be found with
by a single quadrature.
\end{lemma}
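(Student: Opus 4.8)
The plan is to exploit the orthogonal (rotational) structure of the Poisson flow and reduce the reconstruction of the two missing solutions to the determination of a single rotation angle about the known axis $\gamma^{(1)}(t)$.

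First I would rewrite \eqref{eq:ryba} as $\dot\gamma = A(t)\gamma$ with the skew-symmetric matrix $A(t)$ representing the map $v\mapsto v\times\omega(t)$. Skew-symmetry gives $\frac{d}{dt}\langle\gamma,\tilde\gamma\rangle=0$ for any two solutions, so the fundamental matrix lies in $SO(3)$: norms, inner products and cross products of solutions are preserved. In particular, any solution starting orthogonal to $\gamma^{(1)}(0)$ stays orthogonal to $\gamma^{(1)}(t)$, so the two sought solutions may be taken to be of unit length and orthogonal to $\gamma^{(1)}$ for all $t$. Geometrically, the fundamental matrix is determined by the image $\gamma^{(1)}(t)$ of one axis together with a single scalar, the rotation angle about that axis, which is the only quantity still to be found.

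To implement this I would build, by purely algebraic operations on the known vector $\gamma^{(1)}(t)$ (for instance Gram--Schmidt against a fixed reference vector), a smooth orthonormal moving frame $e_1(t)=\gamma^{(1)}(t)$, $f(t)$, $g(t)$ with $e_1\times f=g$; this step requires no integration. I then set $\gamma^{(2)}=\cos\phi\, f+\sin\phi\, g$ and $\gamma^{(3)}=-\sin\phi\, f+\cos\phi\, g$, where $\phi=\phi(t)$ is the unknown angle, so that $\{e_1,\gamma^{(2)},\gamma^{(3)}\}$ is automatically an orthonormal, correctly oriented frame. Substituting $\gamma^{(2)}$ into \eqref{eq:ryba} and projecting the residual $\dot\gamma^{(2)}-\gamma^{(2)}\times\omega$ onto the frame, the components along $\gamma^{(2)}$ and along $e_1$ vanish identically, while the component along $\gamma^{(3)}$ yields a genuine equation that, using $\langle\dot f,g\rangle=-\langle f,\dot g\rangle$, collapses to the scalar first-order equation
$$
\dot\phi=-\langle\omega,e_1\rangle-\langle\dot f,g\rangle,
$$
whose right-hand side is a known function of $t$. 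Integrating it is the single quadrature promised, and $\gamma^{(3)}$ is then obtained algebraically.

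The main obstacle I anticipate is verifying that the frame-projected system really collapses to one scalar equation, i.e. that the $e_1$- and $\gamma^{(2)}$-components of the residual vanish automatically. This is where the rotational structure must be used carefully: the $\gamma^{(2)}$-component is handled by norm conservation $\|\gamma^{(2)}\|\equiv 1$, whereas the $e_1$-component requires the identities $\langle\dot f,e_1\rangle=\langle\omega,g\rangle$ and $\langle\dot g,e_1\rangle=-\langle\omega,f\rangle$, which follow from differentiating $\langle f,e_1\rangle=\langle g,e_1\rangle=0$ and inserting $\dot e_1=e_1\times\omega$ together with $e_1\times f=g$, $e_1\times g=-f$. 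Once these consistency checks are in place, the construction produces two solutions independent of $\gamma^{(1)}$ by one quadrature, which is exactly the claim; I would also note that the same computation is valid over $\mathbb{C}$, as needed for the later application to the complexified Poisson equations.
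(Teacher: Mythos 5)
Your proof is correct, but it follows a genuinely different route from the paper's. The paper (following Darboux) restricts the equation to the unit sphere and passes to the complex coordinates $u_1=(\gamma_3+1)/(\gamma_1-\rmi\gamma_2)$, $u_2=-(\gamma_1+\rmi\gamma_2)/(\gamma_3+1)$, which satisfy a Riccati equation $\dot u=A+Bu+Cu^2$; the known real unit solution $\gamma^{(1)}$ then furnishes \emph{two} Riccati solutions $u_0$ and $-1/u_0^{\ast}$, and the classical cross-ratio identity
\[
 \frac{u-u_0}{u-u_1}=C_0\exp\left[\int^t C(s)\,(u_0(s)-u_1(s))\,ds\right]
\]
yields the general solution by one quadrature. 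You instead exploit the $SO(3)$ structure directly: an algebraically constructed orthonormal frame $e_1=\gamma^{(1)},f,g$, the ansatz $\gamma^{(2)}=\cos\phi\,f+\sin\phi\,g$, and the observation that the residual's components along $\gamma^{(2)}$ and $e_1$ vanish identically (your identities $\langle\dot f,e_1\rangle=\langle\omega,g\rangle$, $\langle\dot g,e_1\rangle=-\langle\omega,f\rangle$ check out, as does the resulting scalar equation $\dot\phi=-\langle\omega,e_1\rangle-\langle\dot f,g\rangle$). Your version is more elementary, stays entirely real, and makes explicit that the missing data is a rotation angle about the known axis; two caveats are worth recording: (i) the Gram--Schmidt frame requires a reference vector never parallel to $\gamma^{(1)}(t)$ (harmless, since the curve and its antipode miss almost every direction, but it should be said), and (ii) your closing remark that the computation carries over to $\C$ is too quick --- the frame construction uses normalization and the nondegeneracy $\langle\gamma^{(1)},\gamma^{(1)}\rangle=1$, and the complex/Liouvillian setting (including isotropic solutions with $\langle\gamma^{(1)},\gamma^{(1)}\rangle=0$) is precisely where the paper's Riccati formulation earns its keep, since the Riccati equation and the associated second-order linear equation \eqref{eq:lin} are reused verbatim in the differential Galois analysis that follows. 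So the paper's proof buys infrastructure for later sections, while yours buys geometric transparency for the lemma itself.
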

\begin{proof}
 Let us restrict equation~\eqref{eq:ryba} to the unit sphere
$\langle\gamma,\gamma\rangle=1$. As coordinates on it we choose
\begin{equation}
\label{eq:ui}
 u_1=\frac{\gamma_3+1}{\gamma_1-\rmi \gamma_2}, \quad
u_2=-\frac{\gamma_1+\rmi \gamma_2}{\gamma_3+1},
\end{equation}
so
\begin{equation}
\label{eq:gi}
 \gamma_1 = \frac{1-u_1u_2}{u_1-u_2}, \quad  \gamma_2 =
\rmi\frac{1+u_1u_2}{u_1-u_2}, \quad  \gamma_3 = \frac{u_1+u_2}{u_1-u_2}.
\end{equation}
Now, it is easy to check that
 $u_1$ and $u_2$  satisfy  the following Riccati equation
\begin{equation}
\label{eq:rig}
 \dot u=A +Bu +Cu^2,
\end{equation}
where
\begin{equation}
 \label{eq:ABC}
A=\frac{1}{2}(\omega_2(t)-\rmi\omega_1(t)), \quad B=-\rmi \omega_3(t), \quad
C=\frac{1}{2}(\omega_2(t)+\rmi\omega_1(t))
\end{equation}
Formulae~\eqref{eq:gi} show that knowing two different solutions
of~\eqref{eq:rig} we determine one solution of~\eqref{eq:ryba}. On the other
hand, having one real solution $\gamma^{(1)}$ of~\eqref{eq:ryba} we have two
different solutions of Ricatti equation~\eqref{eq:rig} which are given by
formulae~\eqref{eq:ui}.

Let  $u_0$ be a solution of~\eqref{eq:rig}. Then, as it is easy to check
$u_1=-1/u_0^\star$, where $z^\star$ denotes the complex conjugate of $z$, is
also a solution of this equation. It is well known that the general solution $u$
of a Ricatti is determined by its three different particular solutions $u_0$,
$u_1$ and $u_2$. Namely, we have
\begin{equation}
 \frac{(u-u_0)}{(u-u_1)}\frac{(u_2-u_1)}{(u_2-u_0)} = C_0
\end{equation}
where $C_0$ is an arbitrary constant. From this formula we deduce that knowing
only two different solutions $u_0$ and $u_1$, the general solution can be obtain
by a single quadrature
\begin{equation}
 \frac{u-u_0}{u-u_1}=C_0\exp\left[\int^t C(s)(u_0(s)-u_1(s))\rmd s\right].
\end{equation}
\end{proof}

Putting $u=-v/C$   transform equation~\eqref{eq:rig} to the form
\begin{equation}
\label{eq:r2}
 \dot v = A_1 +B_1 v -v^2, \mtext{where} A_1=-AC, \quad B_1= B+ \frac{\dot
C}{C}.
\end{equation}
Then $v=\dot w/w$ where
\begin{equation}
\label{eq:lin}
 \ddot w -B_1 \dot w -A_1w=0.
\end{equation}

The general question is whether we can find an explicit form of solutions of
Poisson equation~\eqref{eq:ryba} for given functions $\omega_i(t)$. A proper
setting to this question is given by the differential algebra. Namely, we assume
that
$\omega_i(t)$ are elements of certain differential field $L$  with $\C$ as a
subfield of constants. In the considered
Suslov problem $L=\C(\rme^t)$.

Let $K \supset L$ be the Picard--Vessiot extension for equation~\eqref{eq:ryba}.
We say that the equation is solvable iff extension $K\supset L$ is a Liouvillian
extension. In this case all solution of the equation are Liouvillian.  By the
known Kolchin theorem, the extension $K\supset L$ is a Liouvillian extension iff
the identity component of the differential Galois group $G(K/L)$ is solvable.

\begin{proposition}
\label{prop:1}
 Assume that equation~\eqref{eq:ryba} has a solution  Liouvillian over
$L=\C(\rme^t)$. Then all its solutions are Liouvillian.
\end{proposition}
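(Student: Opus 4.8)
The plan is to show that Liouvillian solvability of \eqref{eq:ryba} is a property of the equation rather than of an individual solution, by pushing the hypothesis onto the associated Riccati equation \eqref{eq:rig} and then invoking the classical fact that a Riccati equation for which one solution is known is integrable by quadratures. The point is that every transformation in the chain relating \eqref{eq:ryba} to \eqref{eq:rig} --- the stereographic formulae \eqref{eq:ui}, \eqref{eq:gi} and, if one prefers, the further passage $v=\dot w/w$ to the linear equation \eqref{eq:lin} --- is built from rational operations, extraction of square roots and integrations, each of which keeps one inside a Liouvillian tower over $L=\C(\rme^t)$. Hence it is enough to produce a single Liouvillian solution of \eqref{eq:rig}: from it the whole one-parameter family of Riccati solutions, and then via \eqref{eq:gi} the full solution space of \eqref{eq:ryba}, will be Liouvillian as well.

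First I would extract a Liouvillian Riccati solution from the given Liouvillian $\gamma$. Since \eqref{eq:ryba} is linear and $A(t)$ is of cross-product type, $\langle\gamma,\gamma\rangle$ is a constant $k\in\C$. If $k\neq0$ I rescale $\gamma$ onto the unit sphere and read off from \eqref{eq:ui} two Liouvillian functions $u_1,u_2$; by the computation underlying Lemma~\ref{lem:dar} both solve \eqref{eq:rig}. If $k=0$ the sphere chart \eqref{eq:gi} degenerates, but then the null relation $\gamma_1^2+\gamma_2^2=-\gamma_3^2$ together with $\omega_3\equiv0$ shows, by a direct differentiation, that the homogeneous ratio $u=\gamma_3/(\gamma_1-\rmi\gamma_2)$ obeys $\dot u=\tfrac12(\omega_2-\rmi\omega_1)+\tfrac12(\omega_2+\rmi\omega_1)u^2$, which is exactly \eqref{eq:rig} with $B=0$ (cf. \eqref{eq:ABC}); this $u$ is again Liouvillian. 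In either case, after choosing a stereographic chart whose denominator does not vanish identically on the chosen solution, I obtain at least one Liouvillian solution $u_0$ of the Riccati equation.

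With $u_0$ in hand the substitution $u=u_0+1/\xi$ converts \eqref{eq:rig} into the linear first-order equation $\dot\xi=-(B+2Cu_0)\xi-C$, whose coefficients lie in the Liouvillian extension generated by $u_0$ over $L$ (note $A,B,C\in L$, since $\omega_1,\omega_2\in\C(\rme^t)$ by \eqref{eq:o12}). An integrating factor solves it in two quadratures, so the general $\xi$, and hence the general Riccati solution $u$, is Liouvillian. Finally \eqref{eq:gi} reconstructs solutions of \eqref{eq:ryba} from pairs of Riccati solutions; taking three pairwise distinct Liouvillian values of $u$ I get three Liouvillian solutions of \eqref{eq:ryba} whose vectors at a generic $t_0$ are in general position on the sphere, hence linearly independent. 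As the Liouvillian solutions form a $\C$-subspace of the three-dimensional solution space, these span it, and every solution of \eqref{eq:ryba} is Liouvillian.

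The step I expect to be the main obstacle is the first one: the clean, uniform extraction of a Liouvillian Riccati solution from $\gamma$, which forces the separate treatment of the non-degenerate and the isotropic ($k=0$) cases and a check that the chosen chart is non-degenerate on the given solution; everything afterwards is the textbook ``one solution $\Rightarrow$ quadrature'' mechanism for Riccati equations, and the only substantive bookkeeping is that no step leaves the Liouvillian tower over $\C(\rme^t)$. As a uniform alternative that bypasses the case analysis, I would argue Galois-theoretically: \eqref{eq:ryba} preserves $\langle\gamma,\gamma\rangle$ and is trace-free, so $G(K/L)\subseteq SO(3,\C)\cong PSL(2,\C)$; a differential automorphism sends Liouvillian elements to Liouvillian elements, so the space $V_{\mathrm{Liou}}$ of Liouvillian solutions is a nonzero $G$-stable subspace; since the only non-solvable connected subgroup of $PSL(2,\C)$ is the whole group and acts irreducibly on $\C^3$, a non-solvable $G^0$ would force $V_{\mathrm{Liou}}=\C^3$, i.e. all solutions Liouvillian, contradicting Kolchin's theorem. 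Hence $G^0$ is solvable and, again by Kolchin, all solutions are Liouvillian.
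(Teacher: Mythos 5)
Your proposal is correct, and on the non-isotropic case it coincides with the paper's own proof: normalize $\langle\gamma,\gamma\rangle=1$, read Riccati solutions off \eqref{eq:ui}, and use the classical ``one known solution reduces a Riccati to a first-order linear equation, hence to quadratures'' step with $u=u_0+1/\xi$, $\dot\xi=-(B+2Cu_0)\xi-C$ --- this is literally the paper's argument. The genuine divergence is in the isotropic case $\langle\gamma,\gamma\rangle=0$. The paper does not return to the Riccati equation there: it observes that, $\omega(t)$ being real, $\Re\gamma^{(1)}$ and $\Im\gamma^{(1)}$ are Liouvillian real solutions, that they are linearly independent (a nonzero null vector has nonzero, orthogonal, equal-length real and imaginary parts), and that $(\Re\gamma^{(1)})\times(\Im\gamma^{(1)})$ is again a solution --- the cross product of two solutions of $\dot\gamma=\gamma\times\omega$ is a solution --- so a full Liouvillian basis is obtained at once. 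You instead stay inside the Riccati framework by checking that the ruling coordinate $u=\gamma_3/(\gamma_1-\rmi\gamma_2)$ of a null solution satisfies the same equation \eqref{eq:rig} with $B=0$; this computation is correct (it uses $(\gamma_1-\rmi\gamma_2)(\gamma_1+\rmi\gamma_2)=-\gamma_3^2$ exactly as you indicate), so your route is more uniform, both cases feeding the same quadrature mechanism, whereas the paper's null case is computation-free but silently uses that real and imaginary parts of a Liouvillian solution are Liouvillian. You are also more explicit than the paper at the last step, explaining why Liouvillian-ness of all Riccati solutions yields a spanning set of Liouvillian solutions of \eqref{eq:ryba} via \eqref{eq:gi}; the paper simply asserts this. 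Finally, your appended Galois-theoretic alternative ($G\subseteq SO(3,\C)\cong PSL(2,\C)$, whose proper connected subgroups are all solvable, and which acts irreducibly on $\C^3$) is genuinely different from anything in the paper and is sound in outline, but note that it rests on the assertion that the set of Liouvillian solutions is a $G$-stable subspace; that invariance is true yet is itself a nontrivial fact (proved, e.g., via homogeneity of a universal extension or Singer's results on Liouvillian solutions), so as written that variant trades one classical lemma for another rather than being self-contained.
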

\begin{proof}
Let $\gamma^{(1)}$ be a Liouvillian solution of~\eqref{eq:ryba}. Assume that
$\langle \gamma^{(1)}, \gamma^{(1)}\rangle=1$.   From the proof of
Lemma~\ref{lem:dar} we know that  to find other solutions of~\eqref{eq:ryba} it
is enough to find a general solution of the Ricatti equation~\eqref{eq:rig}. But
 $\gamma^{(1)}$  gives us one Liouvillian solution $u_0$ of
equation~\eqref{eq:rig}. The general solution of~\eqref{eq:rig} is given by
\begin{equation}
 u= u_0 + \frac{1}{y},
\end{equation}
where $ y$ satisfies linear equation
\begin{equation}
 \dot y = -(B+2Cu_0)y-C.
\end{equation}
Hence $y$ is Liouvillian and all solution of ~\eqref{eq:rig} are Liouvillian,
and thus all solutions of~\eqref{eq:ryba} are Liouvillian.

If $\gamma^{(1)}$ is a Liouvillian solution of~\eqref{eq:ryba} and $\langle
\gamma^{(1)}, \gamma^{(1)}\rangle=0$, then, as $\omega_i(t)$ are real,  $\Re
\gamma^{(1)}$ and $\Im  \gamma^{(1)}$ are linearly independent real solutions
of~\eqref{eq:ryba}.  Moreover
\begin{equation}
 \gamma= \alpha_1 \Re \gamma^{(1)} + \alpha_2 \Im  \gamma^{(1)} + \alpha_3 ( \Re
\gamma^{(1)})\times (\Im  \gamma^{(1)})
\end{equation}
is a solution of~\eqref{eq:ryba} for arbitrary $\alpha_1$, $\alpha_2$ and
$\alpha_3$. In effect all the solutions off~\eqref{eq:ryba} are Liouvillian.
\end{proof}

From the above proved fact we obtain the following.
\begin{proposition}
 Poisson equation~\eqref{eq:ryba} has  a Liouvillian solution iff all solutions
of linear equation~\eqref{eq:lin} are Liouvillian.
\end{proposition}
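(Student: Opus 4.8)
The plan is to run a single Liouvillian solution back and forth along the chain of transformations already assembled in the excerpt, namely the Poisson equation \eqref{eq:ryba} $\leftrightarrow$ the Riccati equation \eqref{eq:rig} $\leftrightarrow$ the Riccati equation \eqref{eq:r2} $\leftrightarrow$ the linear equation \eqref{eq:lin}, observing that every link is built only from operations that preserve membership in a Liouvillian extension of $L=\C(\rme^t)$: $L$-rational operations, one differentiation, one quadrature, and one exponential of a quadrature. The endpoints of the chain (\emph{having a single Liouvillian solution}) must then be tied to the two "all solutions are Liouvillian" statements: at the Poisson end this is supplied by Proposition~\ref{prop:1}, and at the linear end by reduction of order.

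For the implication $\Leftarrow$, I would first suppose that all solutions of \eqref{eq:lin} are Liouvillian and pick a nonzero solution $w$. Then the logarithmic derivative $v=\dot w/w$ is a Liouvillian solution of \eqref{eq:r2}, and $u_0=-v/C$, with $C\in L$ from \eqref{eq:ABC}, is a Liouvillian solution of the Riccati equation \eqref{eq:rig}. Since the reconstruction formulae \eqref{eq:gi} express $\gamma$ through \emph{two} distinct Riccati solutions, I would produce the second one exactly as in the proof of Proposition~\ref{prop:1}: from the single Liouvillian solution $u_0$ the general solution $u=u_0+1/y$ is Liouvillian, with $y$ solving a first-order linear equation. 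Choosing two values of the integration constant yields two distinct Liouvillian solutions $u_1,u_2$, and inserting them into \eqref{eq:gi} delivers a Liouvillian solution $\gamma$ of \eqref{eq:ryba}.

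For the implication $\Rightarrow$, suppose \eqref{eq:ryba} has a Liouvillian solution. By Proposition~\ref{prop:1} all of its solutions are Liouvillian, so I may fix one of unit norm $\gamma^{(1)}$; this conveniently bypasses any separate treatment of the norm-zero case. Substituting $\gamma^{(1)}$ into \eqref{eq:ui} gives a Liouvillian solution $u_0$ of \eqref{eq:rig}, whence $v_0=-Cu_0$ solves \eqref{eq:r2} and $w_1=\exp\int v_0\,\rmd t$ is a Liouvillian solution of \eqref{eq:lin}. To promote this single solution to the whole space, I would invoke reduction of order: the Wronskian $W=\exp\int B_1\,\rmd t$ is Liouvillian because $B_1\in L$, so the second independent solution $w_2=w_1\int (W/w_1^2)\,\rmd t$ is Liouvillian as well, and hence every solution of \eqref{eq:lin} is Liouvillian.

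The steps are individually routine, so the main obstacle is bookkeeping rather than conceptual: I must verify that each coefficient produced along the way ($C$, $B_1$, and the Wronskian $W$) lies in a Liouvillian extension of $L$ (in fact each is rational over $L$ or the exponential of an $L$-integral), so that no link degrades Liouvillianity, and I must take seriously that the reconstruction \eqref{eq:gi} of $\gamma$ genuinely requires two distinct Riccati solutions rather than one. It is precisely at this last point that Proposition~\ref{prop:1} does the essential work, since it upgrades a single Liouvillian Riccati (equivalently Poisson) solution into the full family from which \eqref{eq:gi} can draw its two inputs.
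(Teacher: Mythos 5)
Your proof is correct and follows essentially the same route as the paper: both directions run along the chain \eqref{eq:ryba} $\leftrightarrow$ \eqref{eq:rig} $\leftrightarrow$ \eqref{eq:r2} $\leftrightarrow$ \eqref{eq:lin}, with Proposition~\ref{prop:1} doing the essential work at the Poisson end, and your $\Leftarrow$ direction (take a nonzero $w$, pass to $v=\dot w/w$, then $u_0=-v/C$, then generate a second Riccati solution as in Proposition~\ref{prop:1} and feed two of them into \eqref{eq:gi}) is exactly the paper's argument, made slightly more explicit. The one genuine divergence is how the $\Rightarrow$ direction concludes that \emph{all} solutions of \eqref{eq:lin} are Liouvillian: the paper propagates the ``all solutions'' property down the whole chain --- all Poisson solutions are Liouvillian by Proposition~\ref{prop:1}, hence via \eqref{eq:ui} all solutions of \eqref{eq:rig} and of \eqref{eq:r2} are, hence each solution $w$ of \eqref{eq:lin} is $\exp\int v$ with $v$ Liouvillian --- whereas you construct a single Liouvillian solution $w_1$ and finish with reduction of order, using that the Wronskian $\exp\int B_1$ is Liouvillian because $B_1\in L$. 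Both finishes are valid; yours is marginally more self-contained, since the paper's step ``\eqref{eq:ui} show that all solutions of the Riccati equation are Liouvillian'' tacitly relies on the fact that every Riccati solution can be realized from some unit-norm Poisson solution (by pairing it with a second Riccati solution and inverting \eqref{eq:gi}), a point that your Wronskian argument never needs.
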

\begin{proof}
If equation~\eqref{eq:ryba} has a Liouvillian solution, then, by
Proposition~\ref{prop:1}, all its solutions are Liouvillian and
formulae~\eqref{eq:ui} show that all solutions of the Riccati
equation~\eqref{eq:rig}, as well as of the transformed
Riccati equation~\eqref{eq:r2}, are Liouvillian.  As $v=\dot w/w$, $w$ is
Liouvillian, and thus all solutions of~\eqref{eq:lin} are Liouvillian.

On the other hand, if equation~\eqref{eq:lin} has a non-zero Liouvillian
solution, then all solutions of Riccati equation~\eqref{eq:r2}, as well as
of~\eqref{eq:rig} are Liouvillian. Thus, by formulae~\eqref{eq:gi}, Poisson
equation~\eqref{eq:ryba} has a Liouvillian solution, so, by
Proposition~\ref{prop:1}, all its solutions are Liouvillian.
\end{proof}

For the considered problem $\dot{ \gamma}=\gamma\times\omega(t)$,
equation~\eqref{eq:lin} reads
\begin{equation}
 w'' +p(z)w' +q(z)w=0,\quad z=\rme^t,\quad '\equiv \frac{\rmd \phantom{z}}{\rmd
z},
\label{eq:2rzedu}
\end{equation}
where
\begin{equation}
\begin{split}
 p(z)&:= \frac{z^2(z^2+4\rmi c z -4)-1}{z(z^2+1)(z^2+2\rmi c z -1)}, \\
q(z)&:=\frac{p^2}{4(c^2-1)}\frac{1}{z^2}+\frac{p^2}{(z^2+1)^2}.
\end{split}
\end{equation}
This equation is derived under assumption that relation~\eqref{eq:cpr} holds
true. In what follows we will work with the reduced form of the above equation,
i.e., with
\begin{equation}
\label{eq:rlin}
 y''=r(z)y \mtext{where} r(z)
=\frac{P}{Q}=\frac{1}{2}p'(z)+\frac{1}{4}p(z)^2-q(z).
\end{equation}
where $P$ is a polynomial of eight degree
\begin{equation}
 P=\sum_{i=0}^8p_i z^i,
\end{equation}
with the following coefficients
\begin{gather}
 p_0=p_8=d^2+p^2, \quad p_1^\star=p_7=4 \rmi c \left(2 d^2+p^2\right), \quad
p_2=p_6=-4 \left(4 d^2+p^2\right),\\
p_3^\star=p_5=4 \rmi c \left(\left(4
   p^2-2\right) d^2+p^2\right), \\
p_4=-2 \left(\left(-8 d^2+8
   \left(d^2+2\right)
   p^2+1\right) d^2+5
   p^2\right), \quad d^2=c^2-1,
\end{gather}
and
\begin{equation}
\label{eq:Q}
Q=4 \left(c^2-1\right) z^2
   \left(z^2+1\right)^2
   \left(2 c z-\rmi
   \left(z^2-1\right)\right)^2.
\end{equation}
Note, that we assumed that $c\neq\pm 1$, i.e., taking into account
relation~\eqref{eq:cpr}, $p\neq0$. This case we consider later.

From~\eqref{eq:Q} it follows that equation~\eqref{eq:rlin} has six regular
singularities $s_i$, $i=0,\ldots 5$, and $s_0=0$, $s_1=s_2^\star=\rmi$,
$s_3=-\rmi(c+d)$, $s_4=-\rmi(c-d)$, and $s_5=\infty$. The respective differences
of exponents $\Delta_i$ at these points are following
\begin{equation}
 \label{eq:Deltai}
\Delta_0=\Delta_5=\rmi\frac{p}{d},\quad \Delta_1=\Delta_2=p, \quad
\Delta_3=\Delta_4=2.
\end{equation}
We observe immediately that at four singularities $s_1$, $s_2$, $s_3$ and $s_4$
differences of exponents are integer thus at local expansions of solutions
around of this points logarithmic terms can appear. Here we only sketch how this
happens, for more details see e.g.~\cite{Whittaker:35::}.
For linear equations of the second order local solutions $y_1$ and $y_2$ around
a singular point $s_i$ are postulate in the form infinite series with leading
term that are equal to the exponents at this point.
\[
 \rho_i^{(1/2)}=\dfrac{1}{2}(1\pm\Delta_i).
\]
In the case when differences of exponents $\Delta_i=\rho_i^{(1)}-\rho_i^{(2)}$
is integer such two expansions are in general not functionally  independent and
then $y_2$  is constructed in different way from $y_1$ by a quadrature that
usually leads to logarithms.

By direct calculations one can check that solutions around  $s_3$ and $s_4$ do
not have such terms independently on value of $p$. For singular points $s_1$ and
$s_2$ direct calculations for small $p$ show that the following conjecture is
true
\begin{conjecture}
Singular points $s_1$ and $s_2$ are non-logarithmic for any odd integer and
logarithmic for any even $p\in\Z^{\ast}$.
\end{conjecture}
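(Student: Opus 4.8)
My plan is to convert the question into the vanishing of a single Frobenius resonance coefficient and then exploit a hidden involution of equation~\eqref{eq:rlin}. First I would localise \eqref{eq:rlin} at $s_1=\rmi$, the analysis at $s_2=-\rmi$ being the complex conjugate. Writing $r(z)=r_{-2}(z-\rmi)^{-2}+r_{-1}(z-\rmi)^{-1}+\cdots$, the exponents are $\rho_{1,2}=(1\pm p)/2$ with $\rho_1-\rho_2=\Delta_1=p$, so $r_{-2}=(p^2-1)/4$. For integer $p$ the Frobenius series attached to the smaller exponent $\rho_2=(1-p)/2$ obeys a recurrence of the shape $n(n-p)\,a_n=\Theta_n$, where $\Theta_n$ couples $a_n$ to the earlier coefficients through the Laurent tail of $r$; at the resonant order $n=p$ the left factor vanishes, and local solutions contain a logarithm if and only if the obstruction $\Theta_p$ is nonzero. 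Thus the conjecture is equivalent to the two assertions $\Theta_p=0$ for odd $p$ and $\Theta_p\neq0$ for even $p\in\Z^{\ast}$.

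The decisive step is to notice that \eqref{eq:rlin} is invariant under the M\"obius involution $\iota:z\mapsto-1/z$. It fixes $s_1=\rmi$ and $s_2=-\rmi$ and interchanges $s_0\leftrightarrow s_5$ and $s_3\leftrightarrow s_4$; using $c^2-d^2=1$ together with the conjugate-palindromic relation $p_{8-j}(-1)^j=p_j$ among the coefficients of $P$ (valid since $c,p$ are real) and the explicit form \eqref{eq:Q} of $Q$, one verifies the identity $r(-1/z)=z^4 r(z)$. As $\iota$ is M\"obius its Schwarzian vanishes, so $r$ transforms as a projective connection and $y''=r y$ is genuinely $\iota$-invariant. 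Since $\iota'(\rmi)=-1$, there is a local coordinate $\zeta$ at $s_1$ in which $\iota$ acts as $\zeta\mapsto-\zeta$, and invariance of the projective connection then forces the potential to be an \emph{even} function of $\zeta$. Consequently the Frobenius recurrence couples only coefficients of equal index parity: each $a_n$ depends on $a_{n-2},a_{n-4},\dots$ alone, so the odd- and even-indexed coefficients decouple.

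This parity decoupling settles the odd case immediately. When $p$ is odd the obstruction $\Theta_p$ is assembled from $a_{p-2},a_{p-4},\dots,a_1$, all of \emph{odd} index. For $0<n<p$ the indicial factor $n(n-p)$ is nonzero, so the odd-indexed coefficients satisfy a homogeneous recurrence started from $a_{-1}=0$ and hence vanish identically; therefore $\Theta_p=0$ and $s_1,s_2$ are non-logarithmic. The same decoupling shows the larger-exponent solution is a pure series in even powers of $\zeta$, as it must be. The quotient construction below gives an independent explanation: for odd $p$ the exponent difference downstairs is the half-integer $p/2$, so no resonance occurs at all.

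The even case is the real obstacle. For even $p$ the obstruction $\Theta_p$ now involves the \emph{even}-indexed coefficients $a_{p-2},\dots,a_0$ with $a_0\neq0$, so parity no longer forces $\Theta_p=0$ and one must prove $\Theta_p\neq0$ for every even $p\in\Z^{\ast}$---precisely what the small-$p$ computations suggest but do not establish. I would attack this by descending to the quotient of $\mathbb{CP}^1$ by $\iota$ through the invariant coordinate $w=z-1/z$, a double cover branched exactly at the fixed points $s_1,s_2$, which sends $s_1\mapsto 2\rmi$, $s_2\mapsto-2\rmi$, the pair $s_3,s_4$ to $-2\rmi c$, and $s_0,s_5$ to $\infty$. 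Since $s_3,s_4$ are apparent (their exponent difference is $2$ and, as already checked, they carry no logarithm for any $p$), after clearing that apparent singularity the pushed-down equation has three genuine singular points and is of hypergeometric type, with exponent difference $p/2$ at $w=2\rmi$. A logarithm at $s_1$ is equivalent to one at $w=2\rmi$ downstairs, and for integer $p/2$ the latter is controlled by the classical criterion that a prescribed ratio of Gamma functions of the hypergeometric parameters stay finite and nonzero. The hard part---and the reason the statement remains a conjecture---is to pin down these descended parameters explicitly and to prove the Gamma ratio (equivalently, the closed-form value of the even-index sum $\Theta_p$) never vanishes for even $p$; a purely computational alternative is to solve the even-index recurrence in closed form and exhibit $\Theta_p$ as a manifestly nonzero product.
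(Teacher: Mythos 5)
The statement you attempted is not proved in the paper at all: it is stated as a \emph{conjecture}, supported only by explicit Frobenius expansions computed case by case up to $p=10$ (the authors say plainly that they cannot check the presence of logarithmic terms for arbitrary $p$). So there is no ``paper proof'' to match yours against; the only question is whether your argument is correct. Its first half is, and it is a genuine advance. I verified your key identity: with the paper's coefficients one has $p_0=p_8$, $p_1=-p_7$, $p_2=p_6$, $p_3=-p_5$ (forced by $p_{8-j}=p_j^{\star}$ together with the odd-index coefficients being purely imaginary), hence $P(-1/z)=z^{-8}P(z)$, while the factored form \eqref{eq:Q} gives $Q(-1/z)=z^{-12}Q(z)$ (the product of the two finite roots of $2cz-\rmi(z^2-1)$ is $-1$ because $c^2-d^2=1$, which is why $s_3,s_4$ are swapped). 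Therefore $r(-1/z)=z^4r(z)$, the Schwarzian of a M\"obius map vanishes, and $z\,y(-1/z)$ solves \eqref{eq:rlin} whenever $y$ does. Since $\iota(z)=-1/z$ fixes $s_{1,2}=\pm\rmi$ with $\iota'(\pm\rmi)=-1$ (the chart $\zeta=(z-\rmi)/(z+\rmi)$ conjugates $\iota$ to $\zeta\mapsto-\zeta$), the transported potential is even in $\zeta$, the recurrence $n(n-p)a_n=\Theta_n$ couples only indices of equal parity, and for odd $p$ the obstruction $\Theta_p$ involves only the odd-index coefficients, which vanish inductively. This settles the non-logarithmic (odd-$p$) half of the conjecture for \emph{all} odd $p$ — strictly more than the paper establishes. (For context: the odd half can also be extracted indirectly from the paper's own Sections 5--6, since the explicit meromorphic solutions there force all Riccati solutions to be single-valued, hence scalar local monodromies for \eqref{eq:rlin}; but your argument is direct and self-contained.)

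The genuine gap is the even case, in your proposal just as in the paper. You honestly flag this, but your proposed repair would not work as sketched. The quotient by $\iota$ under $w=z-1/z$ produces a Fuchsian equation with \emph{four} singular points, $2\rmi$, $-2\rmi$, $-2\rmi c$, $\infty$ (exponent differences $p/2$, $p/2$, $2$, $\rmi p/d$), and the apparent singularity at $w=-2\rmi c$ cannot be ``cleared'': a scalar gauge transformation $y\mapsto f\,y$ shifts both exponents by the same amount, so $\{-1/2,3/2\}$ can become $\{0,2\}$ but never $\{0,1\}$; the point remains a singular point of the equation even though its monodromy is $\pm$ the identity. The descended equation is thus of Heun type with an apparent singularity — an accessory-parameter situation — not a hypergeometric equation, and the classical Gamma-ratio criterion you invoke does not apply directly. (A rigidity argument on the projective monodromy, which only sees three points, might rescue this line, but that requires irreducibility and a careful branch-point dictionary $M_{\mathrm{up}}=M_{\mathrm{down}}^2$, none of which is in your sketch.) So as a proof of the full conjecture the proposal is incomplete; as a proof of the odd half it is correct, new relative to the paper, and worth writing up.
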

 For singular points $s_1$ and $s_2$ the difference of exponents depends on $p$
and when $p$ grows we have calculate the expansions of $w_1$ with more and more
terms. By this reason we cannot to check the presence of logarithmic terms for
an arbitrary $p$ but we can make such calculations effectively for any chosen
value of $p$ and we did this up to $p=10$.

The presence of logarithmic terms restricts very strongly the possible forms of
elements from the differential Galois group, thus at first up to end of this
section we will consider the case $p$ even and we prove.
\begin{proposition}
\label{pro:a}
 The differential Galois group of equation~\eqref{eq:rlin} for even
$p\in\Z^{\ast}$  is $\mathrm{SL}(2,\C)$.
\end{proposition}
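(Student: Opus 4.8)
The plan is to pin down the differential Galois group $G$ of \eqref{eq:rlin} by eliminating every proper algebraic subgroup of $\mathrm{SL}(2,\C)$. Since \eqref{eq:rlin} is already in the reduced form $y''=r(z)y$ with $r\in\C(z)$, the group $G$ is an algebraic subgroup of $\mathrm{SL}(2,\C)$, and by the standard Lie--Kolchin/Kovacic classification it is of exactly one of four types: (i) reducible (triangularisable); (ii) irreducible but imprimitive, i.e.\ conjugate into the infinite dihedral group $D_\infty$ (the normaliser of the diagonal torus); (iii) primitive and finite; or (iv) all of $\mathrm{SL}(2,\C)$. First I would record the local monodromy data dictated by \eqref{eq:Deltai}: at $s_0$ and $s_5=\infty$ the eigenvalues of the local monodromy are $-\rme^{\mp\pi p/d}$, which are \emph{distinct} and different from $\pm1$ because $p/d\in\mathbb{R}\setminus\{0\}$; at $s_3,s_4$ the exponents are $3/2,-1/2$, and, since these points are non-logarithmic, $M_{s_3}=M_{s_4}=-\id$; finally at $s_1,s_2=\pm\rmi$ the exponents are $(1\pm p)/2$, so for even $p$ both eigenvalues equal $-1$, while the presence of logarithmic terms (the content of the Conjecture above for even $p$) forces $M_{s_1}$ and $M_{s_2}$ to be non-semisimple, each conjugate to $-\bigl(\begin{smallmatrix}1&1\\0&1\end{smallmatrix}\bigr)$.

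The non-semisimplicity at $s_1,s_2$ disposes of types (ii) and (iii) immediately. An element conjugate to $-\bigl(\begin{smallmatrix}1&1\\0&1\end{smallmatrix}\bigr)$ has infinite order, so $G$ cannot be finite, ruling out (iii). Every element of $D_\infty$ is either diagonal or anti-diagonal, hence semisimple; since $M_{s_1}\in G$ is not semisimple, $G$ cannot be conjugate into $D_\infty$, ruling out (ii). Thus $G$ is either reducible or the whole of $\mathrm{SL}(2,\C)$, and the entire burden of the proof falls on excluding type (i).

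This last step is the main obstacle. Reducibility of $G$ is equivalent to the existence of a common monodromy-invariant line in $\C^2$, i.e.\ to a solution $y_1=\exp\!\bigl(\int\omega\,\rmd z\bigr)$ of \eqref{eq:rlin} with $\omega\in\C(z)$ solving the associated Riccati equation $\omega'+\omega^2=r$ (equivalently, cf.\ \eqref{eq:rig}, a rational solution of the projected equation). I would attack this through Kovacic's Case~1: at each second-order pole $s_i$ the admissible residue of $\omega$ must be one of the two exponents $\tfrac12(1\pm\Delta_i)$, and the degree $N$ of the accompanying polynomial factor, a prescribed $\Z$-linear combination of the chosen exponents, must be a non-negative integer. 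The complex exponents $\pm\rmi p/(2d)$ at $s_0$ and $s_5$ mean $N$ can be integral only when their imaginary contributions cancel, i.e.\ for opposite sign choices at $0$ and $\infty$; imposing $\omega'+\omega^2=r$ on the finitely many surviving sign patterns must then lead to a contradiction. Geometrically this is the statement that the Jordan eigenline $\ell_1$ of $M_{s_1}$ and $\ell_2$ of $M_{s_2}$ cannot simultaneously coincide \emph{and} be eigenlines of the distinct-eigenvalue monodromies $M_{s_0},M_{s_5}$; I would verify $\ell_1\neq\ell_2$, or the failure of compatibility with $M_{s_0}$, from the explicit local expansions. The genuinely hard point is to make this exclusion \emph{uniform in the even integer} $p$: rather than checking each value separately (feasible, as done here up to $p=10$), I would track the leading and subleading coefficients of the putative degree-$N(p)$ polynomial through the recurrence induced by $\Pi''+2\omega_0\Pi'+(\omega_0'+\omega_0^2-r)\Pi=0$, where $\omega_0$ is the fixed residue part of $\omega$, and show that no nonzero such $\Pi$ exists for any even $p$. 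Once (i) is eliminated, only (iv) remains, giving $G=\mathrm{SL}(2,\C)$ and proving Proposition~\ref{pro:a}.
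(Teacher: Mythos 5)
Your reduction runs parallel to the paper's proof up to the decisive step, and that step is missing. Like the paper, you use the logarithmic behaviour at $s_1,s_2$ for even $p$ (conditional on the Conjecture) to produce a non-semisimple monodromy element, which rules out the finite and infinite-dihedral cases; this part is correct and is exactly the paper's argument. But the case you yourself call ``the main obstacle'' --- reducibility --- is not actually excluded. Your degree count in Kovacic's Case~1 only uses the cancellation $\rho_0+\rho_5=0$ forced by the complex exponents at $s_0,s_5$; with the sign choices at $s_1,s_2$ left free, taking the smaller exponent $(1-p)/2$ at both gives candidate degrees $N\in\{p,p-2,p-4\}$, which are non-negative for every even $p\ge 4$ (and $N\in\{0,2\}$ survives for $p=2$). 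So the degree obstruction alone does not close the case, and what remains in your plan --- proving that the recurrence for the polynomial $\Pi$ has no nonzero solution ``uniformly in the even integer $p$'' --- is precisely the unproven hard part; as written it is a statement of intent, not a proof. The same applies to the alternative monodromy-eigenline argument you sketch: it is a connection problem between local data at different singular points and is not carried out.

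The idea you are missing is that the logarithmic information at $s_1,s_2$ can be used a \emph{second} time, inside the reducibility analysis, and it kills the case instantly. An exponential solution $y=R(z)\prod_{i=0}^4(z-s_i)^{\rho_i}$, $R\in\C[z]$, as in \eqref{eq:expono}, is free of logarithms; at a logarithmic regular singular point the only log-free local solutions are multiples of the one carrying the \emph{larger} exponent. Hence at $s_1$ and $s_2$ one is forced to take $\rho_1=\rho_2=(1+p)/2$ when $p>0$ (resp.\ $(1-p)/2$ when $p<0$), and then
\begin{equation*}
\deg R \;=\; -\sum_{i=0}^{5}\rho_i \;=\; -(1+p)-(\rho_3+\rho_4)\;\in\;\{-p,\,-p-2,\,-p-4\},
\qquad \rho_3+\rho_4\in\{3,1,-1\},
\end{equation*}
which is negative for every $p>0$ (symmetrically for $p<0$). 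This contradiction is uniform in $p$, requires no recurrence analysis and no case-by-case checking, and is exactly how the paper finishes: the group is not triangularisable, hence it is $\mathrm{SL}(2,\C)$.
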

\begin{proof}
The presence of logarithms means that  the differential Galois group of the
considered equation cannot be neither a subgroup of the infinite dihedral group
(because it contains a non-diagonalizable element), nor a finite group. If it is
contained in triangular group, then by the same reason it cannot be its proper
subgroup i.e. diagonal subgroup.
Thus we have two possibilities that it is contained
\begin{enumerate}
 \item in the whole triangular group,
\item is $\mathrm{SL}(2,\C)$.
\end{enumerate}

 If the first possibility  occurs, then equation~\eqref{eq:rlin} has an
exponential solution of the form
\begin{equation}
 y=R(z)\prod_{i=0}^4(z-s_i)^{\rho_i}, \mtext{where} R(z)\in\C[z],
\label{eq:expono}
\end{equation}
and $\rho_i\in\{\rho_i^{(1)},\rho_i^{(2)}\}$ are exponents at points $s_i$, for
$i=0,\ldots,4$. Expanding this solution at infinity we find that
\begin{equation}
 n=-\sum_{i=0}^5\rho_i\geq 0, \mtext{where} n=\deg R.
\end{equation}
We have
\begin{equation}
 \rho_i=\frac{1}{2}\left(1\pm \Delta_i\right) \mtext{for}i=0,\ldots, 4,
\end{equation}
and
\begin{equation}
 \rho_5= -\frac{1}{2}\left(1\pm\Delta_5\right).
\end{equation}
The above implies that we have to choose $\rho_0$ and $\rho_5$ such that $\rho_0
+ \rho_5=0$. As point $s_1$ and $s_2$ are logarithmic, then the only choice is
$\rho_1^{(1)}=\rho_2^{(1)}=(1+p)/2$ for $p>0$, or
$\rho_1^{(2)}=\rho_2^{(2)}=(1-p)/2$ for $p<0$. At $s_3$ and $s_4$ both exponents
$\rho_i^{(1)}=3/2$ and $\rho_i^{(2)}=-1/2$ for $i=3,4$, are possible. Thus for
$p>0$, $n\in\{-p,-p-2,-p-4\}$ and for $p<0$
we have $n\in\{p,p-2,p-4\}$ but all these admissible values are negative. This
shows that the equation does not have an exponential solution \eqref{eq:expono}.
Thus,  the differential Galois group of the equation is $\mathrm{SL}(2,\C)$.
\end{proof}

By the Kolchin theorem this means that in this case the Poisson equations are not
solvable or, more precisely, the following theorem holds.

\begin{theorem}
 Euler-Poisson equations in the meromorphic case defined by~\eqref{eq:necc2} for
even $p\in\Z^{\ast}$ are not solvable in the class of Liouvillian functions.
\end{theorem}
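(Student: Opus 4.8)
The plan is to derive the theorem as a formal corollary of Proposition~\ref{pro:a} together with the reduction chain already assembled in this section, so that no further hard analysis is needed. First I would recall the equivalence established just above: by Proposition~\ref{prop:1} a single Liouvillian solution of $\dot\gamma=\gamma\times\omega(t)$ forces all its solutions to be Liouvillian, and by the subsequent proposition this happens if and only if every solution of the associated scalar equation~\eqref{eq:lin} is Liouvillian. Thus ``solvability in the class of Liouvillian functions'' for the Euler--Poisson system is equivalent to the Picard--Vessiot extension $K\supset L=\C(\rme^t)$ of~\eqref{eq:lin} being a Liouvillian extension.

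Next I would pass to the reduced equation~\eqref{eq:rlin}, $y''=r(z)y$, obtained from~\eqref{eq:2rzedu} by the standard gauge substitution $y=w\exp\bigl(\tfrac12\int p(z)\,\mathrm{d}z\bigr)$. Since $p(z)$ is rational over $L=\C(z)$, the factor $\exp(\tfrac12\int p)$ is Liouvillian, so $w$ is Liouvillian if and only if $y$ is; consequently~\eqref{eq:2rzedu} and~\eqref{eq:rlin} have differential Galois groups with simultaneously solvable (or non-solvable) identity components. By the Kolchin theorem quoted above, the extension is Liouvillian precisely when that identity component is solvable.

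Finally I would invoke Proposition~\ref{pro:a}: for even $p\in\Z^\star$ the differential Galois group of~\eqref{eq:rlin} is the whole group $\mathrm{SL}(2,\C)$. This group is connected, hence equal to its own identity component, and it is not solvable (it is simple modulo its finite center $\{\pm\id\}$). Therefore the Kolchin criterion fails,~\eqref{eq:rlin} admits no nonzero Liouvillian solution, and by the equivalences of the first two steps neither do the Poisson equations; since solvability of the full Euler--Poisson system requires its Poisson part to be Liouvillian, the whole system fails to be solvable. This proves the theorem.

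I expect essentially no obstacle in this final argument, because the genuine difficulty has already been absorbed into Proposition~\ref{pro:a}, whose proof rests on two separate inputs: the logarithmic behaviour at the singular points $s_1,s_2$ for even $p$ (the even case of the Conjecture), which eliminates a finite or dihedral group, and the exponent count ruling out any exponential solution~\eqref{eq:expono}, which eliminates the reducible (triangular) case. The only point I would take care to state cleanly is that the passage from the three-dimensional Poisson system through the Riccati equation to the scalar equation~\eqref{eq:rlin} preserves Liouvillian solvability in both directions, which is exactly the content of the two preceding propositions.
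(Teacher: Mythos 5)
Your proposal is correct and follows essentially the same route as the paper: the theorem is there stated as an immediate consequence of Proposition~\ref{pro:a} via the Kolchin theorem, with the Liouvillian equivalences (Proposition~\ref{prop:1}, the reduction to~\eqref{eq:lin}, and the gauge passage to~\eqref{eq:rlin}) doing exactly the work you describe. You have merely made explicit the connectedness and non-solvability of $\mathrm{SL}(2,\C)$ and the Liouvillian-preserving nature of the gauge transformation, which the paper leaves implicit.
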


\section{Explicit meromorphic solutions and first integrals} \label{mer_sol}
\subsection{Meromorphic solutions}

As was observed in subsection~\ref{sec:solhyper},
for odd $p\in\N$ the third order hypergeometric equation~\eqref{eq:genhip} for the variable $u=\gamma_1/\sqrt{z-1}$
has three independent quasi-polynomial solutions ${\cal F}_{1,2,3}(z)$ given by \eqref{pol_sol} and \eqref{F_23_pol}, that is,
\begin{gather*}
{\cal F}_{1}(z) = Q(z), \quad
{\cal F}_{2,3} (z(y)) = \frac {\varkappa_\pm}{ (1+y)^{p-1} (y-1)}\, y^{ \frac{d\pm\rmi p}{2d} }\, P_{\pm}(y), \\
z=\frac{4}{(e^t+e^{-t})^2}, \quad y=e^{2t},
\end{gather*}
where $Q(z)$, $P_{\pm}(y)$ are polynomials of degree $(p-1)/2$ and $p-1$ respectively.

Now, taking into account the relation between the solutions $w(z)$ of this hypergeometric equation
and those of the Poisson equation (\ref{eq:genhipprep}), we arrive at

\begin{theorem} \label{merom_gam_1}
For odd $p\in\N$ the third order equation \eqref{eq:genhipprep} for $\gamma_1(t)$ has independent
meromorphic solutions
\begin{align} \label{gam1}
\gamma_1^{(1)} (t) &= \frac {e^t-e^{-t}}{(e^t+ e^{-t})^p} \sum_{k=0}^{(p-1)/2} a_k (e^t+e^{-t})^{p-1-2k}, \\
& \quad a_k = 4^k \, \frac{(2k-1)!!}{(2k)!!}\,
\frac {d^{2k} (p^2-1)\cdots (p^2-(2k-1)^2) }{(d^2+p^2)\cdots ((2k-1)d^2+p^2)} \, , \quad a_0=1, \nonumber \\
\gamma_1^{(2)}(t) &= \frac {e^{ (1-\frac{\rmi p}{d})t } } {(1+ e^{2t})^p} \sum_{k=0}^{(p-1)/2} b_k e^{2k t}
\equiv e^{-\frac{\rmi p}{d} t }\, \frac {b_0 e^{-(p-1)t}+ \cdots + b_{p-1} e^{(p-1)t}}{(e^t+ e^{-t})^p}  ,
\label{gam2} \\
\gamma_1^{(3)}(t) & =\frac {e^{(1+\frac{\rmi p}{d})t}}{(1+ e^{2t})^p} \sum_{k=0}^{(p-1)/2} b_k^* e^{2k t}
\equiv e^{\frac{\rmi p}{d} t }\, \frac {b_0^* e^{-(p-1)t}+ \cdots + b_{p-1}^* e^{(p-1)t}}{(e^t+ e^{-t})^p} ,
\label{gam3}
\end{align}
where $(\cdot)^{\ast}$ denotes the complex conjugation, and the coefficients $b_i$ are uniquely determined from
the polynomial product
\begin{gather*}
\sum_{k=0}^{p-1} b_k y^{k}  =  \sum_{k=0}^{(p-1)/2} \frac{(a)_{k} (b)_{k}}{(c)_{k} k!}\, (-y)^k \cdot
\sum_{k=0}^{(p-1)/2} \frac{(a)_{k} (b)_{k}}{(c)_{k} k!}
\, \frac{(2d k + d+ p\rmi )(4d k -dp + \rmi p)  }{dp-2dk - \rmi p } (-y)^k  , \\
a =\frac{2-p}{2}+\frac{\rmi p}{2d}, \quad b =\frac{1-p}{2},\quad c= \dfrac{3}{2}+ \dfrac{\rmi p}{2d}.
\end{gather*}
\end{theorem}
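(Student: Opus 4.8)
The plan is to invert the chain of transformations that, in Subsection~\ref{sec:solhyper}, reduced the third order equation \eqref{eq:genhipprep} to the generalized hypergeometric equation \eqref{eq:genhip}. That reduction consists of the change of independent variable \eqref{t->z} together with the substitution $\gamma_1=\sqrt{z-1}\,u(z)$; both operations are invertible over the field of constants, so each solution ${\cal F}_i$ of \eqref{eq:genhip} produces a solution $\gamma_1^{(i)}=\sqrt{z-1}\,{\cal F}_i(z(t))$ of \eqref{eq:genhipprep}, and independent ${\cal F}_i$ yield independent $\gamma_1^{(i)}$ (multiplication by a fixed function and a biholomorphic change of variable preserve linear independence). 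For odd $p\in\N$ the three solutions are the quasi-polynomials \eqref{pol_sol} and \eqref{F_23_pol}, and they are independent because the parameters \eqref{eq:param} satisfy $\beta_1,\beta_2,\beta_1-\beta_2\notin\Z$ (they are in fact non-real for $p\neq0$). Hence it suffices to substitute $z=4/(e^t+e^{-t})^2$, equivalently $y=e^{2t}$ with $z=4y/(1+y)^2$, into $\sqrt{z-1}\,{\cal F}_i$ and simplify; throughout I note that $z-1=-(e^t-e^{-t})^2/(e^t+e^{-t})^2$, so the square-root factor equals $(e^t-e^{-t})/(e^t+e^{-t})=(y-1)/(y+1)$ up to the constant $\rmi$ and an overall sign, which are absorbed into the (immaterial) normalisation of each solution.

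For ${\cal F}_1$ I would feed the polynomial \eqref{pol_sol} term by term, using $z^k=4^k/(e^t+e^{-t})^{2k}$. The factor $(e^t-e^{-t})/(e^t+e^{-t})$ coming from $\sqrt{z-1}$ and the $4^k$ coming from $z^k$ together reproduce exactly \eqref{gam1}, the coefficient $a_k$ carrying that $4^k$; the presentation with $(e^t+e^{-t})^p$ in the denominator makes the poles of order $p$ at $t=\pi\rmi/2$ (mod $\pi\rmi$) manifest.

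For ${\cal F}_{2,3}$ the decisive point is that the fractional power of $z$, which creates a branch point in the $z$-plane, becomes single-valued in $t$. Writing \eqref{F_23_pol} with $y=e^{2t}$ and multiplying by $\sqrt{z-1}=(y-1)/(y+1)$, the factor $(y-1)$ cancels the denominator of \eqref{F_23_pol} and raises the power of $(1+y)$ from $p-1$ to $p$, while $y^{(d\pm\rmi p)/(2d)}=e^{(1\pm\rmi p/d)t}$ is entire in $t$. After using $(1+e^{2t})^p=e^{pt}(e^t+e^{-t})^p$ one obtains precisely \eqref{gam2}, \eqref{gam3}, where $\sum_k b_k y^k=P_\pm(y)=F(a_\pm,b_\pm;c_\pm;-y)\,\hat F(a_\pm,b_\pm;c_\pm;-y)$ is read off from \eqref{poly_F}, \eqref{poly_F2} (the overall constant $\varkappa_\pm$ being immaterial). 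Since $b_+=b_-=(1-p)/2$ is real whereas $a_\pm$ and $c_\pm$ are complex conjugate, the two coefficient sets are conjugate, which yields the relation between the coefficients of $\gamma_1^{(2)}$ and $\gamma_1^{(3)}$ stated in the theorem.

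Finally, meromorphicity is immediate from the reduced forms: each $\gamma_1^{(i)}$ is an entire exponential $e^{\pm\rmi p t/d}$ (trivial for $i=1$) times a rational function of $e^t$ whose only finite singularities are poles of order at most $p$ at $t=\pi\rmi/2$ (mod $\pi\rmi$), in agreement with Theorem~\ref{pro:necc}. I expect the only genuine work, as opposed to transcription, to lie in the ${\cal F}_{2,3}$ step: checking that $\hat F$ indeed has degree $(p-1)/2$, i.e. that its nominal top term of degree $(p-1)/2+1$ annihilates, so that the product $P_\pm=F\,\hat F$ has degree exactly $p-1$; and fixing the branch of the square root consistently with the real boundary condition \eqref{ss}, so that the three $\gamma_1^{(i)}$ are simultaneously correct solutions of the underlying real Poisson system.
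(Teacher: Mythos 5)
Your proposal is correct and is essentially the paper's own proof: the paper simply substitutes \eqref{pol_sol} and \eqref{F_23_pol} into $\gamma_1=\sqrt{z-1}\,{\cal F}_i(z)$ with $z=4/(e^t+e^{-t})^2$ (equivalently $y=e^{2t}$) and simplifies, exactly as you do. Your additional remarks — linear independence from $\beta_1,\beta_2,\beta_1-\beta_2\notin\Z$, the absorption of the constant $\rmi$ from the branch of $\sqrt{z-1}$, and the cancellation of $(y-1)$ against the denominator of \eqref{F_23_pol} — are sound refinements of details the paper leaves implicit.
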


The proof is straightforward: Substituting the expression \eqref{pol_sol} into
$$
\gamma = \sqrt{z-1} {\cal F}_1(z) = \frac{ e^t-e^{-t} }{e^t + e^{-t} } \,
{\cal F}_1\left(\frac{ 4 }{(e^t + e^{-t})^2 }  \right)
$$
yields \eqref{gam1}. Next, in view of \eqref{F_23_pol}, the product
$$
\gamma = \frac{ e^t-e^{-t} }{e^t + e^{-t} } \, {\cal F}_{2,3} (z(y))
$$
gives \eqref{gam2}, \eqref{gam3}. $\square$
\medskip

Note that all these solutions satisfy
\begin{equation} \label{asymp}
\gamma_1^{(1)}(-t) = \gamma_1^{(1)}(t), \quad
\lim_{t\to \pm\infty} \gamma_1^{(1)}(t) =\mp 1, \quad \lim_{t\to \pm\infty}\gamma_1^{(2,3)}= 0.
\end{equation}

Now let $\vgamma^{(i)} =( \gamma^{(i)}_1 , \gamma^{(i)}_2, \gamma^{(i)}_3)^T$ be the corresponding
vector solutions of the Poisson equations \eqref{eq:ps}. Given $\gamma^{(i)}_1$ as in Theorem \ref{merom_gam_1},
the two remaining components of $\vgamma^{(i)}$, can be calculated by differentiations, using the system \eqref{eq:ps}.

Namely, let us write the above meromorphic solutions in form
$$
\gamma_1^{(1)} = \frac{P_1 (x)}{(1+x^2)^p}, \quad
\gamma^{(2)}_1 = e^{- \frac{\rmi p}{d} t }\, \frac{x P_{2}(x) }{(1+x^2)^p},
\qquad x=e^t ,
$$
$P_1(x), P_{2} (x)$ being polynomials of degree $2p$ and $2(p-1)$ respectively.
Then, in view of \eqref{eq:ps}, one has
\begin{equation}
\gamma_3= -\dfrac{ x \gamma_1'}{\omega_2}, \quad \text{and then} \quad
\gamma_2= \dfrac{\omega_2 \gamma_1- x \gamma_3'}{\omega_1}, \qquad ()'=\frac{d}{dx} .
\label{eq:gam3}
\end{equation}
Applying these formulas and using the expression (\ref{omegas}) for $\omega_1, \omega_2$, we get
\[
 \begin{split}
& \gamma_3^{(1)}= \dfrac{d ( (1 + x^2)P_1'(x)-2 p x P_1(x))}{2p\sqrt{d^2+1}(1 + x^2)^p},\\
& \gamma_2^{(1)}= -\dfrac{x[d^2 (x^2+1) ((x^2+1)P_1''(x)+(2 - 4 p) x P_1'(x))+2 p (2 p
+ d^2 (2 p-1) (x^2+1)) P_1(x)]}{2p^2\sqrt{d^2+1}(1 + x^2)^p(x^2-1)}.
\end{split}
\]
and
\[
 \begin{split}
& \gamma_3^{(2)}= \dfrac{x^{-\frac{\rmi p}{d}}[d x ( x^2+1)P_2'+(d - \rmi p + (d
- \rmi p - 2 d p) x^2) P_2]}{2p\sqrt{d^2+1}(1 + x^2)^p},\\
& \gamma_2^{(2)}= \dfrac{x^{-\frac{\rmi p}{d}}W}{2p^2\sqrt{d^2+1}(1 +
x^2)^p(x^2-1)},\\
&W=-d x (x^2+1)(d x (x^2+1)P_2''(x)+2 (d - 2 d ( p-1) x^2 - \rmi p (x^2+1))P_2'(x))\\
&+(p^2 (x^2-1)^2 - 2 d^2 (1 + p (2 p-3)) x^2 (x^2+1) -
   \rmi d p (x^2+1) (-1 + (4 p-3) x^2)) P_2(x)
\end{split}
\]
Expressions for $\gamma_i^{(3)}$ are complex conjugations of those for $\gamma_i^{(2)}$.

It remains to mention that all the components of $\vgamma^{(i)}$ have poles of order $p$ at
$t=\frac{\pi}{2}\rmi + \pi\rmi N$, $N\in {\mathbb Z}$.


\paragraph{Remark.} Note that for real moments of inertia $I_{ij}$ and, therefore, real constant $d$,
the solution \eqref{gam1} is real. To get real solutions for the other components one just takes real and imaginary
parts of \eqref{gam2}, \eqref{gam3}, as well as $\gamma^{(i)}_2, \gamma^{(i)}_3$,
using the formula $e^{\rmi \alpha t}= \cos(\alpha t)+ \rmi \sin{\alpha t}$.

In view of \eqref{asymp}, as $t\to \pm\infty$ the real vector solution $\vgamma^{(1)}$ starts and ends along the axis
$(1,0,0)$, although for finite $t$ its evolution can be complicated (see Figures 4 and 5 below).

\subsection{Examples of meromorphic solutions.}
Below we show two examples of independent meromorphic solutions $\{\vgamma_1, \vgamma_2, \vgamma_3\}$
of the Poisson equations for $p=1$ and $p=3$. 
First, the solutions \eqref{gam1}, \eqref{gam2} were taken, their real and imaginary
parts were extracted, then the formulae \eqref{eq:gam3} were applied.
We write the obtained {\it real} solutions in terms of hyperbolic and trigonometric functions.

The simples solutions for $p=1$ are
\[
 \begin{split}
 &
\vgamma^{(1)} =\left(\tanh(t),-\dfrac{\sech(t)}{\sqrt{1+d^2}},\dfrac{d\sech(t)}{
\sqrt{1+d^2}}\right),\\
&
\vgamma^{(2)}=\left(\cos\left(\frac{t}{d}\right)\sech(t),\dfrac{\cos(\frac{t}{d})\tanh(t)-d\sin(\frac{t}{d})}{\sqrt
{d^2+1}},-\dfrac{d\cos\left(\frac{t}{d}\right)\tanh(t)+\sin(\frac{t}{d})}{\sqrt{d^2+1}}\right),\\
&
\vgamma^{(3)}=\left(\sin\left(\frac{t}{d}\right)\sech(t),\dfrac{\sin(\frac{t}{d})\tanh(t)+d\cos(\frac{t}{d})}{\sqrt
{d^2+1}},-\dfrac{d\sin(\frac{t}{d})\tanh(t)-\cos(\frac{t}{d})}{\sqrt{d^2+1}}\right),
 \end{split}
\]
where $\sech (t)= 2/(e^t+e^{-t})$.

As one can check, these vectors form an orthonormal basis, $\langle \vgamma^{(i)}, \vgamma^{(j)}\rangle =\delta_{ij}$,
and, as $t\to \pm \infty$, we have $\vgamma^{(1)}=(\pm 1, 0,0)$ and, respectively,
\begin{align*}
\vgamma^{(2)} & = \left( 0, \frac { \cos (t/d) - d\sin (t/d) }{\sqrt{d^2+1}},
- \frac {d \cos (t/d) + \sin (t/d) }{\sqrt{d^2+1}} \right), \\
\vgamma^{(2)} & = \left( 0, -\frac {\cos (t/d) + d\sin (t/d) }{\sqrt{d^2+1}},
\frac {d \cos (t/d) - \sin (t/d) }{\sqrt{d^2+1}} \right).
\end{align*}
Since $d>0$, as $t \to -\infty$ (respectively $t \to -\infty$), in the body frame the vectors
$ \vgamma^{(2)}, \vgamma^{(3)}$ perform uniform rotations in the plane (0,1,1) in counterclockwise (resp. clockwise)
direction.

This implies the limit {\it spatial} motions of the body are rotations about the same axis,
with the same angular velocity $1/d$ and
in the {\it same} direction, however the rotation angle undergoes the phase shift, which is computed to
be $\arccos\frac{1-d^2}{1+d^2}$.
An example of the above solutions is illustrated in Fig. 4. 

\begin{figure}[h,t] \label{sp.fig}
\begin{center}
\includegraphics[height=0.4\textwidth]{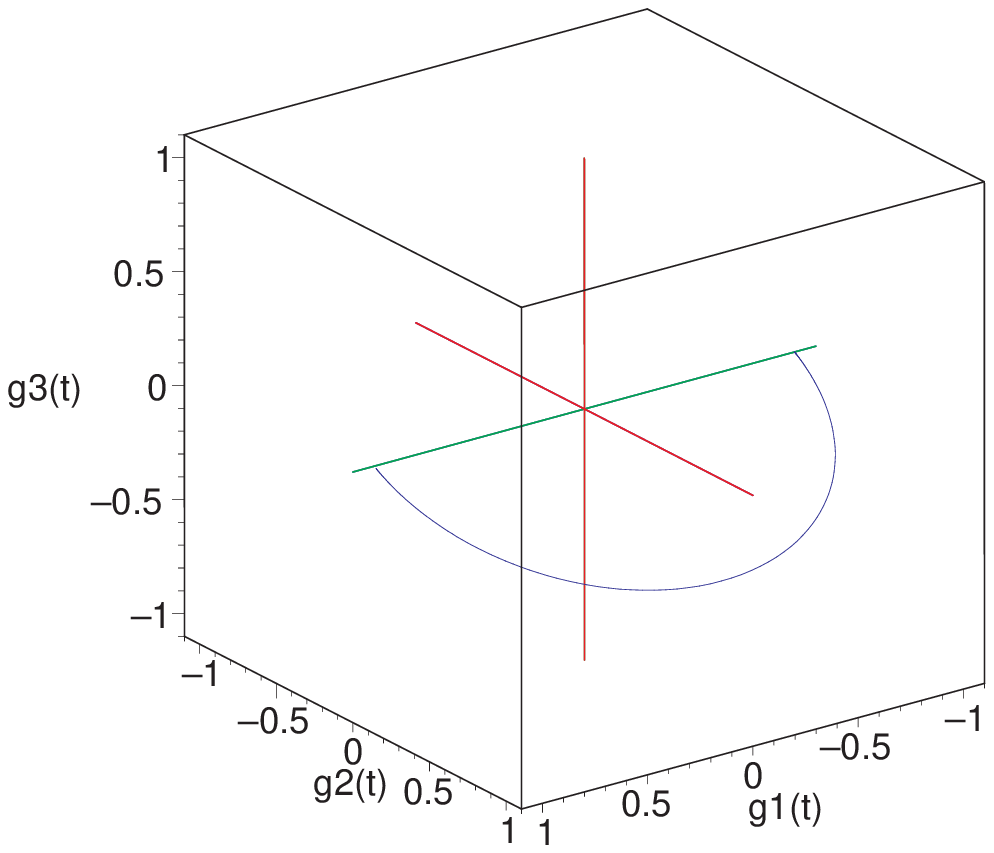} \quad
\includegraphics[height=0.4\textwidth]{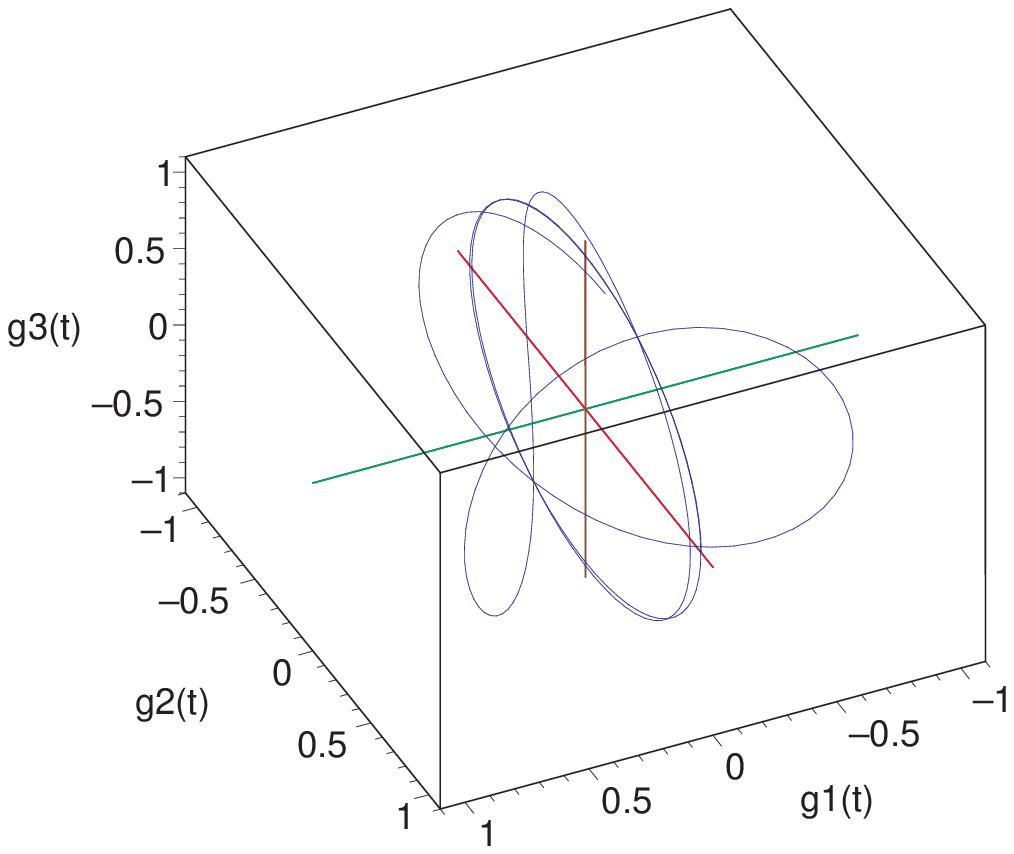}
\end{center}
\caption{\footnotesize Independent and orthogonal vector
solutions $\vgamma^{(1)}$ and $ \vgamma^{(2)}$ for $p=1$ and  $d=1/2$ in the body frame}
\end{figure}

For $p=3$ the corresponding solutions are more complicated:
\begin{align*}
 \gamma^{(1)} & =\alpha\Big( \sqrt{d^2+1}\Big(9 + d^2 - 4 d^2\sech(t)^2\Big)\tanh(t),\left[
4d^2\sech(t)^2-9(d^2+1)\right]\sech(t),\\
& \hskip 7cm d\left[3(d^2+1)-4d^2\sech(t)^2\right]\sech(t) \Big), \\
\gamma^{(2)} & = \dfrac{\alpha}{(e^{2 t}+1)^3}\Big(-2 \sqrt{d^2+1}\, e^t \Big[\Big(
       d^2 (3 - 10 e^{2 t} + 3 e^{4 t})-9 (1 + e^{2 t})^2\Big) \cos\left(\frac{3 t}{d}\right)\\
& \hskip 8cm + 12 d (e^{4 t}-1) \sin\left( \frac{3 t}{d}\right )\Big], \\
& \quad 4 e^{3 t}\Big[d \Big(( d^2-15) \cosh(2 t)+9 + d^2 \Big) \cosh(t) \sin\left(\frac{3 t}{d}\right) \\
& \hskip 4cm  +  \Big ((9 - 7 d^2) \cosh(2 t)+9 + d^2 \Big) \sinh(t) \cos\left(\frac{3 t}{d}\right)\Big], \\
& \qquad \qquad 4 e^{3 t} \Big[ \Big( (7 d^2-9) \cosh(2 t)-9 - 17 d^2 \Big)
\times \cosh(t) \sin\Big(\frac{3 t}{d}\Big)  \\
& \hskip 3cm  + d \Big( (d^2-15) \cosh(2 t)-15 - 7 d^2 \Big) \sinh(t)\cos\left(\frac{3 t}{d}\right)\Big] \Big),
 \end{align*}
and
\begin{align*}
\gamma^{(3)} & = \dfrac{\alpha}{(e^{2 t}+1)^3}\Big(2 \sqrt{d^2+1}\,e^t \Big[ \Big(
      d^2 (3 - 10 e^{2 t} + 3 e^{4 t})-9 (e^{2 t}+1)^2\Big) \sin\left(\frac{3 t}{d}\right)\\
& \hskip 8cm  +12 d (1-e^{4 t}) \cos\left(\frac{3 t}{d}\right)\Big], \\
& \quad 4 e^{3 t} \Big[d  \Big((d^2-15 ) \cosh(2 t)+9 + d^2 \Big)\cosh(t)\cos\left(\frac{3 t}{d}\right)\\
& \hskip 3cm + \Big((7d^2-9) \cosh(2 t)-9 - d^2\Big) \sinh(t)\sin\left(\frac{3 t}{d}\right)\Big], \\
& -4 e^{3 t} \Big[\Big((9 - 7 d^2) \cosh(2 t)+9 + 17 d^2 \Big) \cosh(t)\cos\left(\frac{3 t}{d}\right)\\
& \hskip 3cm + d \Big((d^2-15) \cosh(2 t)-15 - 7 d^2\Big) \sinh(t)\sin\left(\frac{3 t}{d}\right)\Big]
\Big),
 \end{align*}
where
\[
 \alpha=\dfrac{1}{\sqrt{d^2+1} (d^2+9)}.
\]
These vector solutions also form an orthonormal basis and the corresponding spatial motion of the body
is similar to that of the previous case $p=1$. An example of these solutions is illustrated in Fig. 5.

\begin{figure}[ht]
\centerline{\includegraphics[height=0.4\textwidth]{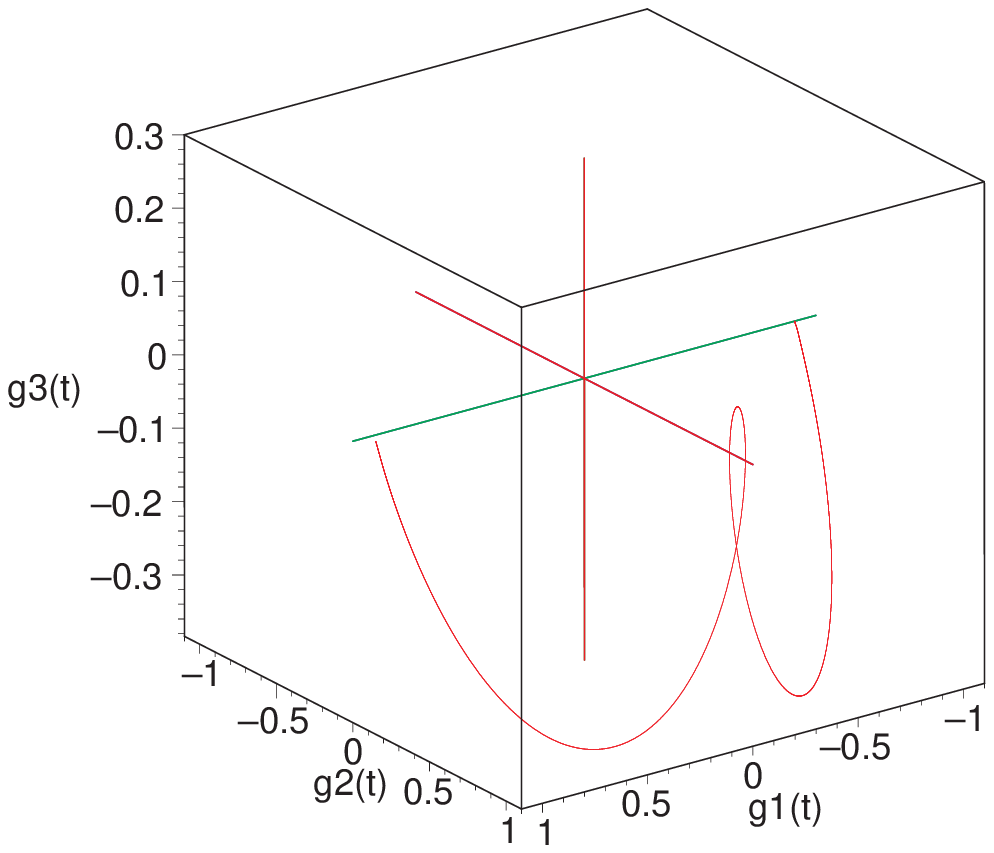}\hspace{0.7cm}
\includegraphics[height=0.4\textwidth]{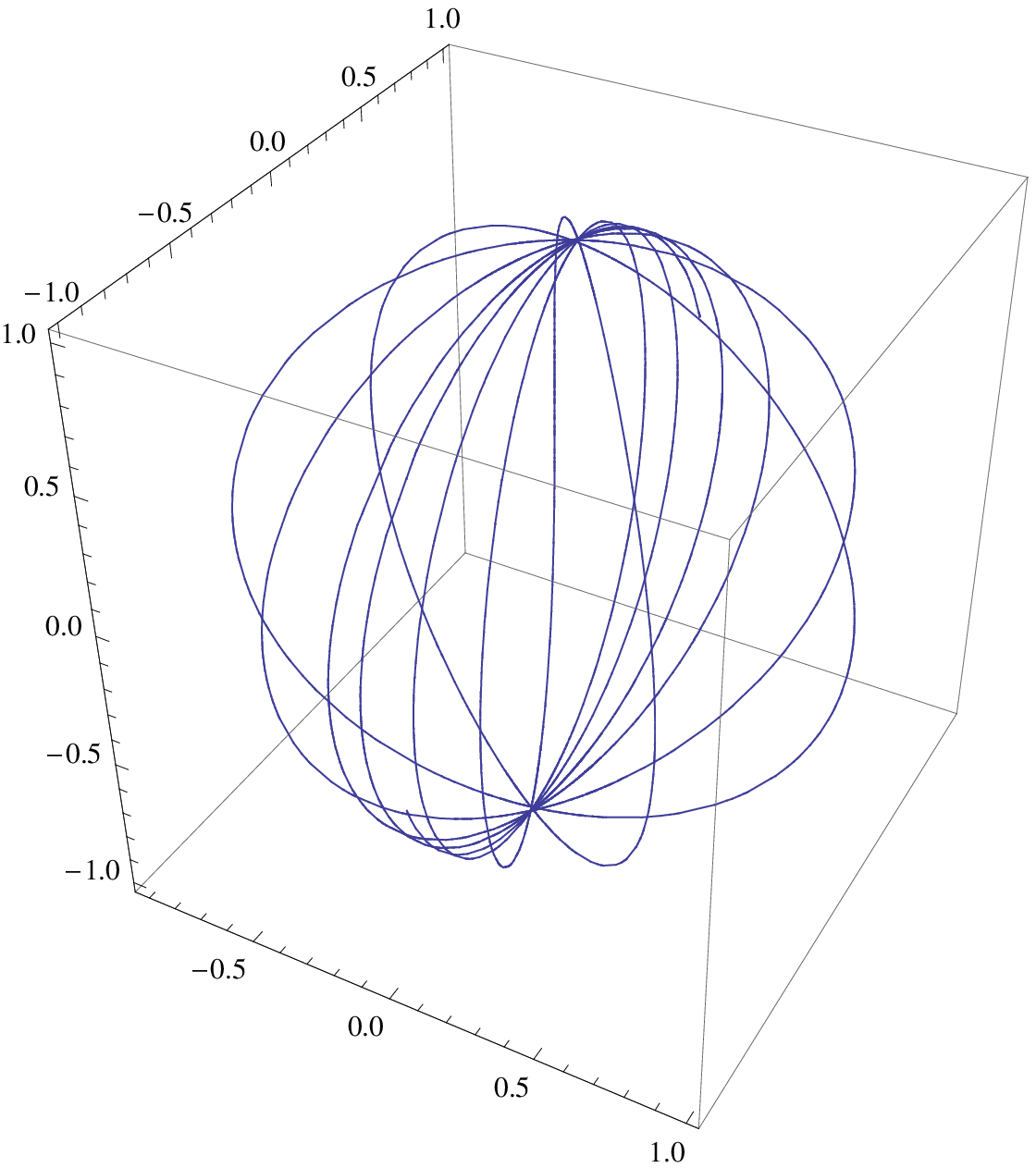}}
\caption{Vector solutions $\gamma^{(1)}$ and $ \gamma^{(2)}$ for $p=3$ and  $d=1/2$}
\end{figure}
\newpage

\subsection{First integrals.}
As was mentioned in the beginning of Section 2, if
$P(t)=(P_1,P_2,P_3)$ is a single-valued vector solution of the Poisson equations and, moreover,
the components of $P(t)$ are single-valued functions of $\omega_1(t), \omega_2(t)$, then
the system~\eqref{gen12}, \eqref{eq:whole} admits an additional first integral
\begin{equation}
 F_3 =P_1(\omega_1,\omega_2)\gamma_1+P_2(\omega_1,\omega_2)\gamma_2+P_3(\omega_1,\omega_2)\gamma_3 .
\label{eq:liniowiec}
\end{equation}
Since this system is homogeneous, then $P_i$ are homogeneous polynomials of a certain degree $k$.

Now we use the meromorphic solutions $\vgamma^{(1)}$ obtained in Section 6.1 for odd $p$ to construct
the corresponding extra integrals. In the simplest case $p=1$, comparing the components of $\vgamma^{(1)}$ with
the solutions \eqref{omegas} for $\omega_1, \omega_2$ and recalling the definition of $p,d$, we easily obtain
\begin{align*}
P_1 (t) & = \frac{\omega_1(t)}{a} = d\omega_1, \\
P_2(t) & = \frac{4 a \omega_1(t)}{b^2} = \frac{d}{d^2+1} \omega_2,  \\
P_3(t) & = - \frac{4 \omega_1(t)}{b^2} = - \frac{d^2}{d^2+1} \omega_2,
\end{align*}
which yields the integral
$$
F_3 = d \omega_1 \gamma_1 + \frac{d}{d^2+1} \omega_2 \gamma_2 - \frac{d^2}{d^2+1} \omega_2 \gamma_3 .
$$

In case of generic odd $p$ the extra integral can be written by means of
the special polynomial solution \eqref{pol_sol} of the generalized hypergeometric equation
\eqref{eq:genhip} with the parameters defined in \eqref{eq:param}.
To do this, it is convenient to define the following homogeneous function
\begin{equation}
 Q=Q(\omega_1,\omega_2) := F_1^{(p-1)/2}{ \cal F}_1 (z),
\end{equation}
where
\begin{equation}
 F_1 :=(d^2+1)\omega_1^2+\omega_2^2, \quad  \qquad z:=\frac{\omega_2^2}{F_1},
\end{equation}
and
${\cal F}_1(z)$ is the special polynomial solution \eqref{pol_sol} of \eqref{eq:genhip}, that is,
\begin{gather*}
{\cal F}_1(z) = 1+ \sum_{j=1}^{(p-1)/2} \frac{(2j-1)!!}{(2j)!!}\,
\frac {d^{2j} (p^2-1)\cdots (p^2-(2j-1)^2) }{(d^2+p^2)\cdots ((2j-1)d^2+p^2)}  \, z^j
\end{gather*}
The above definitions imply that if $p$ is odd natural number,
then $Q_1\in\R[\omega_1,\omega_2]$ is a homogeneous polynomial of degree $(p-1)$.
Moreover, $Q_1(\omega_1,\omega_2)$ is  an even function of $\omega_1$, as well as even function of $\omega_2$.

\begin{theorem}
If $p$ is odd natural number, then there exists  there homogeneous polynomials
$P_1,P_2,P_3\in\R[\omega_1,\omega_2]$ of the same degree $p$ such that
\eqref{eq:liniowiec} is a polynomial first integral of the Poisson equations.  Moreover
\begin{equation}
\label{eq:p1}
 P_1  = \omega_1  Q ,
\end{equation}
and
\begin{equation}
\begin{split}
P_3:=& \dfrac{d}{p}\left(\dfrac{1}{d^2+1}\dfrac{\partial P_1}{\partial
\omega_1}\omega_2-\dfrac{\partial P_1}{\partial
\omega_2}\omega_1\right),\\
P_2:=&- \dfrac{d}{p}\frac{\omega_2}{\omega_1}\left(\dfrac{1}{d^2+1}\dfrac{\partial P_3}{\partial
\omega_1}\omega_2-\dfrac{\partial P_3}{\partial
\omega_2}\omega_1\right) +\frac{\omega_2}{\omega_1} P_1 .
\end{split}
\label{eq:p23}
\end{equation}
 \end{theorem}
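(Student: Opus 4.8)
The plan is to exhibit $P=(P_1,P_2,P_3)$ as in \eqref{eq:p1}--\eqref{eq:p23} and then invoke the criterion recorded at the beginning of Section~2: the function \eqref{eq:liniowiec} is a first integral of \eqref{eq:whole} precisely when the vector $P(\omega_1(t),\omega_2(t))$ itself solves the Poisson equations $\dot P=P\times\omega$ along the trajectories. Since on a trajectory $\omega_1,\omega_2$ evolve by \eqref{eq:sysfin}, for any $g=g(\omega_1,\omega_2)$ the derivation along the flow acts as
\[
\dot g=\frac{d}{p}\Big(\frac{\omega_2^2}{d^2+1}\,\frac{\partial g}{\partial\omega_1}-\omega_1\omega_2\,\frac{\partial g}{\partial\omega_2}\Big),
\]
so the whole problem becomes purely algebraic in $\R[\omega_1,\omega_2]$.

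First I would set $P_1:=\omega_1 Q$. As already noted before the statement, for odd $p$ the function $Q=F_1^{(p-1)/2}{\cal F}_1(\omega_2^2/F_1)$ is a homogeneous polynomial of degree $p-1$ (the series \eqref{pol_sol} truncates and the factor $F_1^{(p-1)/2}$ clears its denominator), even in $\omega_1$ and in $\omega_2$; hence $P_1$ is homogeneous of degree $p$, odd in $\omega_1$ and even in $\omega_2$. The first Poisson equation $\dot P_1=-\omega_2 P_3$ then forces $P_3=-\dot P_1/\omega_2$, and the formula for $\dot g$ shows $\dot P_1$ carries an explicit factor $\omega_2$, so the quotient is the polynomial \eqref{eq:p23}, homogeneous of degree $p$, even in $\omega_1$ and odd in $\omega_2$. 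Likewise the third Poisson equation $\dot P_3=\omega_2 P_1-\omega_1 P_2$ forces $P_2=(\omega_2 P_1-\dot P_3)/\omega_1$; here $\omega_2 P_1=\omega_1\omega_2 Q$ is divisible by $\omega_1$, while $\dot P_3$ is divisible by $\omega_1$ because $P_3$ is even in $\omega_1$ (so $\partial P_3/\partial\omega_1$ is odd in $\omega_1$). Thus $P_2$ is again a homogeneous polynomial of degree $p$, yielding \eqref{eq:p23}. By construction the first and third components of $\dot P=P\times\omega$ hold as polynomial identities.

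The hard part is the one remaining, second component $\dot P_2=\omega_1 P_3$: this is the single compatibility condition that fails for a generic seed $P_1$. My plan is to recognise it as equivalent to $P_1$ satisfying the third order equation \eqref{eq:genhipprep}, because $P_3$ and $P_2$ were reconstructed by exactly the elimination of $\gamma_2,\gamma_3$ that turns the Poisson system into \eqref{eq:genhipprep}. Now, comparing $P_1(\omega_1(t),\omega_2(t))=\omega_1 Q$ on a trajectory \eqref{omegas} with the series \eqref{pol_sol}, one sees that $P_1$ equals a nonzero constant multiple of the solution $\gamma_1^{(1)}(t)$ of \eqref{eq:genhipprep} displayed in Theorem~\ref{merom_gam_1}, which is built from the very same $Q$; this in turn is a consequence of ${\cal F}_1$ solving the hypergeometric equation \eqref{eq:genhip}. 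Hence $P_1$ solves \eqref{eq:genhipprep}, so the polynomial $\dot P_2-\omega_1 P_3$ (homogeneous of degree $p+1$) vanishes on each trajectory. Since the subsystem \eqref{eq:sysfin} is invariant under the scaling $\omega\mapsto\lambda\omega,\ t\mapsto t/\lambda$ and the $P_i$ are homogeneous, this vanishing propagates to the open cone swept out by the rescaled trajectories, forcing $\dot P_2-\omega_1 P_3$ to vanish identically.

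This establishes that $P=(P_1,P_2,P_3)$ is a polynomial solution of $\dot P=P\times\omega$, so \eqref{eq:liniowiec} is the desired polynomial first integral; its non-triviality (functional independence from $F_1,F_2$) follows from $P$ being a nonzero multiple of the genuinely $\gamma$-dependent solution $\gamma^{(1)}$ of Theorem~\ref{merom_gam_1}. I expect the only real obstacle to be the compatibility step, which cannot be bypassed by formal manipulation and genuinely uses the hypergeometric ODE for ${\cal F}_1$; the polynomiality and degree bookkeeping, by contrast, reduce to the parity of $P_1$ in $\omega_1,\omega_2$ and are routine. A fully self-contained alternative would substitute $P_1=\omega_1 Q$ directly into \eqref{eq:genhipprep} and cancel using \eqref{eq:genhip}, but this is the same computation as the reduction already carried out in Section~5.
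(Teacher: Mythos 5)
Your proof is correct, and its skeleton (take $P_1=\omega_1 Q$, reconstruct $P_3$ and $P_2$ from the first and third Poisson components, settle polynomiality and degree by the parity of $Q$ in $\omega_1,\omega_2$) coincides with the paper's; but your handling of the one nontrivial compatibility condition is genuinely different. The paper stays entirely in the $(\omega_1,\omega_2)$-plane: it writes the first-integral condition as the PDE system \eqref{eq:partial}, eliminates $P_2,P_3$, dehomogenizes via $\omega=\omega_2/\omega_1$, $P_i=\omega_1^p p_i(\omega)$ to obtain the third-order ODE \eqref{eq:syfint}, and then, by the explicit changes of variables $z=\omega^2/(\omega^2+d^2+1)$ and $p_1=(z-1)^{(1-p)/2}v(z)$, identifies that ODE with the hypergeometric equation \eqref{eq:genhip}, which ${\cal F}_1$ solves. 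You instead evaluate along a time trajectory \eqref{omegas}: since $F_1$ is constant there, $P_1(\omega(t))$ is a nonzero constant multiple of $\tanh(t)\,{\cal F}_1(z(t))=\gamma_1^{(1)}(t)$, a solution of \eqref{eq:genhipprep} by Theorem \ref{merom_gam_1}, so the remaining Poisson component holds along trajectories; you then upgrade this to a polynomial identity by homogeneity, noting that the cone over the elliptic arc swept by \eqref{omegas} is the open half-plane $\{\omega_2<0\}$, on which the homogeneous degree-$(p+1)$ polynomial $\dot P_2-\omega_1 P_3$ vanishes, hence vanishes identically. Both arguments rest on the same fact --- ${\cal F}_1$ satisfies \eqref{eq:genhip} --- but yours recycles the Section 5 reduction and Theorem \ref{merom_gam_1} (legitimately: that theorem does not depend on this one) at the price of the trajectory-to-identity density step, whereas the paper's is a self-contained formal computation exhibiting the compatibility ODE as conjugate to \eqref{eq:genhip} for arbitrary $p$, with no reference to particular trajectories.

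One discrepancy is worth flagging, and it is in your favour. Your reconstruction gives
$P_3=-\frac{d}{p}\bigl(\frac{1}{d^2+1}\frac{\partial P_1}{\partial\omega_1}\omega_2-\frac{\partial P_1}{\partial\omega_2}\omega_1\bigr)$,
the opposite sign to the displayed \eqref{eq:p23}. This is not an error on your part: the displayed formula is inconsistent with the paper's own system \eqref{eq:partial}, whose first equation reads $\frac{d}{p}(\cdots)+P_3=0$, and with the verified $p=1$ integral $F_3=d\omega_1\gamma_1+\frac{d}{d^2+1}\omega_2\gamma_2-\frac{d^2}{d^2+1}\omega_2\gamma_3$ of Section 6.3 (for $P_1=\omega_1$, equation \eqref{eq:p23} as printed would give $P_3=+\frac{d}{d^2+1}\omega_2$, which does not yield a first integral). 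Your signs are the correct ones; \eqref{eq:p23} contains a typographical sign error.
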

\begin{proof}
As follows from the definition of $P_1$, the expressions ~\eqref{eq:p23} for
$P_3$ and $P_2$ are polynomials. This is clear for $P_3$. Let us show that also $P_2\in\R[\omega_1,\omega_2]$.
To do this we observe that
\begin{equation}
\label{eq:dp3}
 \frac{\partial P_3}{\partial\omega_1}= -2pd\omega_2\frac{\partial Q}{\partial\omega_1}
- pd\omega_1\omega_2\frac{\partial^2 Q}{\partial\omega_1^2} + pd(1+d^2)\omega_1 \left[ 2\frac{\partial Q}{\partial\omega_2} +\omega_1 \frac{\partial^2 Q}{\partial\omega_1 \partial\omega_2}\right]
\end{equation}
Since $Q$ is an even function of $\omega_1$, we have
\begin{equation*}
 \frac{\partial Q}{\partial\omega_1} = \omega_1\widetilde Q, \quad\text{where} \quad
\widetilde Q \in \R[\omega_1,\omega_2],
\end{equation*}
and thus, from~\eqref{eq:dp3}, we have also
\begin{equation*}
 \frac{\partial P_3}{\partial\omega_1} =
\omega_1\widetilde P_3, \quad\text{where} \quad \widetilde P_3 \in \R[\omega_1,\omega_2],
\end{equation*}
Now from \eqref{eq:p23} it easily follows that $P_2\in\R[\omega_1,\omega_2]$.

It remains to show that with $P_1$, $P_2$ and $P_3$ given by~\eqref{eq:p1} and~\eqref{eq:p23},
the function~\eqref{eq:liniowiec} is an integral of the system.
Indeed, if it is a first integral, then its time derivative vanishes, i.e.,
\[
 0=\dot F_3=\dfrac{d}{p}\sum_{i=1}^3\left(\dfrac{1}{d^2+1}\dfrac{\partial
P_i}{\partial \omega_1}\omega_2-\dfrac{\partial P_i}{\partial
\omega_2}
\omega_1\right)\omega_2\gamma_i+P_3\omega_2\gamma_1-P_3\omega_1\gamma_2+
(P_2\omega_1-P_1\omega_2)\gamma_3 .
\]
The right hand side of the above equation is a linear form in $\gamma_i$,
hence we have the following system of three partial differential equations
\begin{equation}
\begin{split}
& \dfrac{d}{p}\left(\dfrac{1}{d^2+1}\dfrac{\partial P_1}{\partial
\omega_1}\omega_2-\dfrac{\partial P_1}{\partial
\omega_2}\omega_1\right)+P_3=0,\\
& \dfrac{d}{p}\left(\dfrac{1}{d^2+1}\dfrac{\partial P_2}{\partial
\omega_1}\omega_2-\dfrac{\partial P_2}{\partial
\omega_2}\omega_1\right)\omega_2-P_3\omega_1=0,\\
& \dfrac{d}{p}\left(\dfrac{1}{d^2+1}\dfrac{\partial P_3}{\partial
\omega_1}\omega_2-\dfrac{\partial P_3}{\partial
\omega_2}\omega_1\right)\omega_2+P_2\omega_1-P_1\omega_2=0.
\end{split}
\label{eq:partial}
\end{equation}
The form of this system shows that $P_2$ and $P_3$ can be expressed in terms $P_1$ and its partial derivatives.
The explicit form of these expression is given by~\eqref{eq:p23}.
Using them we eliminate $P_2$ and $P_3$ in \eqref{eq:partial} and obtain
one partial differential equation for $P_1$.
Next, taking into account the fact that $P_1$ is assumed to be a homogeneous polynomial,
we obtain an ordinary linear equation for $P_1$.
It remains to show that one of its solutions is given by formula~\eqref{eq:p1}.

Introduce a new variable $\omega=\omega_2/\omega_1$ and define
\[
 P_i (\omega_1,\omega_2) =\omega_1^pP_i(1, \omega_2/\omega_1)=:\omega_1^pp_i(\omega),\qquad i=1,2,3 .
\]
Then, we have
\[
 \dfrac{\partial P_i}{\partial \omega_1}=\omega_1^{p-1}[pp_i-\omega p_i'],\qquad
 \dfrac{\partial P_i}{\partial \omega_2}=\omega_1^{p-1}p_i',\qquad i=1,2,3,
\]
where prime denotes the derivative with respect to $\omega$.
This implies that our system of partial differential
equations~\eqref{eq:partial} on $P_i$ can be written as the system of ordinary
differential equations on $p_i$
\begin{equation}
 \begin{split}
 &-d(\omega^2+d^2+1)p_1'+dp\omega p_1+p(d^2+1)p_3=0,\\
&d(\omega^2+d^2+1)\omega p_2'-dp\omega^2 p_2+p(d^2+1)p_3=0,\\
&d(\omega^2+d^2+1)\omega p_3'-dp\omega^2 p_3+p(d^2+1)\omega p_1-p(d^2+1)p_2=0.
 \end{split}
\label{eq:ordin}
\end{equation}
Resolving this system with respect to $p_2, p_3$, we find
\begin{align}
 p_3 & =\dfrac{d(d^2+1+\omega^2)}{p(d^2+1)}p_1'-\dfrac{d}{(d^2+1)}\omega p_1, \nonumber \\
 p_2 & =\dfrac{d^2(1 + d^2 + \omega^2)^2\omega}{(1 + d^2)^2 p^2}p_1'' \nonumber \\
     & \qquad -\dfrac{2 d^2 (p-1) \omega^2 (1 + d^2 + \omega^2)}{(1 + d^2)^2 p^2}p_1'+\left(1-\dfrac{d^2 p
(1 + d^2 - (p-1) \omega^2)}{(1 + d^2)^2 p^2}\right)\omega p_1.
\label{p_23}
\end{align}
Substituting the above expression  into the second equation in \eqref{eq:ordin}, we obtain the following third  order linear equation
\begin{align}
& \qquad p_1'''+q_1p_1''+q_2p_1'+q_3p_1=0,  \label{eq:syfint} \\
q_1 & =\dfrac{1}{\omega} - \dfrac{3 ( p-2) \omega}{1 + d^2 + \omega^2}, \notag \\
q_2 & =\dfrac{d^2 \omega^2 (-(1 + d^2) (5 p-4 ) + (p-1) (3 p-8) \omega^2) + (1 +
      d^2)^2 (1 + \omega^2) p^2}{d^2 \omega^2 (1 + d^2 + \omega^2)^2}, \notag \\
q_3 & =-\dfrac{1}{d^2 \omega (1 + d^2 + \omega^2)^3}\Big[d^2 p ((1 + d^2)^2 -
       4 (1 + d^2) (p-1) \omega^2 + (3 + (p-4) p) \omega^4)\notag\\
& - (1 +
       d^2)^2 (d^2 - (p-1) (1 + \omega^2)) p^2\Big]. \notag
\end{align}
Now, making change of independent variable to
\[
 z=\dfrac{\omega^2}{\omega^2+d^2+1} \equiv \frac{4}{(e^t+e^{-t})^2},
\]
and using
\[
 \begin{split}
 \dfrac{\mathrm{d} z}{\mathrm{d}\omega}=\dfrac{2 \omega (z-1)^2}{d^2+1},\quad
\dfrac{\mathrm{d}^2 z}{\mathrm{d}\omega^2}=-\dfrac{2 (z-1)^2 (4 z-1)}{d^2+1},\quad
\dfrac{\mathrm{d}^3 z}{\mathrm{d}\omega^3}=-\dfrac{24 \omega ( z-1)^3 (2 z-1)}{(d^2+1)^2},
 \end{split}
\]
we transform the equation \eqref{eq:syfint} into
\begin{gather*}
p_1'''+b_1p_1''+b_2p_1'+b_3p_1=0,  \qquad
{}'\equiv\dfrac{\mathrm{d}\phantom{-}}{\mathrm{d}z} \\
b_1=\dfrac{2}{z}+\dfrac{2 + 3 p}{2 (z-1)},\\
b_2= \dfrac{
 d^2 (1 + ( p^2-6 - 8 p) z + (4 + 3 p (2 + p) -
       p^2) z^2)-p^2 (z-1) }{4 d^2 (z-1)^2 z^2},\\
b_3=\dfrac{p^2 (-1 + p + z - p z) +
 d^2 (p^2 ( z-1) - 4 p^2 z + p^3 z^2 +
    p (1 + (2 - p^2 ( z-1)) z))}{8 d^2 ( z-1)^3 z^2}.
\end{gather*}
Now under the change of the dependent variable
\begin{equation}
 p_1(z)=(z-1)^{\frac{1-p}{2}} v(z),
\label{eq:chan}
\end{equation}
the equation for $v(z)$ becomes exactly the generalized hypergeometric equation
\eqref{eq:genhip} defining the function $\phantom{\vert}_3F_2$  with the parameters
\eqref{eq:param}.
On the other hand, it easy to notice that, up to a multiplicative constant,
the dehomogenized $P_1$ given by~\eqref{eq:p1} and expressed in terms of the variable
$z$ coincides with~\eqref{eq:chan}.
\end{proof}

\subsection{Conclusion} As follows from the results of Sections 3 and 5,
when $I_{13}=0$ and the parameter $p$ us an even integer, an interesting situation takes place:
all the solutions of the Poisson equations are meromorphic, but the equations itself are not solvable in the
class of Liouvillian functions. In particular, they do not possess extra meromorphic first integrals. In this case
it is natural to expect that the Poisson equations are reducible to one of the Painlev\'e equations.

It also should be emphasized that we managed to reduce the Poisson equations to the
hypergeometric equation under the restriction $I_{13}=0$. The study of solutions of the third order equation
\eqref{eq:genhipprep} in the general case, especially of its monodromy group, is an interesting open problem.

\section*{Acknowledgements} The first author (Yu.F.) acknowledges the
support of grant MTM 2006-14603 of Spanish Ministry of Science and Technology.

The research of the second and third authors (A. M and M.P) was supported by grant No. N N202 2126 33
of Ministry of Science and Higher Education of Poland.

The research of M.P. was also partially supported by
Projet de l'Agence National de la Recherche ``Int\'egrabilit\'e r\'eelle et
complexe en m\'ecanique hamiltonienne'' N$^\circ$~JC05$_-$41465 and by the grant UMK 414-A.

\end{document}